\def\submitteddate{November 18, 2013}
\renewcommand{\baselinestretch}{1.0}
\begin{document}

\newcommand{\creationtime}{\today \ \ @ \theampmtime}

\pagestyle{fancy}
\renewcommand{\headrulewidth}{0cm}
\chead{\footnotesize{Dougherty-Freiling-Zeger}}
\rhead{\footnotesize{\submitteddate}}
\lhead{}
\cfoot{Page \arabic{page} of \pageref{LastPage}} 

\renewcommand{\qedsymbol}{$\blacksquare$} 


\newtheorem{theorem}              {Theorem}     [section]
\newtheorem{lemma}      [theorem] {Lemma}
\newtheorem{corollary}  [theorem] {Corollary}
\newtheorem{proposition}[theorem] {Proposition}
\newtheorem{remark}     [theorem] {Remark}
\newtheorem{algorithm}  [theorem] {Algorithm}
\newtheorem{conjecture} [theorem] {Conjecture}
\newtheorem{example}    [theorem] {Example}

\theoremstyle{definition}         
\newtheorem{definition} [theorem] {Definition}
\newtheorem*{claim}  {Claim}
\newtheorem*{notation}  {Notation}



\newcommand{\alphabet}{\mathcal{A}}
\newcommand{\netI}{\mathcal{N}}
\newcommand{\netII}{\mathcal{N}'}

\newcommand{\Span}[1]{\langle #1 \rangle}
\newcommand{\Z}{\mathbf{Z}}
\newcommand{\Q}{\mathbf{Q}}
\newcommand{\N}{\mathbf{N}}
\newcommand{\R}{\mathbf{R}}
\newcommand{\A}{\mathcal{A}}
\newcommand{\I}{\mathcal{I}}
\newcommand{\V}[2]{\mathrm{V}(#1,#2)}

\newcommand{\yields}{\longrightarrow}
\newcommand{\Matroid}{\mathcal{M}}
\newcommand{\GroundSet}{\mathcal{S}}
\newcommand{\ClosureOperator}{\mathrm{cl}}
\newcommand{\Closure}[1]{\ClosureOperator(#1)}
\newcommand{\IndepSets}{\mathcal{I}}
\newcommand{\MessageSet}{\mu}
\newcommand{\NodeSet}{\nu}
\newcommand{\EdgeSet}{\epsilon}
\newcommand{\SourceMapping}{S}
\newcommand{\ReceiverMapping}{R}
\newcommand{\AssignmentMapping}{a}
\newcommand{\Carry}{c}
\newcommand{\NodeInSymbols} [1]{\mathrm{In}(#1)}
\newcommand{\NodeOutSymbols}[1]{\mathrm{Out}(#1)}
\newcommand{\Vamos}{V\'{a}mos}
\newcommand{\PowerSet}[1]{2^{#1}}

\newcommand{\range}{\mathsf{range}}
\newcommand{\theranksymbol}{\mathsf{rank}}
\newcommand{\Rank}[1]{\mathsf{rank}\left(#1\right)}
\newcommand{\Nullity}[1]{\mathsf{nullity}(#1)}
\newcommand{\Dim}[1]{\mathsf{dim}(#1)}
\newcommand{\Codim}[2]{\mathsf{codim}_{#1}\left(#2\right)}
\newcommand{\Comment}[1]{& [\mbox{from  #1}]}

\newcommand{\Implies}[2]{$H(#2 | #1)=0$}
\renewcommand{\emptyset}{\varnothing} 
\renewcommand{\subset}{\subseteq}     
\newcommand{\FanoNetwork}{\mathcal{N}_F}
\newcommand{\nonFanoNetwork}{\mathcal{N}_{nF}}
\newcommand{\Network}{\mathcal{N}}
\newcommand{\TBA}{*** To Be Added ***}

\newcommand{\ARR}{S}
\newcommand{\Cave}{\mathcal{C}^{\mathrm{average}}}
\newcommand{\Cuni}{\mathcal{C}^{\mathrm{uniform}}}
\newcommand{\CaveLinear} {\Cave_{\mathrm{linear}}}
\newcommand{\CuniLinear} {\Cuni_{\mathrm{linear}}}
\newcommand{\CuniRouting}{\Cuni_{\mathrm{routing}}}
\newcommand{\CaveRouting}{\Cave_{\mathrm{routing}}}
\newcommand{\polytope}{\mathcal{P}}

\newcommand{\LHS}{\mathrm{LHS}}
\newcommand{\RHS}{\mathrm{RHS}}
\newcommand{\Restriction}[2]{#1 \arrowvert #2}


\setcounter{page}{0}

\title{Achievable Rate Regions for Network Coding
\thanks{This work was supported by the 
Institute for Defense Analyses and
the National Science Foundation.\newline
\indent \textbf{R. Dougherty} is with the 
Center for Communications Research, 
4320 Westerra Court, 
San Diego, CA 92121-1969 
(rdough@ccrwest.org).\newline
\indent \textbf{C. Freiling} is with  the 
Department of Mathematics, 
California State University, San Bernardino, 
5500 University Parkway, 
San Bernardino, CA 92407-2397 
(cfreilin@csusb.edu).\newline
\indent \textbf{K. Zeger} is with the 
Department of Electrical and Computer Engineering, 
University of California, San Diego, 
La Jolla, CA 92093-0407 
(zeger@ucsd.edu).
}}

\author{Randall Dougherty, Chris Freiling, and Kenneth Zeger\\ \ }

\date{
\textit{
Submitted: \submitteddate\\
}}

\maketitle
\begin{abstract}
Determining the achievable rate region for networks
using routing, linear coding, or non-linear coding
is thought to be a difficult task in general,
and few are known.
We describe the achievable rate regions for four interesting networks
(completely for three and partially for the fourth).
In addition to the known matrix-computation method for proving
outer bounds for linear coding, we present a new method which
yields actual characteristic-dependent linear rank inequalities
from which the desired bounds follow immediately.
\end{abstract}

\addtocontents{toc}{[This Table of Contents is for draft preparation only. It will be removed prior to submission.]\\\endgraf}
\addtocontents{toc}{\hfill Page\endgraf}
\addcontentsline{toc}{section}{Title, Authors, Abstract}

\thispagestyle{empty}


\newpage
\section{Introduction}

In this paper,
a \textit{network} is a directed acyclic multigraph $G=(V,E)$,
some of whose nodes are information sources or receivers 
(e.g.\ see \cite{Yeung-book}).
Associated with the sources are $m$ generated \textit{messages},
where the $i^{th}$ source message is assumed to be a vector of $k_i$ arbitrary elements of a fixed finite alphabet, 
$\alphabet$, of size at least $2$.
At any node in the network,
each out-edge carries a vector of $n$ alphabet symbols which is a function
(called an \textit{edge function})
of the vectors of symbols carried on the in-edges to the node,
and of the node's message vectors if it is a source.
Each network edge is allowed to be used at most once 
(i.e.\ at most $n$ symbols can travel across each edge).
It is assumed that every network edge is reachable by some source message.
Associated with each receiver are one or more \textit{demands}; each demand is a network message.
Each receiver has \textit{decoding functions} which map the receiver's inputs to vectors of symbols
in an attempt to produce the messages demanded at the receiver.
The goal is for each receiver to deduce its demanded messages from its in-edges and source messages
by having information propagate from the sources through the network.

A $(k_1, \ldots, k_m, n)$ \textit{fractional code} is a collection of edge functions, 
one for each edge in the network,
and decoding functions, 
one for each demand of each node in the network.
A $(k_1, \ldots, k_m, n)$ \textit{fractional solution} is 
a $(k_1, \ldots, k_m, n)$ fractional code which results in every receiver 
being able to compute its demands via its decoding functions,
for all possible assignments of length-$k_i$ vectors over the alphabet to the $i^{th}$ source message, for all $i$.

Special codes of interest include \textit{linear codes},
where the edge functions and decoding functions are linear,
and \textit{routing codes},
where the edge functions and decoding functions simply copy 
specified input components to output components.\footnote{
If an edge function for an out-edge of a node 
depends only on the symbols of a single in-edge of that node,
then, without loss of generality,
we assume that the out-edge simply carries the same vector of symbols (i.e.\ routes the vector) as the in-edge it depends on.
}
Special networks of interest include \textit{multicast} networks,
where there is only one source node and every receiver demands all of the source messages,
and \textit{multiple-unicast} networks,
where each network message is generated by exactly one source node and is demanded by exactly
one receiver node.

For each $i$, the ratio $k_i/n$ can be thought of as the rate at which source $i$ injects
data into the network.
If a network has a $(k_1, \ldots, k_m, n)$ fractional solution over some alphabet,
then we say that $\left(k_1/n, \ldots, k_m/n\right)$ is an \textit{achievable rate vector},
and we define the \textit{achievable rate region} of the network as the following convex hull%
\footnote{
There is some variation in the definition and terminology in the literature.
Some authors use the term 
``capacity region'' or
``rate region''.
Alternative definitions of the region have been defined as the 
topological closure of $\ARR$ 
or without the convex hull.
}
$$\ARR = \mathrm{CHULL} (\{ r\in \Q^m: r \ \mbox{ is an achievable rate vector} \}) .$$
Every vector in the achievable rate region
can be effectively achieved
by time-sharing between two achievable points
(since it is a convex combination of those achievable points).

Determining the achievable rate region of an arbitrary network appears to be a formidable task.
Alternatively, certain scalar quantities that reveal information about the achievable rates are typically studied.
For any $(k_1, \ldots, k_m, n)$ fractional solution, we call the scalar quantity
$$\frac{1}{m}\left(\frac{k_1}{n} + \dots + \frac{k_m}{n}\right)$$ an \textit{achievable average rate} of the network.
We define the \textit{average coding capacity} of a network 
to be the supremum of all achievable average rates, namely
$$\Cave = \sup \left\{ \frac{1}{m}\sum_{i=1}^m r_i : (r_1, \dots, r_m) \in \ARR \right\}.$$
Similarly, for any $(k_1, \ldots, k_m, n)$ fractional solution, we call the scalar quantity
$$\min\left(\frac{k_1}{n}, \dots,  \frac{k_m}{n}\right)$$ an \textit{achievable uniform rate} of the network.
We define the \textit{uniform coding capacity} of a network 
to be the supremum of all achievable uniform rates, namely
$$\Cuni = \sup \left\{ \min(r_1, \dots, r_m) : (r_1, \dots, r_m) \in \ARR \right\}.$$
Note that for any $r\in\ARR$ and $r'\in\R^m$,
if each component of $r'$ 
is nonnegative, rational,
and less than or equal to the corresponding component of $r$,
then $r'\in\ARR$.
In particular, if $(r_1, \dots, r_m)\in\ARR$ and $r_i = \displaystyle\min_{1\le j\le m} r_j$,
then $(r_i, r_i, \ldots, r_i)\in \ARR$,
which implies
$$\Cuni = \sup \left\{ r_i : (r_1, \dots, r_m) \in\ARR,\ \ r_1 = \cdots = r_m 
\right\}.$$
In other words, all messages can be restricted to having the same 
dimension $k_1 = \dots = k_m$ when considering $\Cuni$.
Also, note that $$\Cuni \le \Cave.$$
The quantities $\Cave$ and $\Cuni$ are attained by points on the boundary of $\ARR$.
It is known that not every network has a uniform coding
capacity which is an achievable uniform rate
\cite{Dougherty-Freiling-Zeger05-Unachievability}.

If a network's edge functions are restricted to purely routing functions,
then we write the capacities as $\CaveRouting$ and $\CuniRouting$,
and refer to them as the
\textit{average routing capacity} and \textit{uniform routing capacity}, respectively.
Likewise, for solutions using only linear edge functions,
we write
$\CaveLinear$ and $\CuniLinear$ 
and refer to them as the
\textit{average linear capacity} and \textit{uniform linear capacity}, respectively.

Given random variables $x_1, \dots x_i$ and $y_1, \dots, y_j$,
we write $x_1, \dots x_i \yields y_1, \dots, y_j$ to mean that
$y_1, \dots, y_j$ are deterministic functions of $x_1, \dots x_i$.
We say that $x_1, \dots x_i$ \textit{yield} $y_1, \dots, y_j$.

In this paper,
we study four specific networks,
namely
the Generalized Butterfly network,
the Fano network,
the non-Fano network,
and
the \Vamos{} network.
The last three of these networks were shown to be matroidal in  
\cite{Dougherty-Freiling-Zeger07-NetworksMatroidsNonShannon} 
and various capacities of these networks have been computed.
However,
the full achievable rate regions of these networks
have not been previously determined,
to the best of our knowledge.
Some other work on achievable rates and capacities has been done in
~\cite{
Chan-Grant-CapacityRegions,
Kim-CapacityRegions,
Yan-Yeung-Zhang-ISIT07
}.

The Generalized Butterfly network 
(studied in Section~\ref{sec:Butterfly} and illustrated in Figure~\ref{fig:generalized-butterfly})
has the same topology as the usual Butterfly network \cite{Ahlswede-Cai-Li-Yeung-IT-Jul00},
but instead of one source at each of nodes $n_1$ and $n_2$,
there are two sources at each of these nodes.
For each of the source nodes,
one of it's source messages is demanded by receiver $n_5$
and the other by receiver $n_6$.
The usual Butterfly network is the special case when messages $a$ and $d$
do not exist 
(or are just not demanded by any receiver).
A large majority of network coding publications mention in some context
the Butterfly network, so it plays an important role in the field.

The Fano network
(studied in Section~\ref{sec:Fano} and illustrated in Figure~\ref{fig:Fano-network}) 
and the non-Fano network
(studied in Section~\ref{sec:non-Fano} and illustrated in Figure~\ref{fig:non-Fano-network}) 
were used in \cite{Dougherty-Freiling-Zeger05-Unachievability} 
as components of a larger network to demonstrate the unachievability of
network coding capacity.
Specifically, 
in \cite{Dougherty-Freiling-Zeger05-Unachievability} 
the Fano network was shown to be solvable if and only if the alphabet size is a power of $2$
and the non-Fano network was shown to be solvable if and only if the alphabet size is odd.
In \cite{Dougherty-Freiling06-Nonreversibility},
the Fano and non-Fano networks were used to build a solvable multicast network 
whose reverse 
(i.e. all edge directions change, and sources and receivers exchange roles)
was not solvable,
in contrast to the case of linear solvability,
where reversals of linearly solvable multicast networks were previously known to be linearly solvable
\cite{
Koetter-etal-reversible,
Koetter-DIMACS03,
Riis-reversibility-preprint
}.
In \cite{Dougherty-Freiling-Zeger04-Insufficiency},
the Fano and non-Fano networks were used to construct a network
which disproved a previously published conjecture asserting
that all solvable networks are vector linearly solvable
over some finite field and some vector dimension.

The \Vamos{} network 
(studied in Section~\ref{sec:Vamos} and illustrated in Figure~\ref{fig:Vamos-network}) 
was used in
\cite{Dougherty-Freiling-Zeger07-NetworksMatroidsNonShannon}
to demonstrate that non-Shannon-type information inequalities
could yield upper bounds on network coding capacity which
are tighter than the tightest possible bound theoretically
achievable using only Shannon-type information inequalities.
Here we completely determine the routing and linear rate regions
for the \Vamos{} network, but only give partial results
for the non-linear rate region (which indicate that it could
be quite complicated).

Finally, we present a new method for proving bounds on achievable
rate regions for linear coding, which actually produces
explicit linear rank inequalities which directly imply
the desired bounds.

\newpage
\section{Generalized Butterfly network}
\label{sec:Butterfly}

\begin{figure}[h]
\begin{center}
\leavevmode
\hbox{\epsfxsize=0.3\textwidth\epsffile{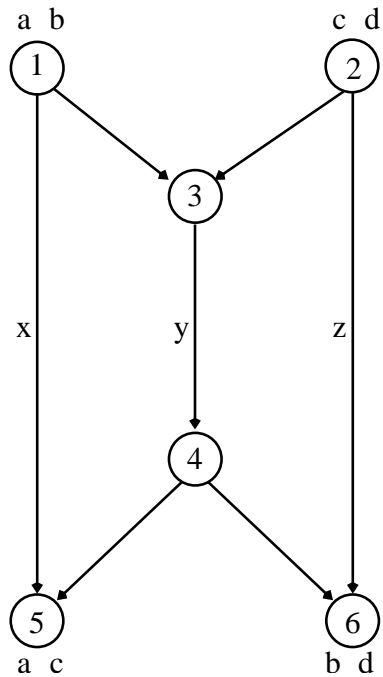}}
\end{center}
\caption{The Generalized Butterfly network. 
Source node $n_1$ generates messages $a$ and $b$,
and
source node $n_2$ generates messages $c$ and $d$.
Receiver node $n_5$ demands messages $a$ and $c$,
and
receiver node $n_6$ demands messages $b$ and $d$.
The symbol vectors carried on edges $e_{1,5}$, $e_{2,4}$, and $e_{3,6}$ are denoted $x$, $y$, and $z$, respectively.
}
\label{fig:generalized-butterfly}
\end{figure}
\begin{theorem}
The achievable rate regions 
for either linear or non-linear coding are the same
for the Generalized Butterfly network
and are equal to the closed polytope in $\R^4$ whose faces lie on the $9$ planes:
\begin{align*}
r_a &= 0\\
r_b &= 0\\
r_c &= 0\\
r_d &= 0\\
r_b &= 1\\
r_c &= 1\\
r_a + r_b + r_c &= 2\\
r_b + r_c + r_d &= 2\\
r_a + r_b + r_c + r_d &= 3
\end{align*}
and whose vertices are the $14$ points:
\begin{align*}
&(0,0,0,0) &&(0,0,0,2) &(2,0,0,0) &&(0,1,0,0)\ \\
&(0,0,1,0) &&(2,0,0,1) &(1,0,0,2) &&(0,0,1,1)\ \\
&(1,1,0,0) &&(1,0,1,1) &(1,1,0,1) &&(0,1,1,0)\ \\
&(0,1,0,1) &&(1,0,1,0).          
\end{align*}
Furthermore,
the coding capacity and linear coding capacity are given by:
\begin{align*}
\Cuni &= \CuniLinear = 2/3\\
\Cave &= \CaveLinear \, = 3/4.
\end{align*}
\label{thm:1}
\end{theorem}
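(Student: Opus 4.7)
The plan is to prove the polytope characterization of the achievable rate region, which automatically gives both the linear and the non-linear rate regions (since the outer bound will be proved for arbitrary codes and the inner bound by routing), and then to read off $\Cuni$ and $\Cave$ by optimizing linear functionals over the polytope.

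For the inner bound I would verify each of the $14$ listed vertices by an explicit routing code of small block length. The rate-$2$ singleton vertices $(2,0,0,0)$ and $(0,0,0,2)$ split a single message over two edge-disjoint source-to-demand paths of capacity $n$ each; the pairwise vertices such as $(1,1,0,0),(1,0,1,0),(0,1,0,1)$ route two messages along disjoint paths; and the four rate-$3$ vertices $(2,0,0,1),(1,0,0,2),(1,1,0,1),(1,0,1,1)$ combine the two techniques. Convex combination (time-sharing) and componentwise reduction then give every rational point in the polytope as an achievable rate, already using routing codes.

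For the outer bound I would show that every fractional solution (not necessarily linear) satisfies the $9$ inequalities via information-theoretic cut-set arguments. Non-negativity is trivial. The bounds $r_b\le 1$ and $r_c\le 1$ are one-edge cut-set bounds: the edge $z=e_{3,6}$ of capacity $n$ is the sole conduit of $b$-information to $n_6$, so $H(b)\le H(z)\le n\log|\alphabet|$, and symmetrically for $c$. For $r_a+r_b+r_c+r_d\le 3$ I would identify a three-edge cut of total capacity $3n$ between the source region and the receiver region and apply $H(a,b,c,d)\le H(\text{cut edges})$. The two symmetric constraints $r_a+r_b+r_c\le 2$ and $r_b+r_c+r_d\le 2$ are the hardest: a naive cut-set bound at $n_5$ yields only $r_a+r_c\le 2$, and combining with $r_b\le 1$ only gives $r_a+r_b+r_c\le 3$. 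The tighter bound requires an entropy argument that combines the decoding constraints of \emph{both} receivers, writing $H(a,b,c)=H(a,c)+H(b)$ by independence of the messages and then using the fact that the intermediate edge carrying $b$-information to $n_6$ shares its parent node with the edges that must supply $(a,c)$-information to $n_5$.

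Having established both bounds, the routing, linear, and non-linear rate regions all equal the polytope. The capacity formulas then follow by linear programming: setting $r_a=r_b=r_c=r_d=r$ and invoking the binding sum-of-three constraint gives $3r\le 2$, hence $\Cuni=\CuniLinear=2/3$, attained at $(2/3,2/3,2/3,2/3)$; the sum-of-four constraint bounds the average by $3/4$, attained at any vertex with $r_a+r_b+r_c+r_d=3$, so $\Cave=\CaveLinear=3/4$. I expect the main obstacle to be the converse for the $r_a+r_b+r_c\le 2$ family: no single edge min-cut matches this bound, so one must do honest entropy work, coupling $n_5$'s decoder constraint with the bottleneck imposed on $b$'s delivery path and its interaction with $n_5$'s other in-edge.
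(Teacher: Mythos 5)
There is a genuine gap in your achievability argument, and it is a consequential one. You assert that all $14$ vertices can be reached by routing codes, so that "convex combination and componentwise reduction then give every rational point in the polytope as an achievable rate, already using routing codes." This is false, and the offending vertex is exactly the one you omit from your enumeration: $(0,1,1,0)$. To achieve it you would need both $b$ and $c$ (each of full rate $1$) to be delivered, but $n_5$'s only path to the $c$-side and $n_6$'s only path to the $b$-side both pass through the single bottleneck edge carrying $y$, which has capacity $n$; a routing solution would need $y$ to hold $k_b+k_c\le n$ symbols, forcing $r_b+r_c\le 1$. The point $(0,1,1,0)$ sits on $r_b+r_c=2$ and so is unreachable by routing. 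The paper achieves it by the genuinely linear code $x=b$, $y=b+c$, $z=c$; this is the classical butterfly coding trick, and it is exactly what separates the coding rate region from the routing rate region. Indeed Theorem~\ref{thm:2} of the paper shows the routing region is the $13$-vertex polytope cut off by the extra facet $r_b+r_c=1$, with $\CuniRouting=1/2$, not $2/3$. Under your plan you would therefore be unable to establish $\Cuni=\CuniLinear=2/3$, since the achieving convex combination $(2/3)(1,1,1,1)=\tfrac13(1,0,1,1)+\tfrac13(1,1,0,1)+\tfrac13(0,1,1,0)$ needs the non-routing vertex. The fix is simply to allow linear codes for achievability (and then observe that since all achieving codes are linear, the linear and non-linear regions coincide), which is what the paper does.

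On the converse side your outline is sound and matches the paper's method in spirit, though you are somewhat over-pessimistic about $r_a+r_b+r_c\le 2$: it falls out of a clean chain, namely $d,x,y\yields a,b,c,d$ (from the two decoders), hence $H(a,b,c)=H(x,y\,|\,d)\le H(x,y)\le 2n$, with no intricate entropy gymnastics needed. One small caution: the single-edge argument you sketch for $r_b\le 1$ should really be anchored on the bottleneck edge carrying $y$ (via $k_b=H(b)=H(y\,|\,a,c,d)\le n$), as in the paper, rather than on $e_{3,6}$ alone, since the content of $z$ is itself a function of $y$ together with $c,d$.
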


\begin{proof}
Consider a network solution over an alphabet $\alphabet$ and
denote the source message dimensions by $k_a$, $k_b$, $k_c$, and $k_d$,
and the edge dimensions by $n$.
Let each source be a random variable whose components are independent and uniformly distributed over $\alphabet$.
Then the solution must satisfy the following inequalities:
\begin{align}
                  k_a &\ge 0  \label{eq:1}\\
                  k_b &\ge 0  \label{eq:2}\\
                  k_c &\ge 0  \label{eq:3}\\
                  k_d &\ge 0  \label{eq:4}\\
                  k_b &= H(b) = H(y|a,c,d)    \le n \label{eq:5}\\ 
                  k_c &= H(c) = H(y|a,b,d)    \le n \label{eq:6}\\ 
      k_a + k_b + k_c &= H(a,b,c) = H(x,y|d)\notag\\
                      &\le H(x,y) \le 2n \label{eq:7}\\
      k_b + k_c + k_d &= H(b,c,d) = H(y,z|a) \notag\\
                      &\le H(y,z) \le 2n \label{eq:8}\\
k_a + k_b + k_c + k_d &= H(a,b,c,d) = H(x,y,z) \notag\\
                      &\le 3n. \label{eq:9}
\end{align}
\eqref{eq:1}--\eqref{eq:4} are trivial;
\eqref{eq:5} follows because $c,d,y \yields y,z \yields b,d$ (at node $n_6$), 
             and therefore $a,c,d,y \yields a,b,c,d$ 
             and thus $H(a,b,c,d) = H(a,c,d,y)$;
similarly for \eqref{eq:6};
\eqref{eq:7} follows because $x,y \yields a,c$   (at node $n_5$),
                             $c,d,y \yields b,d$ (at node $n_6$),
             and therefore $d,x,y \yields a,c,d,y \yields a,b,c,d$
             and thus $H(a,b,c,d) = H(d,x,y)$;
similarly for \eqref{eq:8};
\eqref{eq:9} follows because $x,y,z \yields a,b,c,d$ (at nodes $n_5$ and $n_6$).
Dividing each inequality in \eqref{eq:1}--\eqref{eq:9} by $n$ gives the $9$ bounding hyperplanes
stated in the theorem.

Let $r_a = k_a/n$, $r_b = k_b/n$, $r_c= k_c/n$, and $r_d = k_d/n$,
and let $\polytope$ denote the polytope in $\R^4$ consisting of all
$4$-tuples $(r_a, r_b, r_c, r_d)$ satisfying \eqref{eq:1}--\eqref{eq:9}.
Then \eqref{eq:1}--\eqref{eq:4} and \eqref{eq:9} ensure that $\polytope$ is bounded.
One can easily calculate that each point in $\R^4$ that 
satisfies some independent set of four of the inequalities
\eqref{eq:1}--\eqref{eq:9} with equality
and also satisfies the remaining five inequalities
must be one of the $14$ points stated in the theorem.
Now we show that all $14$ such points do indeed lie in the achievable rate
region,
and therefore their convex hull equals the achievable rate
region.
The following $5$ points are achieved
by taking $n = 1$
with the following codes
over any field
(where, if $k_a = 2$, the two components of $a$ are denoted $a_1$ and $a_2$):
\begin{align*}
(2,0,0,1)&\!:\ \ x=a_1,\ y=a_2,\ z=d \\
(1,0,0,2)&\!:\ \ x=a,\ y=d_1,\ z=d_2 \\
(1,0,1,1)&\!:\ \ x=a,\ y=c,\ z=d \\
(1,1,0,1)&\!:\ \ x=a,\ y=b,\ z=d \\
(0,1,1,0)&\!:\ \ x=b,\ y=b+c,\ z=c
\end{align*}
and the remaining $9$ points are achieved by fixing certain messages to be $0$.

Since the above codes are all linear, 
the achievable rate regions for linear and non-linear
codes are the same.

By \eqref{eq:9}, we have $\Cave \le 3/4$,
and this upper bound is achievable by routing using the code
given above for the point $(2,0,0,1)$,
namely taking $x=a_1$, $y=a_2$, and $z=d$.
By \eqref{eq:8}, we have $\Cuni \le 2/3$;
since
\begin{align*} 
(2/3)(1,1,1,1) &= (1/3)(1,0,1,1)\\
&\ \  + (1/3)(1,1,0,1)\\
&\ \  + (1/3)(0,1,1,0)
\end{align*}
the upper bound of $2/3$ is achievable by a convex combination of the linear codes 
given above for the points $(1,0,1,1)$, $(1,1,0,1)$, and $(0,1,1,0)$,
as follows.
Take $k=2$ and $n=3$ and use the (linear) code determined by:
\begin{align*}
x &= (a_1, a_2, b_2)\\
y &= (c_1, b_1, b_2 + c_2)\\
z &= (d_1, d_2, c_2).
\end{align*}
\end{proof}

\begin{theorem}
The achievable rate region 
for routing
for the Generalized Butterfly network
is the closed polytope in $\R^4$ bounded by the $9$ planes in
Theorem~\ref{thm:1}
together with the plane
\begin{align*}
r_b + r_c &= 1
\end{align*}
and whose vertices are the $13$ points:
\begin{align*}
&(0,0,0,0) &&(0,0,0,2) &(2,0,0,0)  &&(0,1,0,0)\\
&(0,1,0,1) &&(0,0,1,0) &(2,0,0,1)  &&(1,0,0,2)\\ 
&(0,0,1,1) &&(1,0,1,0) &(1,1,0,0)  &&(1,0,1,1)\\
&(1,1,0,1).          
\end{align*}
Furthermore,
the routing capacities are given by:
\begin{align*}
\CuniRouting &= 1/2\\
\CaveRouting &= 3/4.
\end{align*}
\label{thm:2}
\end{theorem}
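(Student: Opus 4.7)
The plan is to follow the template of the proof of Theorem~\ref{thm:1}: derive one extra bounding hyperplane that is special to routing, enumerate the vertices of the resulting polytope, and exhibit an explicit routing code at each vertex. The only new inequality beyond the nine in Theorem~\ref{thm:1} is $r_b + r_c \le 1$; everything else is either inherited or a finite combinatorial check.

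To derive $r_b + r_c \le 1$ for a routing solution I would reuse two entropy identities already established in the proof of Theorem~\ref{thm:1}, namely
\[
H(y\mid a,c,d) = H(b) = k_b\log|\alphabet|
\quad\text{and}\quad
H(y\mid a,b,d) = H(c) = k_c\log|\alphabet|.
\]
Under a routing code, $y$ is an $n$-tuple whose entries are literal copies of individual message components; conditioning on $a,c,d$ freezes every entry except those that are $b$-components, and the remaining conditional entropy is exactly $\log|\alphabet|$ times the number of \emph{distinct} $b$-components appearing among the entries of $y$. Equating this to $k_b\log|\alphabet|$ forces all $k_b$ components of $b$ to appear in $y$, and symmetrically all $k_c$ components of $c$ must appear there as well. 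These occupy at least $k_b + k_c$ distinct slots among the $n$ slots of $y$, giving $k_b + k_c \le n$.

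Adding this inequality to the nine of Theorem~\ref{thm:1} and repeating the combinatorial vertex enumeration from that proof is the next step. Of the fourteen vertices in Theorem~\ref{thm:1}, only $(0,1,1,0)$ violates $r_b + r_c \le 1$; the thirteen survivors are precisely the points listed here. I would confirm that the cut $r_b + r_c = 1$ introduces no new vertex by examining each edge of the old polytope incident to $(0,1,1,0)$ and checking that the cut meets that edge at its other endpoint (necessarily one of the thirteen survivors, since the candidate neighbors $(0,0,1,0)$, $(0,1,0,0)$, $(0,0,1,1)$, $(0,1,0,1)$, $(1,1,0,0)$ and $(1,0,1,0)$ all sit on $r_b + r_c \le 1$). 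Routing achievability of the thirteen vertices is then essentially inherited: four of the five explicit codes displayed in Theorem~\ref{thm:1}---those for $(2,0,0,1)$, $(1,0,0,2)$, $(1,0,1,1)$ and $(1,1,0,1)$---are already routing codes, and the remaining nine vertex points are obtained by zeroing out specified messages in these four, exactly as in the earlier proof; only the code $y=b+c$ for the removed vertex $(0,1,1,0)$ fails to be a routing code.

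The capacities then fall out of the polytope description. The bound $\CaveRouting \le 3/4$ is inherited from Theorem~\ref{thm:1} and is attained by the routing code for $(2,0,0,1)$, so $\CaveRouting = 3/4$. The uniform bound $\CuniRouting \le 1/2$ is immediate from $r_b + r_c \le 1$ applied to points of the form $(r,r,r,r)$, and is matched by time-sharing the routing codes for $(1,1,0,1)$ and $(0,0,1,0)$ (the latter being the $a=d=0$ specialization of the code for $(1,0,1,1)$) to realize $(1/2,1/2,1/2,1/2)$. The main obstacle, and the only step that really requires new work, is the clean derivation of $r_b + r_c \le 1$ itself; that step hinges on the routing-specific observation that conditional entropy of $y$ given a set of messages equals $\log|\alphabet|$ times the count of \emph{distinct} uncovered components, so that the identities of Theorem~\ref{thm:1} translate directly into a slot-counting bound. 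Everything else in the proof is a straightforward specialization of, or combinatorial elaboration on, material already in the proof of Theorem~\ref{thm:1}.
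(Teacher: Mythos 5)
Your proposal is correct and follows the paper's proof closely: the single extra routing inequality $r_b + r_c \le 1$, the vertex enumeration that drops $(0,1,1,0)$, achievability inherited from the routing codes already displayed in the proof of Theorem~\ref{thm:1}, and the same two capacities. The paper obtains $k_b+k_c\le n$ by the more direct cut observation that in the butterfly topology every component of $b$ and of $c$ must physically cross the edge labeled $y$, rather than your longer entropy-identity plus slot-counting derivation, and it realizes $\CuniRouting=1/2$ by time-sharing the routing codes for $(1,0,1,0)$ and $(0,1,0,1)$ (giving $x=(0,a)$, $y=(b,c)$, $z=(d,0)$ at $k=1$, $n=2$) instead of your pair $(1,1,0,1)$ and $(0,0,1,0)$; both variants are valid and equivalent.
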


\begin{proof}
With routing,
in addition to the inequalities \eqref{eq:1}--\eqref{eq:9},
a solution must also satisfy
\begin{align}
k_b + k_c &\le n
\label{eq:10}
\end{align}
since all of the components of messages $b$ and $c$ must be carried by the edge labeled $y$.
One can show that each point in $\R^4$ that 
satisfies with equality some independent set of four of the inequalities
\eqref{eq:1}--\eqref{eq:9} and \eqref{eq:10} 
and also satisfies the remaining six inequalities
must be one of the $13$ points stated in this theorem
(i.e. $13$ of the $14$ points stated in Theorem~\ref{thm:1}
by excluding the point $(0,1,1,0)$).
The proof of Theorem~\ref{thm:1} showed that all vertices of $\polytope$ except $(0,1,1,0)$
were achievable using routing.

By \eqref{eq:10}, we have $\CuniRouting \le 1/2$,
and this upper bound is achievable,
for example,
by taking a convex combination of codes that achieve
$(1,0,1,0)$ and $(0,1,0,1)$,
as follows.
Take $k=1$ and $n=2$ and use the routing code determined by:
\begin{align*}
x &= (0, a)\\
y &= (b, c)\\
z &= (d, 0).
\end{align*}
The capacity $\CaveRouting = 3/4$ follows immediately from the proof of Theorem~\ref{thm:1}.
\end{proof}

\newpage
\section{Fano network}
\label{sec:Fano}

\begin{figure}[h]
\begin{center}
\leavevmode
\hbox{\epsfxsize=0.3\textwidth\epsffile{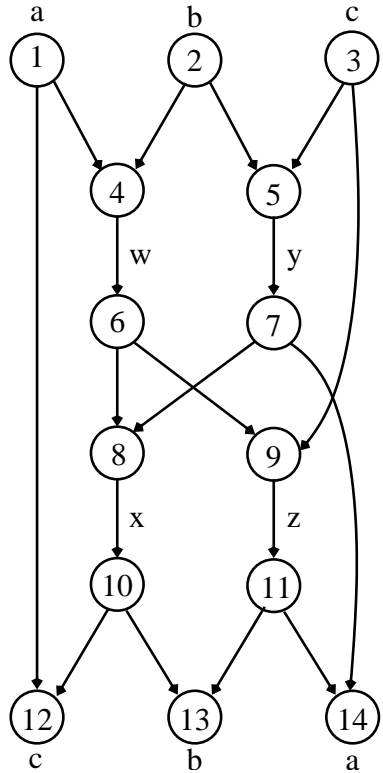}}
\end{center}
\caption{The Fano network. 
Source nodes $n_1$, $n_2$, and $n_3$ generate messages $a$, $b$, and $c$, respectively.
Receiver nodes $n_{12}$, $n_{13}$, and $n_{14}$ demand messages $c$, $b$, and $a$, respectively.
The symbol vectors carried on edges $e_{4,6}$, $e_{8,10}$, $e_{5,7}$, $e_{9,11}$ are labeled
as $w$, $x$, $y$, and $z$, respectively.
}
\label{fig:Fano-network}
\end{figure}
\begin{theorem}
The achievable rate regions 
for either linear coding
over any finite field alphabet of even characteristic
or non-linear coding are the same
for the Fano network
and are equal to the closed polyhedron in $\R^3$ whose faces lie on the $7$ planes 
(see Figure~\ref{fig:Fano-coding-rate-region}):
\begin{align*}
r_a &= 0\\ 
r_b &= 0\\ 
r_c &= 0\\
r_a &= 1\\
r_c &= 1\\
r_b + r_c &= 2\\
r_a + r_b &= 2   
\end{align*}
and whose vertices are the $8$ points:
\begin{align*}
&(0,0,0) &&(0,0,1) &(1,0,0) &&(0,2,0)\ \\
&(0,1,1) &&(1,0,1) &(1,1,0) &&(1,1,1).
\end{align*}
\label{thm:3}
\end{theorem}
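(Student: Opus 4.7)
The plan is to mirror the structure of the proof of Theorem~\ref{thm:1}: first establish the seven hyperplane inequalities as information-theoretic outer bounds that any $(k_a,k_b,k_c,n)$ fractional solution must satisfy, then enumerate the vertices of the resulting polyhedron, and finally exhibit an explicit code achieving each vertex (all of which, after scaling, will be linear over an even-characteristic field, giving the equality of the linear and non-linear rate regions).

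For the outer bounds, take the sources $a,b,c$ to be mutually independent and uniform. The nonnegativity constraints $r_a,r_b,r_c\ge 0$ are trivial. The bounds $r_a\le 1$ and $r_c\le 1$ will come from identifying a single bottleneck edge (among $w,x,y,z$) through which the corresponding message must pass in order to reach its receiver, together with a yields-chain showing that this edge determines the message. For the two-message bounds $r_a+r_b\le 2$ and $r_b+r_c\le 2$, I would exhibit a pair of bottleneck edges and trace yields-chains through the downstream receivers to show that the two chosen edges jointly yield the corresponding two messages, so that
\begin{align*}
k_a+k_b \;=\; H(a,b) \;\le\; H(\text{two edge vectors}) \;\le\; 2n,
\end{align*}
and similarly for $k_b+k_c$. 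Dividing through by $n$ gives the seven bounding hyperplanes. A routine computation then confirms that the $8$ listed points are precisely the $3$-tuples at which three of the seven hyperplanes meet independently while the remaining four inequalities hold, and that the polyhedron is indeed bounded (since each coordinate is bounded by the $r_i\le 1$ or $r_i+r_j\le 2$ constraints together with nonnegativity of the others, except $b$, which is bounded by $r_b\le 2$ coming from $r_a+r_b\le 2$).

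For achievability, the vertices $(0,0,0),(1,0,0),(0,0,1),(0,1,1),(1,0,1),(1,1,0)$ follow immediately by restricting attention to one or two sources and using the corresponding sub-networks; the vertex $(0,2,0)$ is achieved by routing $b$ along two edge-disjoint paths to $n_{13}$ (which exist since $r_b=2$ must be feasible). The key vertex is $(1,1,1)$, and this is exactly the statement that the Fano network is solvable; I would invoke the known construction from \cite{Dougherty-Freiling-Zeger05-Unachievability}, which yields a scalar linear solution over any field of characteristic $2$. Taking the convex hull of these $8$ achievable points then equals the polyhedron, and because every constructed code is linear over an even-characteristic field, the linear and non-linear rate regions coincide. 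The main obstacle is the outer bound: the inequalities $r_a+r_b\le 2$ and $r_b+r_c\le 2$ are not arbitrary two-edge cuts but must be the specific cuts whose downstream yields-chains actually recover two messages at the receivers, and identifying the correct pairs of edges among $\{w,x,y,z\}$ and verifying the yields-chain through $n_{12}, n_{13}, n_{14}$ is where the bulk of the work lies; the achievability of $(1,1,1)$ is non-trivial but already supplied by prior work.
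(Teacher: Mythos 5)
Your plan mirrors the paper's proof almost exactly — information-theoretic outer bounds from bottleneck edges, vertex enumeration, then explicit linear codes over an even-characteristic field (with $(1,1,1)$ achieved by the known scalar solution $w=a+b,\ y=b+c,\ x=a+c,\ z=a+b+c$). So the \emph{approach} is the same as the paper's.

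However, you explicitly defer the step you yourself identify as ``where the bulk of the work lies,'' namely pinning down which edges give the cuts and verifying the yields-chains, and that deferral is a genuine gap rather than a routine detail. For the record, the paper's chains are: edge $z$ alone bounds $r_a$ (since $z,b,c \yields z,y \yields a$ at $n_{14}$, so $H(a,b,c)=H(z,b,c)$ gives $k_a \le n$); edge $y$ alone bounds $r_c$ (since $a,b,y \yields a,w,y \yields a,x \yields c$, so $H(a,b,c)=H(a,b,y)$ gives $k_c \le n$); and the \emph{same} pair $\{x,z\}$ yields \emph{both} two-message bounds, via $a,x,z \yields a,b,c$ (for $k_b+k_c \le 2n$) and via $x,z \yields b$, $b,c \yields y$, $x,z,c \yields a,b,c$ (for $k_a+k_b \le 2n$). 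Your sketch leaves the reader to discover that the same two edges serve both cuts and to find the nontrivial chain through $n_5$ for the second one — without this, the outer bound is unproven. A smaller point: your justification of $(0,2,0)$ (``the paths exist since $r_b=2$ must be feasible'') is circular; achievability must be shown constructively, e.g.\ by $x=y=b_1,\ w=z=b_2$ as the paper does.
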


\begin{proof}
Consider a network solution over an alphabet $\alphabet$ and
denote the source message dimensions by $k_a$, $k_b$, and $k_c$,
and the edge dimensions by $n$.
Let each source be a random variable whose components are independent and uniformly distributed over $\alphabet$.
Then the solution must satisfy the following inequalities:
\begin{align}
                  k_a &\ge 0                      \label{eq:11}\\
                  k_b &\ge 0                      \label{eq:12}\\
                  k_c &\ge 0                      \label{eq:12b}\\
                  k_a &= H(a) = H(z|b,c)   \le H(z) \le n \label{eq:13}\\ 
                  k_c &= H(c) = H(y|a,b)   \le H(y) \le n \label{eq:14}\\ 
            k_b + k_c &= H(b,c) = H(x,z|a) \le H(x,z) \le 2n \label{eq:15}\\
            k_a + k_b &= H(a,b) = H(x,z|c) \le H(x,z) \le 2n. \label{eq:16}
\end{align}
\eqref{eq:11}--\eqref{eq:12b} are trivial;
\eqref{eq:13} follows because $z,b,c \yields z,y \yields a$ (at node $n_{14}$), 
                              so $z,b,c \yields a,b,c$ 
                              and thus $H(z,b,c) = H(a,b,c)$;
\eqref{eq:14} follows because $a,b,y \yields a,w,y \yields a,x \yields c$ (at node $n_{12}$), 
                              so $a,b,y \yields a,b,c$ 
                              and thus $H(a,b,y) = H(a,b,c)$;
\eqref{eq:15} follows because $a,x,z \yields a,b,c$ (at nodes $n_{12}$ and $n_{13}$) and thus $H(a,x,z)=H(a,b,x)$;
\eqref{eq:16} follows from: $x,z\yields b$ (at node $n_{13}$),
                            $b,c\yields y$ (at node $n_{5}$),
                            $x,z,c\yields z,b,c \yields y,z,b,c \yields a,b,c$,
                            so $H(x,z,c) = H(a,b,c)$.
Dividing each inequality in \eqref{eq:11}--\eqref{eq:16} by $n$ gives the $7$ bounding planes
stated in the theorem.

Let $r_a = k_a/n$, $r_b = k_b/n$, and $r_c= k_c/n$,
and let $\polytope$ denote the polygon in $\R^3$ consisting of all
$3$-tuples $(r_a, r_b, r_c)$ satisfying \eqref{eq:11}--\eqref{eq:16}.
Then $\polytope$ is bounded by \eqref{eq:11}--\eqref{eq:16}.
One can easily calculate that each point in $\R^3$ that 
satisfies some set of three of the inequalities
\eqref{eq:11}--\eqref{eq:16} with equality
and also satisfies the remaining four inequalities
must be one of the $8$ points stated in the theorem.
Now we show that all $8$ such points do indeed lie in $\polytope$.
The following $5$ points are seen to lie in $\polytope$ 
by taking $n = 1$
and the following codes over any even-characteristic finite field:
\begin{align*}
(0,1,1)&\!:\ \ x=y=c,\ w=z=b \\
(1,0,1)&\!:\ \ x=y=c,\ w=z=a \\
(1,1,0)&\!:\ \ x=y=b,\ w=z=a \\
(0,2,0)&\!:\ \ x=y=b_1,\ w=z=b_2 \\
(1,1,1)&\!:\ \ w=a+b,\ y=b+c,\ x=a+c,\ z=a+b+c
\end{align*}
and the remaining $3$ points are achieved by fixing certain messages to be $0$
(note that the codes for $(0,1,1)$, $(1,0,1)$, and $(1,1,0)$ can 
be obtained from the linear code for $(1,1,1)$ 
but we gave routing solutions for them here).

Since the above codes are all linear, 
the achievable rate regions for linear and non-linear
codes are the same.
\end{proof}

It was shown in \cite{Dougherty-Freiling-Zeger04-Insufficiency}
that for the Fano network,
$\Cave = \Cuni = 1$
and 
$\CuniLinear = 1$   for all even-characteristic fields
and
$\CuniLinear = 4/5$ for all odd-characteristic fields.
The calculation of $\CuniLinear = 4/5$ in \cite{Dougherty-Freiling-Zeger04-Insufficiency}
required a rather involved computation.
We now extend that computation to give the following theorem.

\begin{figure}[h]
\begin{center}
\leavevmode
\hbox{\epsfxsize=0.5\textwidth\epsffile{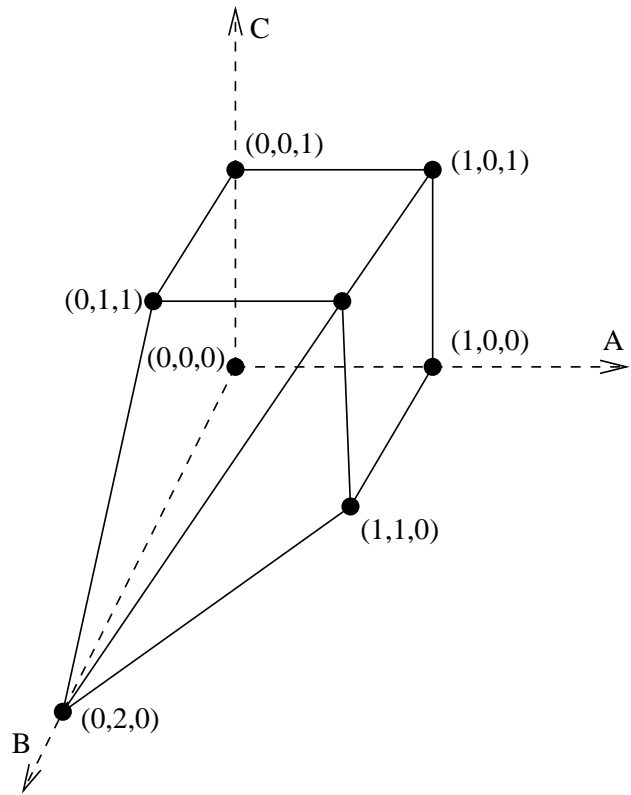}}
\end{center}
\caption{The achievable coding rate region for the Fano network
is a $7$-sided polyhedron with 8 vertices.
}
\label{fig:Fano-coding-rate-region}
\end{figure}
\begin{theorem}
The achievable rate region
for linear coding
over any finite field alphabet of odd characteristic
for the Fano network
is equal to the closed polyhedron in $\R^3$ whose faces lie on the $8$ planes 
(see Figure~\ref{fig:Fano-coding-even-rate-region}):
\begin{align*}
r_a &= 0\\ 
r_b &= 0\\ 
r_c &= 0\\
r_a &= 1\\ 
r_c &= 1\\
r_a + 2r_b + 2r_c &= 4\\
2r_a + r_b + 2r_c &= 4\\
2r_a + 2r_b + r_c &= 4
\end{align*}
and whose vertices are the $10$ points:
\begin{align*}
&(0,0,0) &&(0,0,1) &(1,0,0) &&(0,2,0)\ \\
&(0,1,1) &&(1,0,1) &(1,1,0) &&\ \\
&(2/3,2/3,1) &&(1,2/3,2/3) &(4/5,4/5,4/5).
\end{align*}
\label{thm:7}
\end{theorem}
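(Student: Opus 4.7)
The plan is to combine the characteristic-independent bounds already established in the proof of Theorem~\ref{thm:3} with three additional symmetric linear rank inequalities that hold only over fields of odd characteristic. Once these eight bounds are in place, the claimed polyhedron description and the list of ten vertices follow from a finite linear-algebraic enumeration, and the achievability direction is handled by exhibiting linear codes over any odd-characteristic field that realize the three new vertices $(2/3,2/3,1)$, $(1,2/3,2/3)$, and $(4/5,4/5,4/5)$.

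First I would observe that inequalities~\eqref{eq:11}--\eqref{eq:16}, which yield the seven planes in common with Theorem~\ref{thm:3}, are purely entropic consequences of the network's connectivity and demands; their derivations do not use any algebraic property of the alphabet and therefore apply unchanged to every linear code over a field of odd characteristic. In particular, every odd-characteristic linearly-achievable rate vector already lies in the polyhedron $\polytope$ of Theorem~\ref{thm:3}. Of the ten vertices claimed here, seven are shared with that theorem, and for those I would simply reuse the linear (indeed routing) codes already exhibited in its proof, which work in any characteristic.

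The core of the proof is the derivation of the three new inequalities $r_a + 2r_b + 2r_c \le 4$, $2r_a + r_b + 2r_c \le 4$, and $2r_a + 2r_b + r_c \le 4$. Here I would apply the method advertised in the abstract: translate the Fano network's edge-function constraints into rank relations among the subspaces $A,B,C,W,X,Y,Z$ associated with the messages $a,b,c$ and the edge labels $w,x,y,z$, and then invoke a characteristic-dependent linear rank inequality that captures the non-representability of the Fano matroid in odd characteristic. Concretely, the calculation of $\CuniLinear = 4/5$ in~\cite{Dougherty-Freiling-Zeger04-Insufficiency} already packages the essential counting argument for the symmetric case $k_a = k_b = k_c$; the task is to refactor that argument so that $k_a, k_b, k_c$ appear as free parameters, yielding $k_a + 2k_b + 2k_c \le 4n$ as an honest linear rank inequality rather than merely a scalar capacity bound. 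The other two inequalities will follow from parallel derivations obtained by permuting the roles of $a, b, c$ (noting that $a$ and $c$ play symmetric roles in the Fano network, while the third inequality exploits the asymmetric role of $b$). This step, which produces actual rank inequalities rather than a bound at a single symmetric point, is the main obstacle and the technical heart of the theorem.

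With all eight inequalities in hand, I would verify by direct enumeration that intersecting any three of the eight bounding hyperplanes and testing the remaining five produces exactly the ten listed vertices; boundedness is ensured by $r_a \le 1$, $r_c \le 1$, $r_b + r_c \le 2$, and $r_a + r_b \le 2$. Finally, for achievability at the three new vertices I would exhibit explicit odd-characteristic linear codes: the point $(4/5,4/5,4/5)$ is realized by the $(k,n)=(4,5)$ linear code from~\cite{Dougherty-Freiling-Zeger04-Insufficiency}, and the points $(2/3,2/3,1)$ and $(1,2/3,2/3)$ by small $(k_a,k_b,k_c,n)$-linear codes with $n=3$, which can be written down by hand and verified to satisfy every receiver demand. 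Taking the convex hull of these codes together with the routing codes used for the other seven vertices yields the entire polyhedron.
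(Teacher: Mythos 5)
Your high-level plan agrees with the structure of the paper's proof: keep the characteristic-independent bounds, add three new odd-characteristic inequalities, enumerate vertices, and exhibit linear codes for the three new extreme points. However, the proposal contains a genuine gap at exactly the place you yourself flag as ``the main obstacle and the technical heart of the theorem'': you never derive the three inequalities $k_a + 2k_b + 2k_c \le 4n$, $2k_a + k_b + 2k_c \le 4n$, and $2k_a + 2k_b + k_c \le 4n$, you only describe a strategy (``refactor'' the $4/5$ argument so that $k_a,k_b,k_c$ are free parameters). In the paper, that refactoring constitutes the entire Section on ``Proofs of remaining bounds for the Fano network'': one must decompose the transition matrices $M_i$ into $2\times 2$ blocks tuned to the sizes $k_a,\delta_a,k_c,\delta_c$, re-derive the key yields statements~\eqref{eq:R6.1}--\eqref{eq:R6.3}, and carry out three different dimension-counting arguments. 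A proof that omits this is not a proof of the theorem.

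A second, more substantive issue is your claim that ``the other two inequalities will follow from parallel derivations obtained by permuting the roles of $a,b,c$.'' This is not correct as stated. The Fano network topology is not symmetric under swapping $a$ and $c$ (receiver $n_{14}$ decodes $a$ from $\{y,z\}$, receiver $n_{12}$ decodes $c$ from $\{a,x\}$, and receiver $n_{13}$ decodes $b$ from $\{x,z\}$ --- three structurally different situations), so there is no automorphism of the network that transports one derivation to another. Correspondingly, the paper's three derivations are not permutations of each other: the bound $2k_a + k_b + 2k_c \le 4n$ comes from combining~\eqref{eq:R6.1} with the odd-characteristic rewriting~\eqref{eq:R6.3}; the bound $k_a + 2k_b + 2k_c \le 4n$ uses~\eqref{eq:R10.1},~\eqref{eq:R10.7},~\eqref{eq:R10.8}; and the bound $2k_a + 2k_b + k_c \le 4n$ requires the separate matrix-rank lemma (Lemma~\ref{lemma:R9.2}, about $\Rank{[(M-\lambda_i I)^{\top}\ N^{\top}]^{\top}}$ for distinct $\lambda_i$) that plays no role in the other two. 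That the final rate region happens to be symmetric in $r_a$ and $r_c$ is a consequence, not a tool. You would also need to actually write down the $n=3$ codes for $(2/3,2/3,1)$ and $(1,2/3,2/3)$ and verify all receiver demands; ``can be written down by hand'' is not a substitute. Finally, note that the paper offers two distinct routes to the inequality $2k_a + k_b + 2k_c \le 4n$ --- the matrix-algebra route in Section~\ref{sec:FanoProof} and an explicit characteristic-dependent linear rank inequality (Theorem~\ref{thm:LinearRankInequality1}) in Section~\ref{sec:FanoProof2}; your proposal blends the two without completing either.
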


\begin{proof} 
In addition to satisfying the conditions \eqref{eq:11}--\eqref{eq:16},
the solution must satisfy the following inequalities:
\begin{align}
             k_a + 2k_b + 2k_c \le 4n \label{eq:01}\\
            2k_a +  k_b + 2k_c \le 4n \label{eq:02}\\
            2k_a + 2k_b +  k_c \le 4n \label{eq:03}
\end{align}
The proofs of these inequalities are given in Section~\ref{sec:FanoProof},
and an alternate proof of~\eqref{eq:02} is given in Section~\ref{sec:FanoProof2}.

A straightforward argument as in previous theorems shows that the vertices
of the (bounded) region specified by inequalities \eqref{eq:11}--\eqref{eq:14}
and \eqref{eq:01}--\eqref{eq:03} (inequalities \eqref{eq:15} and \eqref{eq:16}
are now redundant) are the ten vertices listed in the theorem.  For the first
seven of these, the codes given in Theorem~\ref{thm:3} work here as well;
the remaining points are attained by the following three codes
(the last of which was given in~\cite{Dougherty-Freiling-Zeger04-Insufficiency}):
\begin{align*}
(1,2/3,2/3)&\!:\ \ n=3, \\
           & w=(a_1+b_1,a_2+b_2,a_3)\\
           & x=(a_1-c_1,a_2-c_2,a_2+b_2)\\
           & y=(b_1+c_1,b_2+c_2,b_1)\\ 
           & z=(a_1+b_1-c_1,a_2+b_2+c_2,a_3)\\
(2/3,2/3,1)&\!:\ \ n=3, \\
           & w=(a_1+b_1,a_2+b_2,b_2)\\ 
           & x=(a_1-c_1,a_2-c_2,c_3)\\
           & y=(b_1+c_1,b_2+c_2,c_3)\\ 
           & z=(a_1+b_1-c_1,a_2-b_2-c_2,c_1)\\
(4/5,4/5,4/5)&\!:\ \ n=5, \\
           & w=(a_1+b_1,a_2+b_2,a_3+b_3,a_4+b_4,b_1+b_4)\\ 
           & x=(c_1+a_1,c_2+a_2,c_3-a_3,c_4-a_4,a_3+b_3)\\
           & y=(c_1-b_1,c_2-b_2,c_3+b_3,c_4+b_4,b_2)\\ 
           & z=(a_1+b_1+c_1,a_2+b_2+c_2,a_3+b_3+c_3,a_4+b_4+c_4,b_1+b_4+c_4)
\end{align*}
\end{proof}

\begin{figure}[h]
\begin{center}
\leavevmode
\hbox{\epsfxsize=0.5\textwidth\epsffile{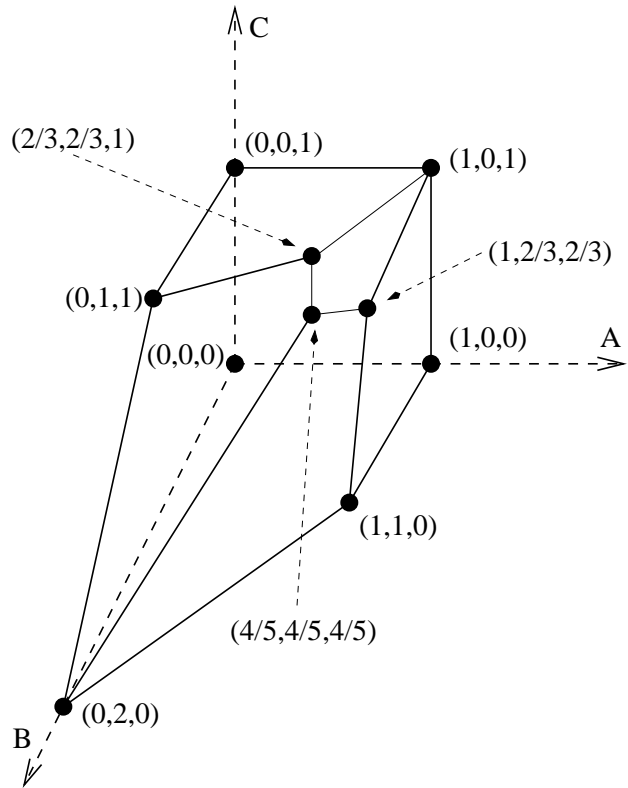}}
\end{center}
\caption{The achievable linear coding rate region over even-characteristic finite fields
for the Fano network is a $8$-sided polyhedron with 8 vertices.
}
\label{fig:Fano-coding-even-rate-region}
\end{figure}
\begin{theorem}
The achievable rate region
for routing
for the Fano network
is the closed polyhedron in $\R^3$ whose faces lie on the $6$ planes 
(see Figure~\ref{fig:Fano-routing-rate-region}):
\begin{align*}
r_a &= 0\\
r_b &= 0\\ 
r_c &= 0\\
r_a &= 1\\
r_c &= 1\\ 
r_a + r_b + r_c &= 2
\end{align*}
and whose vertices are the $7$ points:
\begin{align*}
&(0,0,0) &&(0,0,1) &(1,0,0) \ &&(0,2,0)\ \\
&(0,1,1) &&(1,0,1) &(1,1,0).
\end{align*}
\label{thm:4}
\end{theorem}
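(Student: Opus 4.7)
My plan is to follow the same recipe as in Theorems~\ref{thm:3} and~\ref{thm:4}: write down the defining inequalities for the routing rate region, enumerate the vertices of the resulting bounded polyhedron, and certify that each vertex is achievable by an explicit routing code. Any routing solution is a special linear solution, so the seven inequalities \eqref{eq:11}--\eqref{eq:16} established in the proof of Theorem~\ref{thm:3} already hold automatically. The only genuinely new content is the single extra routing inequality $k_a + k_b + k_c \le 2n$, which geometrically cuts off the lone coding vertex $(1,1,1)$ and leaves behind exactly the seven vertices listed in the theorem.

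To prove the new inequality, I would introduce, for each edge $e$ of the network and each message $m\in\{a,b,c\}$, the integer $\ell_e(m)$ counting how many components of $m$ are copied onto $e$ by the routing code; the per-edge capacity then reads $\ell_e(a)+\ell_e(b)+\ell_e(c)\le n$. The topology of the Fano network forces three structural facts. First, because $y=e_{5,7}$ is produced at $n_5$ whose only upstream source messages are $b$ and $c$, we have $\ell_y(a)=0$; since $n_{14}$ recovers $a$ only from $\{y,z\}$, all $k_a$ components of $a$ must lie on $z$, so $\ell_z(a)\ge k_a$. Second, a symmetric upstream analysis of $w=e_{4,6}$, together with the structure of $n_{12}$'s inputs, forces every component of $c$ onto $x$, so $\ell_x(c)\ge k_c$. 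Third, since $n_{13}$ decodes $b$ from $\{x,z\}$, we have $\ell_x(b)+\ell_z(b)\ge k_b$. Combining with the capacity estimates $\ell_x(b)\le n-\ell_x(c)\le n-k_c$ and $\ell_z(b)\le n-\ell_z(a)\le n-k_a$ yields
\[k_b \le (n-k_c) + (n-k_a),\]
which is the desired inequality.

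With that in hand, the enumeration of vertices becomes the same routine combinatorial check as in the previous two theorems: intersect each triple of the six listed hyperplanes, discard any intersection violating one of the remaining inequalities, and confirm that the surviving points are precisely the seven listed. Achievability is already on the record: the codes exhibited for $(0,1,1),(1,0,1),(1,1,0),(0,2,0)$ in the proof of Theorem~\ref{thm:3} merely copy message components and hence are routing codes, while the remaining three vertices $(0,0,0),(0,0,1),(1,0,0)$ are obtained by zeroing out messages in those codes. The stated capacities $\CuniRouting = 1/2$ and $\CaveRouting = 3/4$ then fall out by maximising $\min(r_a,r_b,r_c)$ and $(r_a+r_b+r_c)/3$ over this polyhedron.

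The main obstacle will be the routing-specific topological claim $\ell_x(c)\ge k_c$. Unlike the parallel statement for $a$, which follows from the single remark that the non-$z$ input edge of $n_{14}$ carries no $a$-components, the argument for $c$ at $n_{12}$ requires inspecting the full list of $n_{12}$'s input edges and verifying that every input other than $x$ has no routing path from source $n_3$, so no $c$-component can reach $n_{12}$ along any route other than $x$. Once this structural fact about the Fano network is nailed down, the remainder of the argument is essentially bookkeeping that mirrors the routing analysis carried out for the Generalized Butterfly network in Theorem~\ref{thm:2}.
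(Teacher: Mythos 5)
Your proposal matches the paper's proof: the new routing inequality $k_a + k_b + k_c \le 2n$ comes from the same observation that every component of $a$, $b$, and $c$ must be routed onto $x$ and $z$ (your per-edge $\ell_e$ bookkeeping merely spells out the paper's one-line justification, and the topological check you flag as the ``main obstacle''---that $n_{12}$ receives only $a$ and $x$, so $c$ must reach it via $x$---is exactly the fact the paper is invoking), the vertex enumeration is the same, and achievability uses the same routing codes already exhibited in the proof of Theorem~\ref{thm:3}. One small slip in your closing aside: the quoted values $\CuniRouting = 1/2$ and $\CaveRouting = 3/4$ are the Generalized Butterfly capacities from Theorem~\ref{thm:2}; Theorem~\ref{thm:4} makes no capacity claims, and for the Fano routing polytope both quantities would in fact equal $2/3$.
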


\begin{proof} %
With routing,
in addition to the inequalities \eqref{eq:11}--\eqref{eq:16},
a solution must also satisfy
\begin{align}
k_a + k_b + k_c &\le 2n
\label{eq:17}
\end{align}
since all of the components of messages $a$, $b$, and $c$ must be carried by the edges labeled $x$ and $z$.
One can easily check that the extreme points of the new region
with the inequality~\eqref{eq:17} added 
are the $7$ points stated in this theorem
(i.e., the points stated in Theorem~\ref{thm:3}
excluding the point $(1,1,1)$); see figure~\ref{fig:Fano-routing-rate-region}.
The proof of Theorem~\ref{thm:3} showed that all vertices of $\polytope$
other than $(1,1,1)$
were achievable using routing.

\end{proof}

\begin{figure}[h]
\begin{center}
\leavevmode
\hbox{\epsfxsize=0.5\textwidth\epsffile{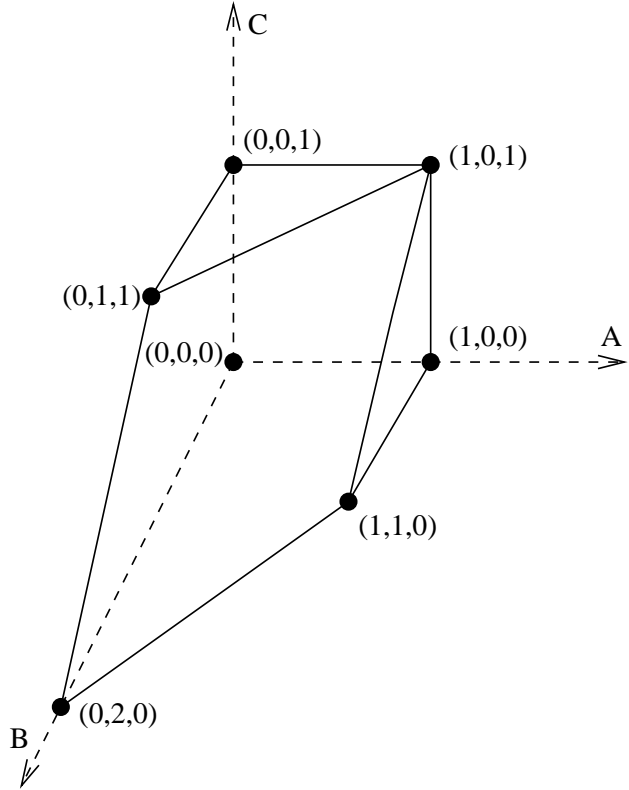}}
\end{center}
\caption{The achievable routing rate region for the Fano network
is a $6$-sided polyhedron with 7 vertices.
}
\label{fig:Fano-routing-rate-region}
\end{figure}

\newcommand{\TwoByOne}[2]{ \left[ \begin{array}{c} #1 \\ #2 \end{array} \right] }
\newcommand{\ThreeByOne}[3]{ \left[ \begin{array}{c} #1 \\ #2 \\ #3 \end{array} \right] }
\newcommand{\MatTwoTwo}[4]{\left[ \begin{array}{cc} #1 & #2 \\ #3 & #4 \end{array} \right]}
\newcommand{\vv}{\mathbf{v}}

\newpage
\section{Proofs of remaining bounds for the Fano network}
\label{sec:FanoProof}

For the case of linear coding over a finite field of odd characteristic,
we want to prove the bounds:
\begin{align}
             k_a + 2k_b + 2k_c \le 4n \label{eq:B122}\\
            2k_a +  k_b + 2k_c \le 4n \label{eq:B212}\\
            2k_a + 2k_b +  k_c \le 4n. \label{eq:B221}
\end{align}
We will do this by following and extending the arguments from
Section~IV of \cite{Dougherty-Freiling-Zeger04-Insufficiency},
with minor modifications needed because we now have separate
source message dimensions $k_a,k_b,k_c$ instead of a
single message dimension $k$.

We already have the bounds $k_a \le n$ and $k_c \le n$
(but we do \textit{not} necessarily have $k_b \le n$).
Therefore, we can think of the length-$n$ symbol vectors
$w$ and $z$ (referred to in~\cite{Dougherty-Freiling-Zeger04-Insufficiency}
as $e_{13,17}$ and $e_{22,30}$) as coming in two parts, one
of length $k_a$ and one of length $\delta_a = n - k_a$.
Similarly, we can think of the symbol vectors
$x$ and $y$ (referred to in~\cite{Dougherty-Freiling-Zeger04-Insufficiency}
as $e_{21,29}$ and $e_{14,18}$) as coming in two parts, one
of length $k_c$ and one of length $\delta_c = n - k_c$.
In order to consider what happens to these parts separately,
we decompose each of the transition matrices $M_i$
from~\cite{Dougherty-Freiling-Zeger04-Insufficiency}
in the form
\begin{align*} 
M_i &=
\MatTwoTwo{R_i}{S_i}{T_i}{U_i}
\end{align*}
where the submatrices $R_i,S_i,T_i,U_i$ are of appropriate sizes
(or are omitted altogether if appropriate).  For instance,
for $i=2$ we have that $R_2$ is $k_a \times k_b$, $T_2$ is
$\delta_a \times k_b$, and $S_2$ and $U_2$ are omitted; for
$i=5$ we have that $R_5$ is $k_c \times k_a$, $S_5$ is
$k_c \times \delta_a$, $T_5$ is $\delta_c \times k_a$,
and $U_5$ is $\delta_c \times \delta_a$.

We can now follow the arguments on pages 2752--2755
of~\cite{Dougherty-Freiling-Zeger04-Insufficiency}
and verify that they apply in this new
context with no further changes.  In particular, the
following formulas from pages 2754 and~2755
of~\cite{Dougherty-Freiling-Zeger04-Insufficiency}
still hold:
\begin{align}
&(U_7 + T_8 S_5) T_2 b + T_8 R_5 R_2 b,
\ T_3 b
\yields\notag\\
&(I + R_8 R_5) R_2 b + (S_7 + R_8 S_5) T_2 b
\label{eq:R6.1}
\end{align}
and
\begin{align}
&T_5 a + T_5 R_2 b + U_5 T_2 b + U_6 T_3 b,\notag\\
&a + R_2 b + S_7 T_2 b - R_8 R_5 a,\notag\\
& U_7 T_2 b - T_8 R_5 a\notag\\
&\yields
b.
\label{eq:R6.2}
\end{align}
Since the field has odd characteristic, we can let
$a' = a + 2^{-1}R_2 b$
and then rewrite \eqref{eq:R6.2} in the following form:
\begin{align}
&T_5 a' + 2^{-1}T_5 R_2 b + U_5 T_2 b + U_6 T_3 b,\notag\\
&(I - R_8 R_5) a' + 2^{-1}((I + R_8 R_5) R_2 b\notag\\
&\ \  + (S_7 + R_8 S_5) T_2 b + (S_7 - R_8 S_5) T_2 b ),\notag\\
&U_7 T_2 b + 2^{-1}T_8 R_5 R_2 b - T_8 R_5 a'\notag\\
&\yields
b.
\label{eq:R6.3}
\end{align}
Note that $a'$ has $k_a$ independent components and is independent
of $b$, just like $a$ is, because $a',b \yields a,b$.

The three vectors on the left-hand side of \eqref{eq:R6.2} have
respective dimensions $\delta_c$, $k_a$, and $\delta_a$; these add up
to $2n - k_c$.  From these vectors we can compute all of $b$
by~\eqref{eq:R6.2}, and then we can also reconstruct some information
about $a$, namely
$(I - R_8 R_5)a$ from the second of the three vectors and
$T_8 R_5 a$ from the third vector.  (We can also get $T_5 a$ from the
first vector, but this will not be used below.)  This gives a total
of
\begin{align*}
k_b + \Rank{\TwoByOne{I - R_8 R_5}{T_8 R_5}}
\end{align*}
independent components reconstructed from these three vectors,
so we must have
\begin{align}
k_b + \Rank{\TwoByOne{I - R_8 R_5}{T_8 R_5}} \le 2n - k_c.
\label{eq:R10.1}
\end{align}

Now, using \eqref{eq:R6.1}, we see that
\begin{align}
T_2 b,
T_3 b,
T_8 R_5 R_2 b
\yields
(I + R_8 R_5)R_2 b.
\label{eq:R10.7}
\end{align}
But we can add $(I + R_8 R_5)R_2 b$ and $(I - R_8 R_5)R_2 b$
to get $2 R_2 b$, which yields $R_2 b$ because the field has
odd characteristic.  And \eqref{eq:R6.2} implies
\begin{align}
a,
T_2 b,
T_3 b,
R_2 b
\yields
a,b.
\label{eq:R10.8}
\end{align}
Putting these together, we get
\begin{align*}
a,
T_2 b,
T_3 b,
\TwoByOne{I - R_8 R_5}{T_8 R_5} R_2 b
\yields
a,b.
\end{align*}
Now, using \eqref{eq:R10.1} and the known sizes of the vectors
$a$, $T_2 b$, and $T_3 b$, we get the inequality
\begin{align*}
k_a + n - k_a + n - k_c + 2n - k_c - k_b &\ge k_a + k_b,
\end{align*}
which reduces to \eqref{eq:B122}.

Using \eqref{eq:R6.1} and \eqref{eq:R6.3} together, we get
\begin{align*}
a',\ 
T_2b,\ 
T_3 b,\ 
T_8 R_5 R_2 b,\ 
T_5 R_2 b\ 
&\yields
a', b \\
&\yields
a, b,
\end{align*}          
yielding the inequality
\begin{align*}
k_a + n - k_a + n - k_c + n - k_a + n - k_c \ge k_a + k_b,
\end{align*}          
which is \eqref{eq:B212}.

For the remaining inequality \eqref{eq:B221}, we will use the
following fact: if $M$ is a $k\times k$ matrix and $N$ is
an $r\times k$ matrix, then
\begin{align}
&\Rank{\TwoByOne{M}{N}} +
\Rank{\TwoByOne{M-I}{N}}\notag\\
&\ \  +
\Rank{\TwoByOne{M+I}{N}}\notag\\
&\ge
2k + \Rank{N}.
\label{eq:R9.1}
\end{align}
Since $1\ne -1$ in a field of odd characteristic, \eqref{eq:R9.1}
is a special case of:

\begin{lemma}
If $M$ is a $k\times k$ matrix and $N$ is
an $r\times k$ matrix, and the scalars $\lambda_1,\dots,\lambda_t$
are distinct, then
\begin{align}
\sum_{i=1}^t 
\Rank{\TwoByOne{M-\lambda_i I}{N}}
\ge
(t-1)k + \Rank{N}.
\label{eq:300}
\end{align}
\label{lemma:R9.2}
\end{lemma}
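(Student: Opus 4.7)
The plan is to translate the rank inequality into a dimension inequality about kernels, and then exploit the familiar fact that eigenspaces of a linear operator corresponding to distinct eigenvalues form a direct sum.

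First I would rewrite the rank of each block matrix using the identity
\[
\Rank{\TwoByOne{M-\lambda_i I}{N}} = k - \Dim{\ker(M-\lambda_i I)\cap\ker N},
\]
and also write $\Rank{N} = k - \Dim{\ker N}$. Substituting these into \eqref{eq:300} and simplifying, the desired inequality becomes equivalent to
\[
\sum_{i=1}^t \Dim{V_i} \le \Dim{\ker N},
\]
where $V_i = \ker(M-\lambda_i I)\cap\ker N$. So the work reduces to bounding the total dimension of the $V_i$'s by the nullity of $N$.

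Next I would observe that each $V_i$ is contained in $\ker(M-\lambda_i I)$, i.e., in the $\lambda_i$-eigenspace of $M$. Since the $\lambda_i$ are distinct, eigenvectors of $M$ corresponding to distinct eigenvalues are linearly independent, so the sum $\ker(M-\lambda_1 I) + \cdots + \ker(M-\lambda_t I)$ is a direct sum. Consequently the subspaces $V_1,\dots,V_t$ are also in direct sum, and
\[
\sum_{i=1}^t \Dim{V_i} = \Dim{V_1 \oplus \cdots \oplus V_t}.
\]
Moreover every $V_i$ lies inside $\ker N$, so $V_1\oplus\cdots\oplus V_t \subseteq \ker N$, and the dimension inequality follows immediately.

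I do not expect any real obstacle: the only subtle step is justifying that eigenspaces for distinct eigenvalues are in direct sum, which is a standard argument (one proves by induction on $t$ that if $v_1+\cdots+v_t=0$ with $v_i\in\ker(M-\lambda_i I)$, then applying $M-\lambda_t I$ to both sides reduces to the $t-1$ case and forces each $v_i=0$). This works over any field whatsoever, so no hypothesis on characteristic is needed for the lemma itself — the odd-characteristic hypothesis in \eqref{eq:R9.1} is only used to ensure that $1,0,-1$ are three distinct scalars.
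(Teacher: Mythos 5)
Your proposal is correct and is essentially the same argument the paper gives: both rewrite each rank as $k$ minus the nullity of the stacked matrix, reduce the inequality to $\sum_i \Dim{\ker(M-\lambda_i I)\cap\ker N} \le \Dim{\ker N}$, and conclude by noting that eigenspaces of $M$ for distinct $\lambda_i$ intersect trivially so the subspaces are in direct sum inside $\ker N$. The only cosmetic difference is how you justify the direct-sum property (inductive application of $M-\lambda_t I$ versus the paper's one-line recovery formula $(\lambda_1-\lambda_2)\cdots(\lambda_1-\lambda_t)\vv_1 = (M-\lambda_2 I)\cdots(M-\lambda_t I)\vv$), and both are standard.
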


We thank Nghi Nguyen for supplying the following clean proof of this result.

\begin{proof}
Let $E_i$ be the null space of $M-\lambda_i I$,
and let $E$ be the null space of $N$.  Then
\begin{align*}
\Rank{\TwoByOne{M-\lambda_i I}{N}} 
= k - \mbox{dim}(E_i \cap E)
\end{align*}
and
\begin{align*}
\Rank{N} = k - \mbox{dim}(E).
\end{align*}
So \eqref{eq:300} is equivalent to
\begin{align*}
tk - \sum_i \mbox{dim} (E_i \cap E) \ge tk - \mbox{dim}(E)
\end{align*}
and hence to
\begin{align*}
\sum_i \mbox{dim} (E_i \cap E) \le \mbox{dim}(E),
\end{align*}
and the latter inequality is true because the subspaces $(E_i \cap E)$
are linearly independent in $E$.  (If $\vv \in E$ is the sum of vectors
$\vv_i \in E_i \cap E$ for $1 \le i \le t$, then we can recover the
vectors $\vv_i$ from $\vv$ using formulas such as
\begin{align*}
(\lambda_1-\lambda_2)\dots(\lambda_1-\lambda_t)\vv_1
= (M-\lambda_2I)\dots(M-\lambda_tI)\vv.)
\end{align*}
\end{proof}

Now, we have 
\begin{align*}
\Rank{\TwoByOne{R_8 R_5 - I}{T_8 R_5}} &\le 2n - k_c - k_b
\end{align*}
from \eqref{eq:R10.1}.  Since
\begin{align*}
\TwoByOne{R_8 R_5}{T_8 R_5} = \TwoByOne{R_8}{T_8} R_5,
\end{align*}
we have
\begin{align*}
\Rank{\TwoByOne{R_8 R_5}{T_8 R_5}} \le \Rank{R_5} \le k_c.
\end{align*}
Now, as stated on page 2756
of~\cite{Dougherty-Freiling-Zeger04-Insufficiency},
we can find a matrix $Q$ such that 
\begin{align}
\Rank{\ThreeByOne{I + R_8 R_5}{T_8 R_5}Q} = k_a
\label{eq:R10.3}
\end{align}
and
\begin{align*}
\Rank{Q} = k_a - \Rank{\TwoByOne{I + R_8 R_5}{T_8 R_5}},
\end{align*}
so
\begin{align*}
\Rank{\TwoByOne{I + R_8 R_5}{T_8 R_5}} = k_a - \Rank{Q}.
\end{align*}
Substituting these facts into \eqref{eq:R9.1} gives
\begin{align}
&2n - k_c - k_b + k_c + k_a - \Rank{Q}\notag\\
&\ \ \ge 2k_a + \Rank{T_8 R_5}.
\label{eq:R10.6}
\end{align}
But \eqref{eq:R10.3} implies that
\begin{align}
\ThreeByOne{I + R_8 R_5}{T_8 R_5}Q R_2 b \yields R_2 b;
\end{align}
combining this with \eqref{eq:R10.7} and \eqref{eq:R10.8} yields
\begin{align*}
T_2 b,\ 
T_3 b,\ 
T_8 R_5 R_2 b,\ 
QR_2 b
\yields
b.
\end{align*}
Using this with the bound on $\Rank{T_8 R_5}$ obtained from \eqref{eq:R10.6},
we get
\begin{align*}
&n - k_a + n - k_c + 2n - k_a - k_b - \Rank{Q} + \Rank{Q}\notag\\
&\ \ \ge k_b,
\end{align*}
which reduces to the desired inequality \eqref{eq:B221}.

\newpage
\section{Non-Fano network}
\label{sec:non-Fano}

\begin{figure}[h]
\begin{center}
\leavevmode
\hbox{\epsfxsize=0.4\textwidth\epsffile{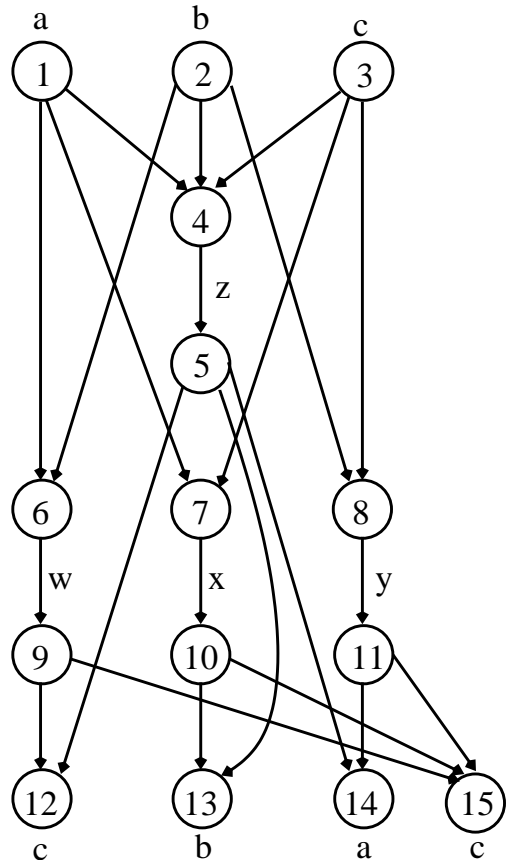}}
\end{center}
\caption{The non-Fano network. 
Source nodes $n_1$, $n_2$, and $n_3$ generate messages $a$, $b$, and $c$, respectively.
Receiver nodes $n_{12}$, $n_{13}$, $n_{14}$, and $n_{15}$ demand messages $c$, $b$, $a$, and $c$, respectively.
The symbol vectors carried on edges $e_{6,9}$, $e_{7,10}$, $e_{8,11}$, $e_{4,5}$ are labeled
as $w$, $x$, $y$, and $z$, respectively.
}
\label{fig:non-Fano-network}
\end{figure}
\begin{theorem}
The achievable rate region
for either linear coding
over any finite field alphabet of odd characteristic
or non-linear coding are the same
for the non-Fano network
and are equal to the closed 
cube
in $\R^3$ whose faces lie on the $6$ planes 
(see Figure~\ref{fig:non-Fano-coding-rate-region}):
\begin{align*}
r_a &= 0\\ 
r_b &= 0\\ 
r_c &= 0\\
r_a &= 1\\ 
r_b &= 1\\
r_c &= 1      
\end{align*}
and whose vertices are the $8$ points:
\begin{align*}
&(0,0,0) &&(0,0,1) &(1,0,0) &&(0,1,0)\ \\
&(0,1,1) &&(1,0,1) &(1,1,0) &&(1,1,1).
\end{align*}
\label{thm:5}
\end{theorem}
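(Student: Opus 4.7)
The plan is to follow the same template used for the Fano network in Theorem~\ref{thm:3}: first derive a small system of entropy inequalities whose division by $n$ gives the six bounding half-spaces, then exhibit explicit linear codes over odd-characteristic fields achieving each vertex, and finally note that since the achieving codes are linear, the linear and non-linear rate regions coincide.

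For the lower bounds I would just use $k_a,k_b,k_c \ge 0$, and for the upper bounds I expect the three inequalities $k_a \le n$, $k_b \le n$, $k_c \le n$ to come from tracing the demands through the network. Concretely, I would identify, for each message, a single edge vector in $\{w,x,y,z\}$ which (together with appropriate other messages) yields that message at the corresponding receiver, and use this to show that the message has entropy at most $n$. For example, analogous to the Fano calculation where $H(a)=H(z\mid b,c)\le H(z)\le n$, I expect chains such as $H(a)\le H(\text{some edge})\le n$, and similarly for $b$ and $c$, which will give exactly $r_a,r_b,r_c \le 1$.

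Once the six bounding planes are in place, a routine vertex enumeration shows that the only points of $\R^3$ satisfying three of the six inequalities with equality and the others loosely are the eight vertices of the unit cube. For achievability, it suffices to exhibit a code reaching $(1,1,1)$, since every other vertex is then obtained by setting one or more source messages to zero. For $(1,1,1)$ I would use the known fact, recalled from~\cite{Dougherty-Freiling-Zeger05-Unachievability}, that the non-Fano network is scalar-linearly solvable over every odd-characteristic field; writing down this classical solution explicitly (a combination such as $w=a+b$, $x=a+c$, $y=b+c$, $z=a+b+c$, up to the edge labeling of Figure~\ref{fig:non-Fano-network}) and checking that each receiver can invert the appropriate $2\times 2$ system (whose determinant is $\pm 2$, hence nonzero in odd characteristic) finishes the achievability.

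The main obstacle, such as it is, will be the identification of the right ``chains'' in the network that produce each upper bound $r_a,r_b,r_c \le 1$: I need to be careful about which edge a given message is inferred from, given that the non-Fano network has four receivers rather than three and a different topology than the Fano network. However, no non-Shannon arguments or characteristic-dependent rank inequalities are needed here (unlike Theorem~\ref{thm:7}), because the bounding polytope is just the unit cube, so all hard work is on the achievability side where the odd-characteristic assumption enters. Finally, since the code achieving $(1,1,1)$ is linear, linear and non-linear rate regions agree, completing the proof.
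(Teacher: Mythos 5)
Your proposal matches the paper's proof almost exactly: the paper derives the three upper bounds via the same style of entropy chains (for instance $k_a = H(a) = H(z\mid b,c)\le H(z)\le n$ because $z,b,c \yields z,y \yields a$ at $n_{14}$, and similarly using $z,x$ at $n_{13}$ and $z,w$ at $n_{12}$), then achieves $(1,1,1)$ with precisely the linear code $w=a+b$, $x=a+c$, $y=b+c$, $z=a+b+c$, invoking odd characteristic only so that $n_{15}$ can divide by $2$. The one small inaccuracy in your write-up is the phrasing "each receiver can invert the appropriate $2\times 2$ system whose determinant is $\pm 2$": receivers $n_{12}$, $n_{13}$, $n_{14}$ recover their demands by simple subtraction of $z$ from the one other edge they see (no factor of $2$ appears), and only $n_{15}$, which sees $w,x,y$ but not $z$, needs to compute something like $(x+y-w)\cdot 2^{-1}$, so the odd-characteristic restriction enters at exactly one receiver.
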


\begin{proof}
Consider a network solution over an alphabet $\alphabet$ and
denote the source message dimensions by $k_a$, $k_b$, and $k_c$,
and the edge dimensions by $n$.
Let each source be a random variable whose components are independent and uniformly distributed over $\alphabet$.
Then the solution must satisfy the following inequalities:
\begin{align}
                  k_a &\ge 0                      \label{eq:18}\\
                  k_b &\ge 0                      \label{eq:19}\\
                  k_c &\ge 0                      \label{eq:20}\\
                  k_a &= H(a) = H(z|b,c)   \le H(z) \le n \label{eq:21}\\ 
                  k_b &= H(b) = H(z|a,c)   \le H(z) \le n \label{eq:22}\\ 
                  k_c &= H(c) = H(z|a,b)   \le H(z) \le n. \label{eq:23}
\end{align}
\eqref{eq:18}--\eqref{eq:20} are trivial;
\eqref{eq:21} follows because $z,b,c \yields z,y \yields a$ (at node $n_{14}$), so $z,b,c \yields a,b,c$ and thus $H(a,b,c) = H(z,b,c)$.
\eqref{eq:22} follows because $z,a,c \yields z,x \yields b$ (at node $n_{13}$), so $z,a,c \yields a,b,c$ and thus $H(a,b,c) = H(z,a,c)$.
\eqref{eq:23} follows because $z,a,b \yields z,w \yields c$ (at node $n_{12}$), so $z,a,b \yields a,b,c$ and thus $H(a,b,c) = H(z,a,b)$.
Dividing each inequality in \eqref{eq:18}--\eqref{eq:23} by $n$ gives the $8$ bounding planes
stated in the theorem.

Let $r_a = k_a/n$, $r_b = k_b/n$, and $r_c= k_c/n$,
and let $\polytope$ denote the polyhedron in $\R^3$ consisting of all
$3$-tuples $(r_a, r_b, r_c)$ satisfying \eqref{eq:18}--\eqref{eq:23}.
Then $\polytope$ is simply the unit cube shown in
Figure~\ref{fig:non-Fano-coding-rate-region},
and its extreme points are the $8$ points stated in the theorem.
To show that the $8$ points lie in the achievable rate region,
let $n = k_a = k_b = k_c = 1$
and use the following linear code for $(1,1,1)$ over any odd-characteristic finite field:
\begin{align*}
&w=a+b,\ y=b+c,\ x=a+c, z=a+b+c
\end{align*}
(where node $n_{15}$ can recover its demand via
 $c = (w - y + x) \cdot 2^{-1}$).
The other $7$ points are obtained by setting certain messages to $0$ in the code for $(1,1,1)$.
Since the above codes are all linear, 
the achievable rate regions for linear and non-linear
codes are the same.
\end{proof}

\begin{theorem}
The achievable rate region
for linear coding
over any finite field alphabet of even characteristic
for the non-Fano network
is equal to the closed polyhedron in $\R^3$ whose faces lie on the $7$ planes 
(see Figure~\ref{fig:non-Fano-coding-odd-rate-region}):
\begin{align*}
r_a &= 0\\ 
r_b &= 0\\ 
r_c &= 0\\
r_a &= 1\\
r_b &= 1\\
r_c &= 1\\
r_a + r_b + r_c &= 5/2
\end{align*}
and whose vertices are the $10$ points:
\begin{align*}
&(0,0,0) &&(0,0,1) &(1,0,0) &&(0,1,0)\ \\
&(0,1,1) &&(1,0,1) &(1,1,0)\ \\
&(1,1,1/2) &&(1,1/2,1) &(1/2,1,1).
\end{align*}
\label{thm:8}
\end{theorem}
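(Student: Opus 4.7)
The plan is to follow the template established in the proofs of Theorems~\ref{thm:3}, \ref{thm:5}, and~\ref{thm:7}. Six of the seven bounding planes, namely $r_a, r_b, r_c \ge 0$ and $r_a, r_b, r_c \le 1$, are immediate from the cut-set inequalities \eqref{eq:18}--\eqref{eq:23} already established in the proof of Theorem~\ref{thm:5}; those arguments use only entropy manipulations and apply regardless of the alphabet's characteristic. Consequently, the only substantive new work is (i) to establish the characteristic-$2$ bound $k_a + k_b + k_c \le (5/2)\,n$ and (ii) to exhibit even-characteristic linear codes attaining the three new vertices on that face.

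For the upper bound I would adapt the matrix-rank framework of Section~\ref{sec:FanoProof}. Write the linear edge functions as blocked transition matrices, indexed by the source dimensions $k_a, k_b, k_c$ and the slack dimensions $\delta_a = n - k_a$, $\delta_b = n - k_b$, $\delta_c = n - k_c$, and translate the decoding requirements at the four receivers into matrix identities. Receivers $n_{12}, n_{13}, n_{14}$ each recover a single source message from $z$ together with one of $w, x, y$; receiver $n_{15}$, however, must recover $c$ from $w, x, y$ alone. Combining the algebraic constraints imposed by the first three decoders forces a particular linear combination of the coefficient matrices appearing in $w, x, y$ to equal $2 I$, which collapses to $0$ in characteristic~$2$. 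Converting this algebraic collapse into a rank deficiency and combining with the trivial bounds $\Rank{w}, \Rank{x}, \Rank{y}, \Rank{z} \le n$ should produce exactly $k_a + k_b + k_c \le (5/2)\,n$.

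The main obstacle will be isolating the correct rank identity that captures this characteristic-$2$ phenomenon, in analogy with how Lemma~\ref{lemma:R9.2} exploited the \emph{existence} of three distinct nonzero scalars. Here the logic runs in the opposite direction: we use the fact that $1 + 1 = 0$ to force otherwise-independent matrix combinations to vanish, which costs at least $n/2$ symbols of aggregate rate. A clean route is to assume a solution with $k_a + k_b + k_c > (5/2)\,n$, show that any valid linear decoding rule at $n_{15}$ would then require a $2^{-1}$ operation, and derive a contradiction. The quantitative challenge is to extract precisely the factor $5/2$ rather than a weaker bound; this will likely require playing the constraints at $n_{12}, n_{13}, n_{14}$ against the constraint at $n_{15}$ in a symmetric way, perhaps by proving three symmetric rank inequalities and summing them.

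Finally, achievability is routine. The seven shared vertices are realized by routing codes from Theorem~\ref{thm:5}, which use only the scalars $0$ and $1$ and hence work over any field. For the three new vertices on the face $r_a + r_b + r_c = 5/2$, I would construct explicit even-characteristic linear codes at $n = 2$: for $(1,1,1/2)$ take $k_a = k_b = 2$, $k_c = 1$, and choose $z$ to be a pair of linear combinations of $a, b, c$ from which each of $n_{12}, n_{13}, n_{14}$ can decode its demand, while distributing $w, x, y$ so that $n_{15}$ can recover $c$ by addition alone (thereby sidestepping the $2^{-1}$ needed in the Theorem~\ref{thm:5} construction). The vertices $(1,1/2,1)$ and $(1/2,1,1)$ should then be handled by analogous constructions obtained by permuting the roles of the three source messages.
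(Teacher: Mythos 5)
Your achievability plan matches the paper's construction exactly: for $(1,1,1/2)$ the paper takes $n=k_a=k_b=2$, $k_c=1$ with $w=(a_1,b_1)$, $x=(a_1+c,a_2)$, $y=(b_1+c,b_2)$, $z=(a_1+b_1+c,a_2+b_2)$ --- a code that works over every field because $n_{15}$ recovers $c$ by simple subtraction of known components --- and the other two new vertices come from permuting the message roles. You have also correctly isolated the phenomenon behind the new face: $n_{15}$ must recover $c$ from $w,x,y$ alone, and in characteristic $2$ the combination $c=(w-y+x)\cdot 2^{-1}$ used in Theorem~\ref{thm:5} is unavailable.

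The genuine gap is in proving $2k_a+2k_b+2k_c\le 5n$. You describe a program, not a proof, and the route you favor does not obviously close. Arguing by contradiction that a fractional code with $k_a+k_b+k_c>5n/2$ ``would require a $2^{-1}$ operation'' is too coarse: when $n>1$ there is no single scalar inverse a code must apply, and the characteristic enters only through rank constraints on block matrices; turning that into exactly an $n/2$ penalty is the whole difficulty. The paper's Section~\ref{sec:NonFanoProof} proof is a concrete dimension-count carried over from the transition-matrix machinery of \cite{Dougherty-Freiling-Zeger04-Insufficiency}: one decomposes each transition matrix into $k_c\times\cdot$ and $\delta_c\times\cdot$ blocks, builds the five-vector tuple $L$ that (by the decoding constraints, using the characteristic) determines $a,b,c$, bounds $\Rank{Q_{15}}\le\delta_a$ via $M_{15}M_7=I$ (and similarly $\Rank{Q_{13}}\le\delta_b$), and counts the independent entries of $L$ as $2n+\delta_a+\delta_b+\delta_c$; comparing against the required $k_a+k_b+k_c$ gives the bound. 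None of that scaffolding is in your sketch, and without it the factor $5/2$ does not materialize. Your ``sum three symmetric rank inequalities'' alternative gestures toward the paper's independent proof via the even-characteristic linear rank inequality of Theorem~\ref{thm:LinearRankInequality2}, but that argument builds nested subspaces $\bar A,\bar B,\bar C,\hat C$ and adds codimension bounds via Lemmas~\ref{lem:2}--\ref{lem:2b}, which is structurally different from symmetrizing-and-summing and would need to be developed in full.
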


\begin{proof}
The six inequalities from Theorem~\ref{thm:5} still apply here;
the proof that the
additional inequality
\begin{align}
2k_a + 2k_b + 2k_c &\le 5n \label{eq:23b}
\end{align}
must also hold in the case of even-characteristic finite fields
is given in Section~\ref{sec:NonFanoProof} (and another proof is
given in Section~\ref{sec:NonFanoProof2}).

The new inequality (\ref{eq:23b}) cuts down the achievable rate region
to the polyhedron shown in Figure~\ref{fig:non-Fano-coding-odd-rate-region},
whose extreme points are the 10 points listed in the theorem.
The point $(1,1,1/2)$ is achieved by the following code
with $n = k_a = k_b = 2$ and $k_c = 1$, which works
over any finite field:
\begin{align*}
&w=(a_1,b_1),\ y=(b_1+c,b_2),\ x=(a_1+c,a_2), z=(a_1+b_1+c,a_2+b_2).
\end{align*}
The other two new extreme points are achieved by permuting the variables
in the above code.
\end{proof}

Note that both the
uniform capacity and average capacity are $5/6$
for the non-Fano network, 
for any even-characteristic finite field.

\begin{figure}[h]
\begin{center}
\leavevmode
\hbox{\epsfxsize=0.5\textwidth\epsffile{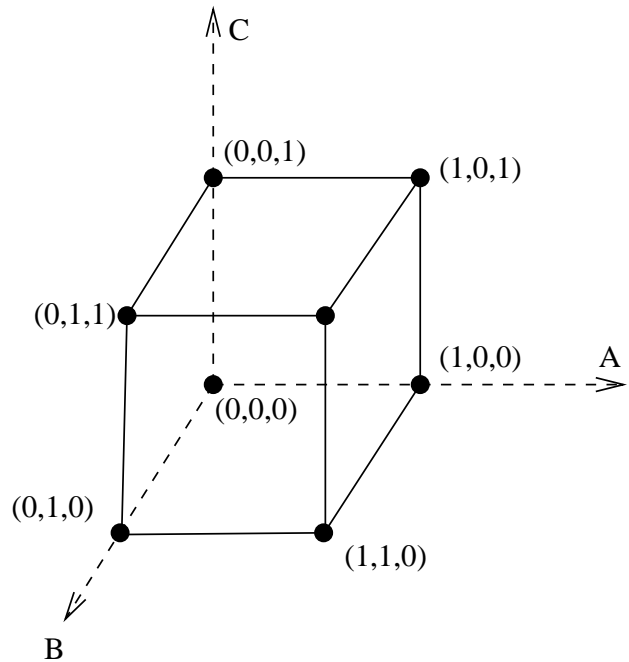}}
\end{center}
\caption{The achievable coding rate region for the Fano network
is a cube in $\R^3$.
}
\label{fig:non-Fano-coding-rate-region}
\end{figure}
\begin{figure}[h]
\begin{center}
\leavevmode
\hbox{\epsfxsize=0.5\textwidth\epsffile{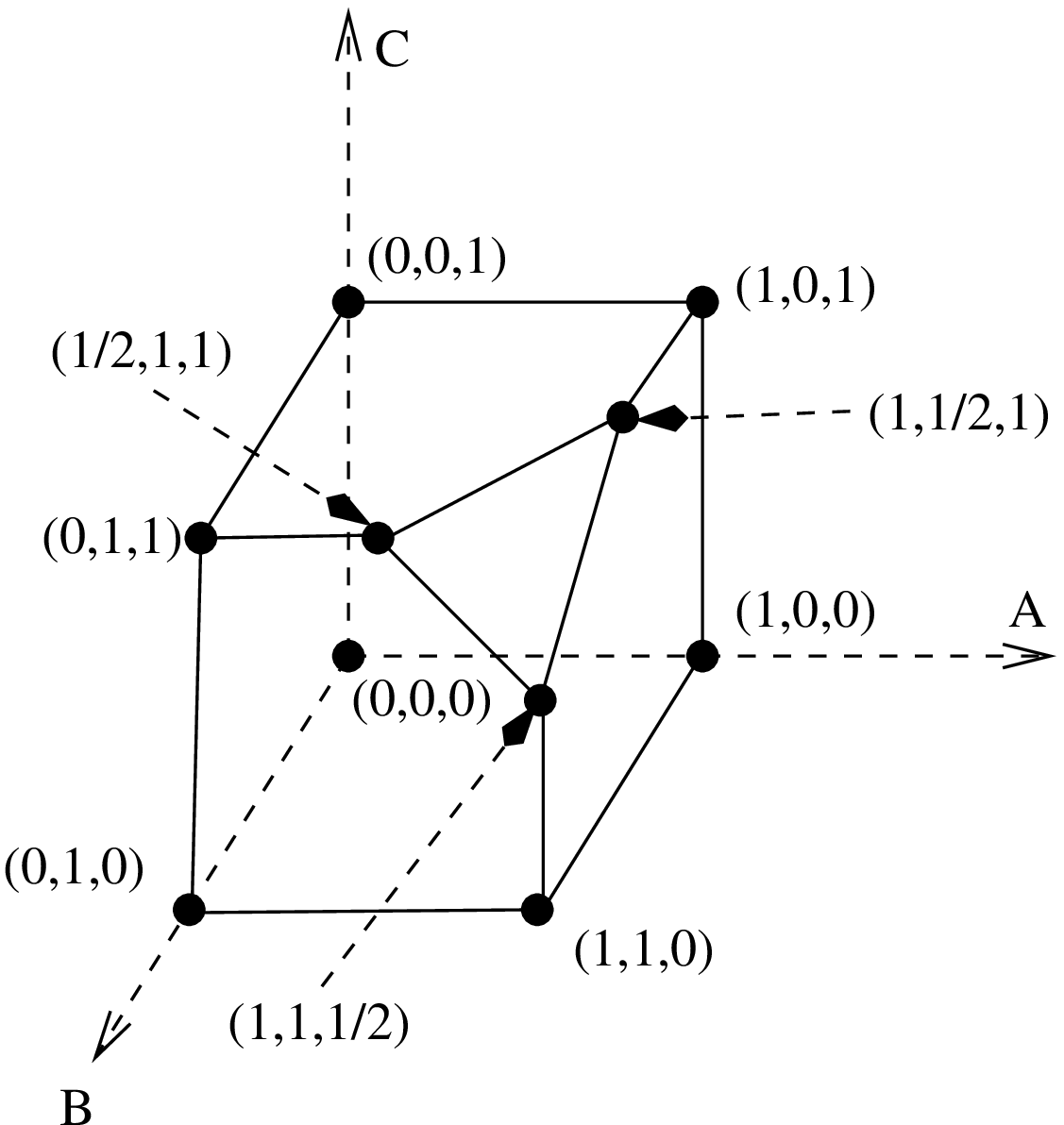}}
\end{center}
\caption{The achievable linear coding rate region over even-characteristic finite fields
for the non-Fano network is a $7$-sided polyhedron with $10$ vertices.
}
\label{fig:non-Fano-coding-odd-rate-region}
\end{figure}
\begin{theorem}
The achievable rate region
for routing for the non-Fano network
is the closed 
tetrahedron
in $\R^3$ whose faces lie on the $4$ planes 
(see Figure~\ref{fig:non-Fano-routing-rate-region}):
\begin{align*}
r_a &= 0\\
r_b &= 0\\
r_c &= 0\\
r_a + r_b + r_c &= 1
\end{align*}
and whose vertices are the $4$ points:
$$(0,0,0),\ (0,0,1),\  (1,0,0),\ (0,1,0).$$
\label{thm:6}
\end{theorem}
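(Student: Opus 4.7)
My plan is to follow the template already used in the routing parts of Theorems~\ref{thm:2} and \ref{thm:4}: start from the coding inequalities \eqref{eq:18}--\eqref{eq:23} derived in the proof of Theorem~\ref{thm:5} that any non-Fano solution must satisfy, adjoin one extra inequality that reflects the routing bottleneck at edge $z$, enumerate the vertices of the resulting polyhedron, and confirm that each extreme point is achieved by a routing code.

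The new inequality to add is
\begin{align*}
k_a + k_b + k_c &\le n,
\end{align*}
which is much stronger than the coding bounds and cuts the unit cube of Theorem~\ref{thm:5} down to the claimed tetrahedron. To prove it I would exploit the yields relations already used in the proof of Theorem~\ref{thm:5}, namely $z,b,c \yields a$, $z,a,c \yields b$, and $z,a,b \yields c$. In a routing code, each of the $n$ components of the vector $z$ is either a copy of a single specific coordinate of one source message or a constant symbol. Since the source messages $a$, $b$, $c$ are independent, the relation $z,b,c \yields a$ forces every one of the $k_a$ coordinates of $a$ to appear literally among the routed components of $z$: any coordinate of $a$ absent from $z$ could not be deduced from $z$, $b$, $c$ because of the independence of the sources. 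The symmetric relations force every coordinate of $b$ and every coordinate of $c$ to appear on $z$ as well, and these three sets of coordinates are pairwise disjoint, so the $n$-component vector $z$ must accommodate at least $k_a + k_b + k_c$ distinct source coordinates.

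With this inequality in hand, the vertex calculation is routine: the upper bounds $r_a, r_b, r_c \le 1$ become redundant, and the extreme points of the region cut out by the four surviving inequalities $r_a \ge 0$, $r_b \ge 0$, $r_c \ge 0$, and $r_a + r_b + r_c \le 1$ are exactly the four points listed in the theorem. The origin $(0,0,0)$ is trivially achievable, and each of $(1,0,0)$, $(0,1,0)$, $(0,0,1)$ is realized by the obvious routing code that devotes the single slot of $z$ to the one nonzero source message and zeros out the other two, which is a specialization of the linear code for $(1,1,1)$ given in the proof of Theorem~\ref{thm:5}.

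The main obstacle is making the ``every coordinate of $a$ must appear verbatim on $z$'' step fully rigorous within the routing model. One has to trace the chain of copy-assignments from the components of $z$ back through the network to the original source messages and then invoke independence of $a$, $b$, $c$ in order to upgrade the yields relation into a literal containment of source coordinates in the routed vector. This is the exact analogue of the bottleneck argument used in the proof of Theorem~\ref{thm:4}, except that here a single edge $z$ plays the role that the two edges $x$ and $z$ played jointly in the Fano network, producing the stronger bound $k_a + k_b + k_c \le n$ in place of $k_a + k_b + k_c \le 2n$.
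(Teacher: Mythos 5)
Your proposal is correct and takes essentially the same approach as the paper: adjoin the routing bottleneck inequality $k_a+k_b+k_c\le n$ for edge $z$, observe that it makes the coding bounds $r_a,r_b,r_c\le 1$ redundant, and check that the four vertices of the resulting simplex are achieved by routing codes. The paper states the bottleneck inequality with a one-line justification and gives explicit codes ($z=a$, $z=b$, $y=z=c$), whereas you derive the inequality a bit more carefully and obtain the codes as specializations of the $(1,1,1)$ linear code, but these are only presentational differences.
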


\begin{proof}
In addition to satisfying \eqref{eq:18}--\eqref{eq:23},
a routing solution must also satisfy
\begin{align}
k_a + k_b + k_c &\le n \label{eq:24}
\end{align}
since the edge labeled $z$ must carry all $3$ messages $a$, $b$, and $c$.
The inequality \eqref{eq:24} makes the inequalities \eqref{eq:21}--\eqref{eq:23} redundant,
and, in fact, the vertices of the polygon determined by \eqref{eq:18}--\eqref{eq:20} and \eqref{eq:24}
are the $4$ listed in the theorem.
These are achievable using the following routing codes:
\begin{align*}
(0,0,1)&\!:\ \ y=z=c \\
(1,0,0)&\!:\ \ z=a \\
(0,1,0)&\!:\ \ z=b.
\end{align*}
\end{proof}

\begin{figure}[h]
\begin{center}
\leavevmode
\hbox{\epsfxsize=0.5\textwidth\epsffile{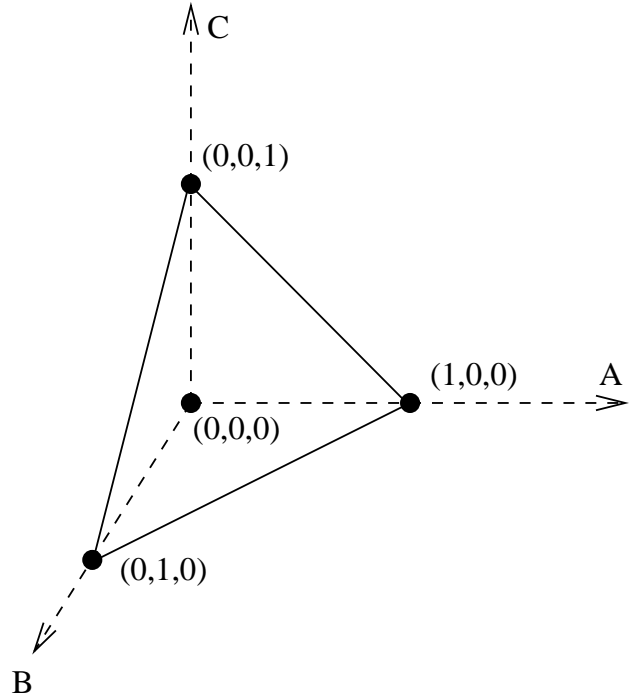}}
\end{center}
\caption{The achievable routing rate region for the Fano network
is a tetrahedron in $\R^3$.
}
\label{fig:non-Fano-routing-rate-region}
\end{figure}

\newpage
\section{Proof of remaining bound for the non-Fano network}
\label{sec:NonFanoProof}

For the case of linear coding over a finite field of characteristic~2,
we want to prove the bound:
\begin{align}
             2k_a + 2k_b + 2k_c \le 5n \label{eq:B222}
\end{align}
We will again do this by following the arguments from
Section~IV of \cite{Dougherty-Freiling-Zeger04-Insufficiency},
with minor modifications.  (Those arguments were for a different
network which was two copies of the non-Fano network with
one demand node merged, but a number of them concentrated on just
the left half of that network and hence will be directly
applicable to the non-Fano network.)

The matrices $M_1$ through $M_{15}$ will be the same as they are
on pages 2756--2757 of \cite{Dougherty-Freiling-Zeger04-Insufficiency};
they label a part of the network there which is identical to the
non-Fano network.  Again here, instead of one value $\delta = n-k$
we have three values $\delta_a = n-k_a$, $\delta_b = n - k_b$,
and $\delta_c = n - k_c$.  When we talk about thinking of an edge
vector as one part of length $k$ followed by one part of length
$n-k$, we will use $k = k_c$ here; so, for instance, $R_7$ is a
$k_c \times k_a$ matrix, while $R_9$ is $k_c \times k_c$.

Now follow the argument from pages 2756--2757 of \cite{Dougherty-Freiling-Zeger04-Insufficiency}
as written, except
that $L$ is just the five vectors
\begin{align*}
&M_3 a + M_4 c , \\
&M_5 b + M_6 c , \\ 
&Q_{13} (M_7 a + M_9 c) , \\
&Q_{15} (M_8 b + M_9 c) , \\
&Q_{10} (M_1 a + M_2 b) 
\end{align*}
without any ``corresponding five objects'' from the other side.
The same argument then yields $L \yields a,b,c$.
Since $M_{15}M_7 = I_{k_a}$, we have $\Rank{M_{15}} \ge k_a$
and hence $\Rank{Q_{15}} \le \delta_a$; similarly,
$\Rank{Q_{13}} \le \delta_b$.  Therefore, following the
computation on page 2757 of \cite{Dougherty-Freiling-Zeger04-Insufficiency}, 
we find that $L$ has only
\begin{align*}
&n + n + [\delta_a + \delta_b - (k_c - \alpha)] + [n - \alpha]\\
&= 2n + \delta_a + \delta_b + \delta_c
\end{align*}
independent entries.  Therefore,
\begin{align*}
2n + \delta_a + \delta_b + \delta_c \ge k_a + k_b + k_c,
\end{align*}
so
\begin{align*}
2k_a + 2k_b + 2k_c \le 5n.
\end{align*}

\newpage
\section{\Vamos{} network}
\label{sec:Vamos}

\begin{figure}[h]
\begin{center}
\leavevmode
\hbox{\epsfxsize=0.450\textwidth\epsffile{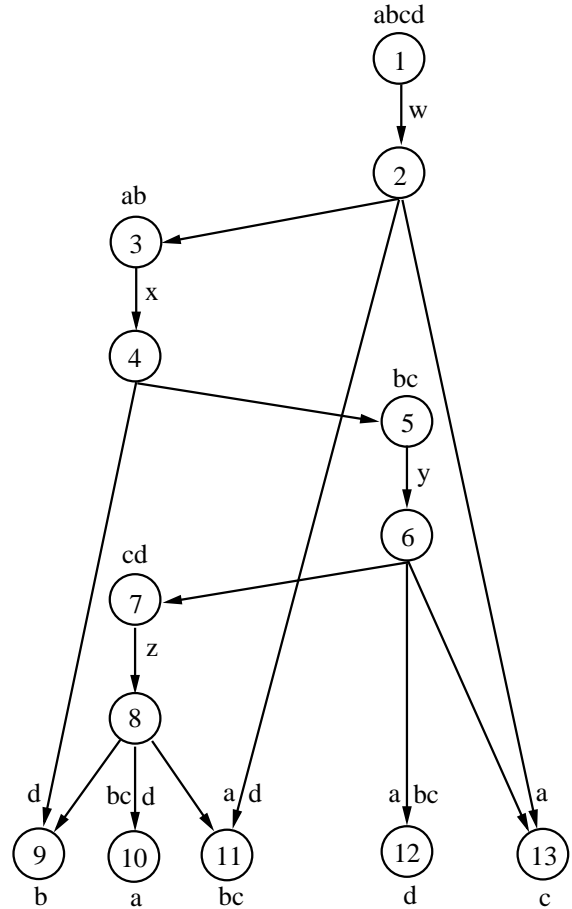}}
\end{center}
\caption{
The \Vamos{} network.
A message variable $a$, $b$, $c$, or $d$ labeled above a node
indicates an in-edge (not shown) from the source node (not shown) generating the  message.
Demand variables are labeled below the receivers $n_9$--$n_{13}$ demanding them.
The edges $e_{1,2}$, $e_{3,4}$, $e_{5,6}$, and $e_{7,8}$
are denoted by
$w$, $x$, $y$, and $z$,
respectively.
}
\label{fig:Vamos-network}
\end{figure}
%

%
\begin{theorem}
The achievable rate region for routing
for the \Vamos{} network
is the polytope in $\R^4$ whose faces lie on the $6$ planes:
\begin{align*}
r_a &= 0\\
r_b &= 0\\
r_c &= 0\\
r_d &= 0\\
2r_a + r_b + 2r_d &= 2\\
 r_a + r_b + r_c + 2r_d &= 2
\end{align*}
and whose vertices are the points
\begin{align*}
&(0, 0, 0, 0) &&(1, 0, 0, 0) &(0, 0, 0, 1)\\     
&(1, 0, 1, 0) &&(0, 2, 0, 0) &(0, 0, 2, 0)
\end{align*}
\label{thm:Vamos-routing}
\end{theorem}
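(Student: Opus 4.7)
The plan is to follow the template of the routing theorems already proved (Theorems~\ref{thm:2}, \ref{thm:4}, and~\ref{thm:6}): derive the linear inequalities that every routing solution must satisfy, enumerate the vertices of the resulting polytope, and then exhibit explicit routing codes achieving each vertex.

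First, I would examine the \Vamos{} network in Figure~\ref{fig:Vamos-network} to determine, for each internal edge $w$, $x$, $y$, $z$, which message components any valid routing code must carry across it in order to satisfy the demands at the receivers $n_9$--$n_{13}$. Since routing forbids any nontrivial combining of message components, the total length of routed material across any single edge is at most $n$. The key combinatorial step is to identify which pair of edges jointly witnesses each of the two non-trivial bounding planes. Concretely, one sums two per-edge capacity bounds to extract $2k_a + k_b + 2k_d \le 2n$ from a pair of edges that between them must route two copies of $a$, one of $b$, and two of $d$; and $k_a + k_b + k_c + 2k_d \le 2n$ from a pair that between them must route one copy each of $a,b,c$ and two of $d$. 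The coefficient $2$ on $k_a$ and $k_d$ reflects that components of those messages are demanded by receivers reachable only through two distinct internal edges.

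Second, combining these two bounds with the non-negativity constraints $r_a,r_b,r_c,r_d\ge 0$ gives a bounded polytope in $\R^4$. A direct enumeration -- intersecting four of the six constraints with equality and checking that the remaining two are then automatically satisfied -- confirms that the extreme points are precisely the $6$ listed. I would then construct explicit routing codes attaining each non-origin vertex: for $(1,0,0,0)$, $(0,0,0,1)$, and $(1,0,1,0)$, take $n=1$ and route the relevant message across the edges that its demanding receivers require; for $(0,2,0,0)$ and $(0,0,2,0)$, take $n=2$ and split the two components of $b$ (respectively $c$) across the available edges. The origin is trivial, and convexity of the achievable rate region closes the argument.

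The main obstacle will be the first step: reading off from the \Vamos{} network's topology and demand structure the precise pair of edges whose joint capacity $2n$ witnesses each weighted inequality. Unlike the single-edge cut $k_b+k_c \le n$ in the Butterfly or $k_a+k_b+k_c \le 2n$ in the Fano network, these bounds involve coefficient $2$ on certain messages, which requires tracking how a message must appear on two internal edges to reach distinct receivers rather than merely performing a single source-to-receiver cut argument. Once this bookkeeping is carried out correctly, the rest of the proof reduces to routine polytope enumeration and the construction of small explicit codes.
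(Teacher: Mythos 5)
Your plan matches the paper's proof: the paper derives $2k_a+k_b+2k_d\le 2n$ from the edge pair $x,y$ (observing that $y$ must route all of $a$ and all of $d$, hence so must $x$, and that $x,y$ together must route $b$) and $k_a+k_b+k_c+2k_d\le 2n$ from the pair $w,y$, then enumerates the six vertices and exhibits the routing codes. The only slip is that for $(0,2,0,0)$ and $(0,0,2,0)$ the paper takes $n=1$ with $k_b=2$ (routing $b_1$ on $w,x$ and $b_2$ on $y,z$) rather than $n=2$, but this is immaterial to the argument.
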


\begin{proof}

The first $4$ planes are trivial.

Now, notice that in a routing solution,
$y$ must carry all of $a$ and $d$ in order to meet
the demands at nodes $n_{10}$ and $n_{12}$, respectively.
Thus, $x$ must carry all of $a$ and $d$ too.
Also, $x$ and $y$ together must carry all of $b$
in order to meet the demand at node $n_9$.
In summary, $x$ and $y$ together must carry at least $2$ copies of $a$,
$2$ copies of $d$, and one copy of $b$.
This implies
$2k_a + k_b + 2k_d \le 2n$,
and therefore $2r_a + r_b + 2r_d \le 2$.

Similarly,
$w$ must carry all of $d$ in order to meet the demand at node $n_{12}$,
and $w$ and $y$ together must carry all of $b$ and $c$ in order to
meet the demands at nodes $n_{11}$ and~$n_{13}$.
Since $y$ must carry all of $a$ and $d$,
we conclude that
$w$ and $y$ together must carry at least 
one copy of $a$,
one copy of $b$,
one copy of $c$,
and
two copies of $d$.
This implies
$k_a + k_b + k_c + 2k_d \le 2n$,
and therefore $r_a + r_b + r_c + 2r_d \le 2$.

It is easy to check that the vertices of the polytope bounded
by the 6 planes listed in the theorem are the 6 vertices
listed in the theorem.
Each of the $6$ vertices can be achieved as follows:
$(0 0 0 0)$ trivially;
$(1 0 0 0)$ with $x=y=z=a$;
$(0 0 0 1)$ with $w = x = y = z = d$;
$(1 0 1 0)$ with $w = c$ and  $x = y = z = a$;
$(0 2 0 0)$ with $w = x = b_1$  and $y = z = b_2$;
$(0 0 2 0)$ with $w = x = c_1$ and $y = z = c_2$.

\end{proof}

%

The following theorem uses only Shannon-type information inequalities
to obtain a polytopal outer bound in $\R^4$ to the achievable rate region.

\begin{theorem}
The achievable rate region
for the \Vamos{} network
lies inside the polytope in $\R^4$ whose faces lie on the $9$ planes:
%
%
\begin{align*}
r_a &= 0        \\
r_b &= 0        \\
r_c &= 0        \\
r_d &= 0        \\
r_a &= 1       \\
r_d &= 1       \\
r_b + r_c &= 2 \\
r_a + r_b &= 2 \\
r_c + r_d &= 2   
\end{align*}
and whose vertices are the points:
\begin{align*}
&(0, 2, 0, 1) && 
(0, 2, 0, 0) && %
(1, 1, 1, 0) && 
(1, 1, 0, 0) \\ %
&(1, 1, 0, 1) && 
(1, 0, 0, 1) && %
(0, 0, 0, 1) && %
(0, 0, 0, 0) \\ %
&(1, 0, 0, 0) && %
(1, 0, 1, 1) && 
(0, 0, 1, 1) && %
(0, 1, 1, 1) \\ 
&(1, 0, 2, 0) && 
(0, 0, 2, 0) && %
(1, 1, 1, 1).    
\end{align*}

\label{thm:Vamos-Shannon}
\end{theorem}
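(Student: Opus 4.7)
The plan is to derive each of the nine bounding inequalities stated in the theorem as a Shannon-type entropy inequality, following exactly the template used in Theorems~\ref{thm:1},~\ref{thm:3}, and~\ref{thm:5}. I would consider a fractional solution over alphabet $\alphabet$ with source dimensions $k_a, k_b, k_c, k_d$ and edge dimension $n$, with each source modeled as an independent random variable whose components are uniform over $\alphabet$. Dividing each entropy inequality eventually obtained by $n$ produces the corresponding rate-form bound. The four trivial bounds $r_a, r_b, r_c, r_d \ge 0$ require no proof.

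For $r_a \le 1$ (and symmetrically $r_d \le 1$), I would locate an in-edge of dimension $n$ at the receiver demanding $a$ (respectively $d$) and apply the standard chain
\begin{align*}
k_a = H(a) = H(\text{edge} \mid \text{other sources and in-edges at that receiver}) \le H(\text{edge}) \le n,
\end{align*}
where the first equality uses the receiver's decoding function to make $a$ a deterministic function of the conditioning variables together with the chosen edge. This directly mirrors the derivations of~\eqref{eq:5}--\eqref{eq:6} in Theorem~\ref{thm:1} and~\eqref{eq:13}--\eqref{eq:14} in Theorem~\ref{thm:3}.

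The three pair bounds $r_a + r_b \le 2$, $r_b + r_c \le 2$, and $r_c + r_d \le 2$ are the main content. For each pair $\{p,q\}$, the plan is to chain together the decoding functions of two (possibly distinct) receivers to exhibit a pair of length-$n$ edges which, together with some subset of the remaining source messages, yields both $p$ and $q$; the bound $H(p,q) \le 2n$ then follows from the chain $H(p,q) = H(\text{two edges} \mid \text{auxiliary sources}) \le H(\text{two edges}) \le 2n$, in the same spirit as~\eqref{eq:7}--\eqref{eq:8} of Theorem~\ref{thm:1} and~\eqref{eq:15}--\eqref{eq:16} of Theorem~\ref{thm:3}. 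With the nine inequalities in hand, the fifteen vertices are recovered by the now-routine enumeration: for each quadruple drawn from the nine inequalities, test whether the unique simultaneous solution (if it exists) satisfies the remaining five.

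The main obstacle will be identifying the correct decoding chains for the three pair bounds. The V\'{a}mos matroid underlying this network has four dependent 4-circuits; three of them should translate directly into the three pair inequalities above, while the fourth (which would yield the analogous bound $r_a + r_d \le 2$) does not admit a Shannon-type derivation here. This asymmetry is precisely what distinguishes the V\'{a}mos matroid from the non-V\'{a}mos matroid, and it is exactly what motivates the use of non-Shannon-type information inequalities in \cite{Dougherty-Freiling-Zeger07-NetworksMatroidsNonShannon} to tighten the outer bound; accordingly, the present theorem is stated only as a Shannon-type outer bound rather than a characterization of the achievable rate region.
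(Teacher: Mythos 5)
Your plan matches the paper's proof exactly: the paper derives these same nine Shannon-type inequalities from the V\'{a}mos receiver demands (e.g.\ $k_a+k_b = H(a,b) \le H(x,z\,|\,c,d) \le 2n$ using $x,z,c,d \yields a,b$, and similarly for $w,z$ and $w,y$), divides by $n$, and enumerates the vertices of the resulting polytope. One small caveat on your closing remark: the bound $r_a+r_d\le 2$ you single out is already redundant given $r_a\le 1$ and $r_d\le 1$, so it is not the ``missing'' Shannon bound; the genuinely non-Shannon content of the V\'{a}mos network shows up instead as the Ingleton bound $r_a+2r_b+2r_c+r_d\le 5$ (Theorem~\ref{thm:Vamos-linear}) and the Zhang--Yeung-type bounds of Theorem~\ref{thm:Vamos-Zhang-Yeung}.
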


\begin{proof}
Consider a network solution over an alphabet $\alphabet$ and
denote the source message dimensions by $k_a$, $k_b$, $k_c$, and $k_d$,
and the edge dimensions by $n$.
Let each source be a random variable whose components are independent and uniformly distributed over $\alphabet$.
Then the solution must satisfy the following inequalities:
\begin{align}
                  k_a &\ge 0  \label{eq:v1}\\
                  k_b &\ge 0  \label{eq:v2}\\
                  k_c &\ge 0  \label{eq:v3}\\
                  k_d &\ge 0  \label{eq:v4}\\
                  k_a &= H(a) \le H(z|b,c,d)    \le n \label{eq:v5}\\ 
                  k_d &= H(d) \le H(y|a,b,c)    \le n \label{eq:v6}\\ 
            k_b + k_c &= H(b,c) \le H(w,z|a,d)\notag\\
                      &\le H(w,z) \le 2n \label{eq:v7}\\
            k_a + k_b &= H(a,b) \le H(x,z|c,d) \notag\\
                      &\le H(y,z) \le 2n \label{eq:v8}\\
            k_c + k_d &= H(c,d) \le H(w,y|a,b) \notag\\
                      &\le H(w,y) \le 2n. \label{eq:v9}
\end{align}
\eqref{eq:v1}--\eqref{eq:v4} are trivial;
\eqref{eq:v5} follows because $b,c,d,z \yields a$;
\eqref{eq:v6} follows because $a,b,c,y \yields d$;
\eqref{eq:v7} follows because $a,d,w,z \yields b,c$;
\eqref{eq:v8} follows because $x,z,c,d \yields a,b$;
\eqref{eq:v9} follows because $w,y,a,b \yields c,d$;
Dividing each inequality in \eqref{eq:v1}--\eqref{eq:v9} by $n$ gives the $9$ bounding hyperplanes
stated in the theorem.

Let $r_a = k_a/n$, $r_b = k_b/n$, $r_c= k_c/n$, and $r_d = k_d/n$,
and let $\polytope$ denote the polytope in $\R^4$ consisting of all
$4$-tuples $(r_a, r_b, r_c, r_d)$ satisfying \eqref{eq:1}--\eqref{eq:9}.
Then \eqref{eq:v1}--\eqref{eq:v4} and \eqref{eq:v8}--\eqref{eq:v9} 
ensure that $\polytope$ is bounded.
One can easily calculate that each point in $\R^4$ that 
satisfies some independent set of four of the inequalities
\eqref{eq:v1}--\eqref{eq:v9} with equality
and also satisfies the remaining five inequalities
must be one of the $15$ points stated in the theorem.
\end{proof}

For further bounds, we use the following result
from \cite{Dougherty-Freiling-Zeger11-nonShannon}:

Suppose that $A$, $B$, $C$, and $D$ are  random variables and
we have an information inequality of the form
\begin{align}
&a_1 I(A;B) \notag\\
&\ \ \le a_2 I(A;B|C)
+ a_3 I(A;C|B)
+ a_4 I(B;C|A)\notag\\
&\ \ + a_5 I(A;B|D)
+ a_6 I(A;D|B)
+ a_7 I(B;D|A)\notag\\
&\ \ + a_8 I(C;D)
+ a_9 I(C;D|A)
+ a_{10} I(C;D|B).
\label{eq:105}
\end{align}
Then we get the following bound on the \Vamos{} message and edge entropies:
\begin{align}
& (a_2+a_3+a_4) H(a) \notag\\
&\ \ + (a_2+a_3+a_8+a_9+a_{10}) H(b) \notag\\
&\ \ + (a_5+a_7+a_8+a_9+a_{10}) H(c) \notag\\
&\ \ + (a_5+a_6+a_7) H(d) \notag\\
&\ \ + (a_2-a_1-a_7) I(c;y) \notag\\
&\ \ + (a_4+a_7-a_{10}) I(b;x)\notag\\
&\le
(a_5+a_6+a_7+a_8+a_9+a_{10}) H(w) \notag\\
&\ \ + (a_2+a_3+a_4+a_7) H(x) \notag\\
&\ \ + (-a_1+a_2+a_5+a_9) H(y) \notag\\
&\ \ + (a_3+a_8+a_{10}) H(z).
 \label{eq:103}
\end{align}
And by the same argument, if (\ref{eq:105}) is a linear rank inequality
(for a particular characteristic), then (\ref{eq:103}) holds for
any linear (for that characteristic) fractional code for the
\Vamos{} network.

If the inequalities
\begin{align}
&a_2 \ge a_1 + a_7\notag\\
&a_4 + a_7 \ge a_{10}
\label{eq:106}
\end{align}
are satisfied,
then the inequality \eqref{eq:103} directly leads to a
\Vamos{} achievable rate region bound,
by neglecting the (nonnegative) terms involving $I(c;y)$ and $I(b;x)$.
Specifically, in this case,
by substituting
\begin{align*}
H(a)&=k_a\\
H(b)&=k_b\\
H(c)&=k_c\\
H(d)&=k_d\\
H(w)&=H(x)=H(y)=H(z)=n
\end{align*}
into \eqref{eq:103},
we obtain
\begin{align}
&k_a(a_2+a_3+a_4)\notag\\
&\ \
+ k_b(a_2+a_3+a_8+a_9+a_{10})
\notag\\
&\ \
+ k_c(a_5+a_7+a_8+a_9+a_{10})
\notag\\
&\ \
+ k_d(a_5+a_6+a_7)
\notag\\
&\le
n(
-a_1
+ 2a_2
+ 2a_3
+ a_4
+ 2a_5\notag\\
&\ \
+ a_6
+ 2a_7
+ 2a_8
+ 2a_9
+ 2a_{10}
).
 \label{eq:104}
\end{align}
%

%
\begin{theorem}
The achievable rate region
for linear coding
over any finite field alphabet
for the \Vamos{} network
is the polytope in $\R^4$ whose faces lie on the $10$ planes:
\begin{align*}
 r_a &= 0                    \\ 
 r_b &= 0                    \\ 
 r_c &= 0                    \\ 
 r_d &= 0                    \\ 
 r_a &= 1                   \\ 
 r_d &= 1                   \\ 
 r_b + r_c &= 2             \\ 
 r_a + r_b &= 2             \\ 
 r_c + r_d &= 2             \\ 
 r_a + 2r_b + 2r_c +r_d &= 5   
\end{align*}
and whose vertices are the points
\begin{align*}
&(0, 0, 2, 0) &&(0, 0, 1, 1) &&(1, 0, 1, 1) &&(1, 0, 0, 0)\\
&(0, 0, 0, 0) &&(0, 0, 0, 1) &&(1, 0, 0, 1) &&(1, 1, 0, 1)\\
&(1, 1, 0, 0) &&(0, 2, 0, 0) &&(1, 1, 1/2, 1) &&(1, 1/2, 1, 1)\\
%
&(0, 2, 0, 1) &&(1, 1, 1, 0) &&(0, 1, 1, 1) &&(1, 0, 2, 0).
\end{align*}
\label{thm:Vamos-linear}
\end{theorem}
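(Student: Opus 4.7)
The plan is to augment the nine Shannon-type bounds already established in Theorem~\ref{thm:Vamos-Shannon} by one additional linear-rank bound, enumerate the vertices of the resulting 10-facet polytope, and exhibit a linear code at each vertex.

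For the new facet $r_a + 2r_b + 2r_c + r_d \le 5$, I would invoke the Ingleton inequality $I(A;B) \le I(A;B|C) + I(A;B|D) + I(C;D)$, which is a linear rank inequality valid over every field. In the coefficient template~\eqref{eq:105} this is the choice $a_1 = a_2 = a_5 = a_8 = 1$ with the remaining $a_i = 0$. The side conditions~\eqref{eq:106} hold (indeed $a_2 = a_1 + a_7 = 1$ and $a_4 + a_7 = a_{10} = 0$), so inequality~\eqref{eq:104} applies directly with no mutual-information terms discarded. Substitution yields LHS coefficients $1, 2, 2, 1$ for $k_a, k_b, k_c, k_d$ and an RHS coefficient of $5$ for $n$, which is exactly $k_a + 2k_b + 2k_c + k_d \le 5n$.

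For the vertex enumeration, I would observe that among the fifteen vertices of the Shannon polytope of Theorem~\ref{thm:Vamos-Shannon}, the only one violating the new inequality is $(1,1,1,1)$, which evaluates to $6$; all other fourteen satisfy it, four of them lying on the new hyperplane. Slicing off the corner at $(1,1,1,1)$ with $r_a + 2r_b + 2r_c + r_d = 5$ therefore removes exactly this one vertex and introduces exactly the two new rational vertices $(1,1,1/2,1)$ and $(1,1/2,1,1)$ on the cut, producing the sixteen points listed. A routine check of simplicity at the affected neighboring vertices (via which four of the ten facets are active at each one) completes the combinatorial verification.

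Finally, for achievability I would exhibit a linear code over an arbitrary finite field at each of the sixteen vertices. The five routing vertices from Theorem~\ref{thm:Vamos-routing} are already handled, and the remaining integer-coordinate vertices can be attained by small linear codes analogous to those used for the Fano and non-Fano networks (simple XOR-type combinations that work over any field). The two new rational vertices call for dedicated constructions at $n = 2$ with message dimensions $(k_a,k_b,k_c,k_d) = (2,2,1,2)$ and $(2,1,2,2)$ respectively. The main obstacle will be writing down codes for these two points that simultaneously satisfy all five receiver demands of the \Vamos{} network; given the relatively rigid edge/demand structure this reduces to a finite linear-algebra verification, but it is where the construction is least automatic and most error-prone.
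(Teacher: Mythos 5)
Your derivation of the tenth facet matches the paper exactly: the paper also instantiates the template~\eqref{eq:105} with the Ingleton inequality (coefficients $a_1 = a_2 = a_5 = a_8 = 1$, rest zero), checks that the $I(c;y)$ and $I(b;x)$ coefficients vanish, and reads off $k_a + 2k_b + 2k_c + k_d \le 5n$ from~\eqref{eq:103}/\eqref{eq:104}. Your identification of the affected vertex and the two new rational vertices $(1,1,1/2,1)$, $(1,1/2,1,1)$ is also correct. However, there are two places where your proposal falls short of a complete proof.

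First, the vertex enumeration. Note that $(1,1,1,1)$ lies on five of the nine Shannon facets ($r_a=1$, $r_d=1$, $r_b+r_c=2$, $r_a+r_b=2$, $r_c+r_d=2$), so it is a non-simple vertex of the four-dimensional Shannon polytope. Your ``routine check of simplicity at the affected neighboring vertices'' is exactly the step that is not routine here; a single hyperplane cut through a degenerate vertex can in principle create more than two new vertices, and establishing that it creates exactly two requires actually examining the edge figure at $(1,1,1,1)$. The paper avoids this entirely by delegating the vertex enumeration to {\tt cddlib}. Your outline is plausible but the combinatorial verification is asserted rather than carried out.

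Second, and more seriously, the achievability direction is left open. The paper exhibits explicit linear codes for six of the sixteen vertices (the five routing-achievable ones are handled by Theorem~\ref{thm:Vamos-routing}, and the remainder by zeroing messages), including the two fractional vertices at $n=2$. You correctly identify $(1,1,1/2,1)$ and $(1,1/2,1,1)$ as requiring dedicated $n=2$ constructions with message dimensions $(2,2,1,2)$ and $(2,1,2,2)$, and you correctly flag this as the hard part, but you do not actually produce the codes. Without them, the theorem's ``equals'' is only proven as ``is contained in''. This is a genuine gap; the non-trivial content of the achievability half of the theorem is precisely these two codes, which in the paper take the form of explicit assignments to $w$, $x$, $y$, $z$ that must be checked against all five receiver demands of the V\'amos network.
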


\begin{proof}
The first nine bounding planes come from Theorem~\ref{thm:Vamos-Shannon}.
The tenth bounding plane is shown by letting \eqref{eq:105} be the
Ingleton inequality~\cite{Ingleton}, which can be written in the form
\begin{align*}
I(A;B)\le I(A;B|C) + I(A;B|D) + I(C;D)
\end{align*}
and which is a linear rank inequality for all characteristics,
to get the \Vamos{} linear rate region bound
\begin{align*}
H(a) + 2H(b) + 2H(c) + H(d) \le
2H(w) + H(x) + H(y) + H(z)
\end{align*}
from \eqref{eq:103}.

The proof that the extreme points of the polytope bounded by these planes
are the 16 points listed above is left as an exercise for the reader's
computer (we used {\tt cddlib}~\cite{cddlib}).

Here are linear codes over an arbitrary field) achieving six of the
extreme points:
\begin{align*}
(1,1,1,0)&\!:\ \ n=1, \\
           & w=a+c\\
           & x=a\\
           & y=z=a+b\\
(0,1,1,1)&\!:\ \ n=1, \\
           & w=x=b+d\\
           & y=b+c+d\\ 
           & z=c\\
(1,0,2,0)&\!:\ \ n=1, \\
           & w=c_1\\
           & x=a\\
           & y=z=a+c_2\\
(0,2,0,1)&\!:\ \ n=1, \\
           & w=x=b_1+d\\
           & y=z=b_2+d\\
(1,1,1/2,1)&\!:\ \ n=2, \\
           & w=(b_2+d_1,c+d_2)\\
           & x=(a_1+d_1,a_2+b_2+c+d_2)\\
           & y=(a_1+b_1+d_1,a_2+d_2)\\
           & z=(a_1+b_1,a_2+c)\\
(1,1/2,1,1)&\!:\ \ n=2, \\
           & w=(c_1+d_1,b+d_2)\\
           & x=(a_1+c_1+d_1,a_2+d_2)\\
           & y=(a_1+d_1,a_2+b+c_2+d_2)\\
           & z=(a_1+c_2,a_2+b)
\end{align*}
The remaining $10$ points are achieved by fixing certain messages to be $0$.
\end{proof}


The following theorem uses the non-Shannon-type Zhang-Yeung information inequality
to obtain an additional outer bound in $\R^4$ to the achievable rate region.

\begin{theorem}
The achievable rate region
for non-linear coding
for the \Vamos{} network is bounded by the inequalities:
\begin{align}
4r_a + 4r_b + 2r_c +  r_d &\le 10  \label{eq:vzy1}  \\
2r_a + 2r_b + 4r_c + 4r_d &\le 11  \label{eq:vzy2}  \\
 r_a + 2r_b + 4r_c + 5r_d &\le 11  \label{eq:vzy3}  \\
5r_a + 6r_b + 6r_c + 5r_d &\le 20. \label{eq:vzy4}    
\end{align}
\label{thm:Vamos-Zhang-Yeung}
\end{theorem}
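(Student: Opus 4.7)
The plan is to apply the Zhang–Yeung non-Shannon-type information inequality four times, with different identifications of its random variables with variables of the V\'{a}mos network, each time invoking the framework \eqref{eq:105}--\eqref{eq:104} (or an analog thereof) to produce one of the four stated bounds \eqref{eq:vzy1}--\eqref{eq:vzy4}.

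Concretely, the Zhang–Yeung inequality can be written in the form
$$I(A;B) \le 2 I(A;B|C) + I(A;B|D) + I(C;D) + I(A;C|B) + I(B;C|A),$$
which is exactly an instance of \eqref{eq:105} with coefficients $(a_1,\dots,a_{10}) = (1, 2, 1, 1, 1, 0, 0, 1, 0, 0)$. The non-negativity conditions \eqref{eq:106} are immediate ($a_2 = 2 \ge 1 = a_1+a_7$ and $a_4+a_7 = 1 \ge 0 = a_{10}$), so the simplified bound \eqref{eq:104} applies directly; substituting these coefficients yields $4 k_a + 4 k_b + 2 k_c + k_d \le 10 n$, which is precisely \eqref{eq:vzy1} after dividing by $n$. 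For each of \eqref{eq:vzy2}, \eqref{eq:vzy3}, and \eqref{eq:vzy4}, I would rederive an analog of \eqref{eq:103} corresponding to a different identification of $(A,B,C,D)$ with a 4-tuple of V\'{a}mos-network random variables (possibly mixing messages and edge symbols), obtaining in each case an inequality whose normalized form matches the target.

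The main obstacle is that the derivation of \eqref{eq:103} from \eqref{eq:105} exploits specific Shannon implications in the V\'{a}mos network—namely those that produce the nuisance terms $I(c;y)$ and $I(b;x)$—and that changing the assignment of $(A,B,C,D)$ produces different nuisance terms whose non-negativity conditions (analogous to \eqref{eq:106}) must be verified afresh. Because the V\'{a}mos network admits only partial symmetry, it is not a priori clear that the four coefficient patterns $(4,4,2,1)$, $(2,2,4,4)$, $(1,2,4,5)$, and $(5,6,6,5)$ are each attainable from a single Zhang–Yeung instance. In particular, the asymmetric pattern of \eqref{eq:vzy3} and the larger constants in \eqref{eq:vzy4} suggest that those bounds may require either a modified reduction that keeps certain nuisance terms and controls them using the Shannon-type bounds from Theorem~\ref{thm:Vamos-Shannon}, or an additive combination of multiple Zhang–Yeung applications.
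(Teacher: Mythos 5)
Your derivation of \eqref{eq:vzy1} is complete and matches the paper exactly: the Zhang--Yeung inequality, plugged into \eqref{eq:105} with $(a_1,\dots,a_{10})=(1,2,1,1,1,0,0,1,0,0)$, satisfies \eqref{eq:106}, so \eqref{eq:104} applies and gives $4k_a+4k_b+2k_c+k_d\le 10n$.

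For the remaining three inequalities, however, your proposal outlines a strategy but does not carry it out, and that is where the substance lies. The paper uses exactly one further instantiation of the framework, namely the Zhang--Yeung inequality with the roles of $C$ and $D$ interchanged, i.e.\ coefficients $(1,1,0,0,2,1,1,1,0,0)$. Substituting into \eqref{eq:103} produces
$$H(a) + 2H(b) + 4H(c) + 4H(d) - I(c;y) + I(b;x) \le 5H(w) + 2H(x) + 2H(y) + H(z),$$
and here the obstacle you anticipated actually materializes: the coefficient $a_2 - a_1 - a_7 = -1$ of $I(c;y)$ is negative, so \eqref{eq:106} fails and \eqref{eq:104} is not applicable. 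The paper then neutralizes the negative $I(c;y)$ term in three different ways, one for each remaining target: adding the unnormalized form of the first Zhang--Yeung bound (which carries $+I(c;y)$) cancels the terms and yields \eqref{eq:vzy4}; adding the auxiliary bound $H(a) + I(c;y) \le H(y)$, which holds in the V\'amos network because $b,c,d,y \yields a$ and the messages are mutually independent, yields \eqref{eq:vzy2}; and adding $H(d) + I(c;y) \le H(y)$, from $a,b,c,y \yields d$, yields \eqref{eq:vzy3}. Your instinct that the fix must either combine multiple Zhang--Yeung applications or control a retained nuisance term with a Shannon-type inequality is exactly right, but the proposal neither identifies the additive combination for \eqref{eq:vzy4} nor supplies the two specific auxiliary bounds needed for \eqref{eq:vzy2} and \eqref{eq:vzy3} --- these do not appear among the facets in Theorem~\ref{thm:Vamos-Shannon} and must be derived separately from the V\'amos decoding constraints. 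As written, \eqref{eq:vzy2}--\eqref{eq:vzy4} remain unproved.
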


\begin{proof}
If we let \eqref{eq:105} be the Zhang-Yeung inequality~\cite{Zhang-Yeung-July98},
which can be written in the form
\begin{align}
I(A;B)\le 2I(A;B|C) + I(A;C|B) + I(B;C|A) + I(A;B|D) + I(C;D), \label{eq:ZY}
\end{align}
then we get the \Vamos{} network bound
\begin{align}
4H(a) + 4H(b) + 2H(c) + H(d) + I(c;y) \le
2H(w) + 4H(x) + 2H(y) + 2H(z)
\label{eq:avzy1}
\end{align}
from \eqref{eq:103}.
This immediately gives the inequality \eqref{eq:vzy1} (we can simply
discard the $I(c;y)$ term).

Also, we can let \eqref{eq:105} be \eqref{eq:ZY} with variables $C$ and $D$ interchanged; 
then the result from~\eqref{eq:103} is
\begin{align}
H(a) + 2H(b) + 4H(c) + 4H(d) - I(c;y) + I(b;y) \le
5H(w) + 2H(x) + 2H(y) + H(z).
\label{eq:avzy2}
\end{align}
This does not directly give a rate region bound, because the
term $-I(c;y)$ cannot be simply discarded.  However, if we
add \eqref{eq:avzy1} and \eqref{eq:avzy2}, we get an inequality that
yields \eqref{eq:vzy4}; if we add to \eqref{eq:avzy2} the inequality
$H(a) + I(c;y) \le H(y)$ (which, as noted in
\cite{Dougherty-Freiling-Zeger11-nonShannon}, holds in the
\Vamos{} network because $b,c,d,y \yields a$), we get
\eqref{eq:vzy2}; and if we add to \eqref{eq:avzy2} the inequality
$H(d) + I(c;y) \le H(y)$ (which, as noted in
\cite{Dougherty-Freiling-Zeger11-nonShannon}, holds in the
\Vamos{} network because $a,b,c,y \yields d$), we get
\eqref{eq:vzy3}.
\end{proof}


Many additional non-Shannon-type information inequalities
are given in~\cite{Dougherty-Freiling-Zeger11-nonShannon}.
These can be used as above to give additional bounds on the
achievable rate region
for non-linear coding
for the \Vamos{} network.
In fact, the inequalities from~\cite{Dougherty-Freiling-Zeger11-nonShannon}
using at most four copy variables with at most three copy steps
yield 158 independent constraints on this achievable rate region.
(Note: inequalities \eqref{eq:vzy1}--\eqref{eq:vzy4} are superseded
by these new inequalities.)  One of these is used
in~\cite{Dougherty-Freiling-Zeger11-nonShannon} to show that the
uniform coding capacity of the \Vamos{} network is at most $19/21$.

Since there are infinitely many information inequalities on four
random variables~\cite{Matus}, it is quite possible that the
achievable rate region
for non-linear coding
for the \Vamos{} network is not a polytope.  On the other hand,
this rate region could be quite simple; to date, no fractional solution
is known for the \Vamos{} network which lies outside the achievable
rate region for linear coding.

\newpage
\section{New Linear Rank Inequalities from Networks}

We now give a new method for producing bounds on achievable rate regions
for linear coding.  Unlike the previous method using matrix algebra,
this method actually produces explicit linear rank inequalities
(perhaps only true for some characteristics) which directly imply
the bounds in question.  However, it is not clear yet that this new
method can produce all results obtained from the matrix algebra method.

In particular, we produce an explicit linear rank inequality valid
only for odd-characteristic fields, and another linear rank inequality valid
only for even-characteristic fields.  Such inequalities have also been
produced by Blasiak, Kleinberg, and Lubetzky~\cite{Blasiak-Kleinberg-Lubetzky11}
(also by use of the Fano and non-Fano matroids), but those inequalities
do not directly give bounds for the networks here.

We start by giving some basic results in linear algebra.

If $A$ is a subspace of a finite-dimensional vector space $V$,
then we denote the codimension of $A$ in $V$ by
$\Codim{V}{A} = \Dim{V} - \Dim{A}.$ 

\begin{lemma}
For any subspaces $A_1, \dots, A_m$ of finite-dimensional vector space $V$,
$$\Codim{V}{\displaystyle\bigcap_{i=1}^m A_i} \le \sum_{i=1}^m \Codim{V}{A_i}.$$
\label{lem:2}
\end{lemma}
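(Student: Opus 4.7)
The plan is to prove this by a short induction on $m$, bootstrapping from the standard $m=2$ case which is an immediate consequence of the dimension formula $\Dim{A+B} = \Dim{A} + \Dim{B} - \Dim{A\cap B}$.

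For the base case $m=1$ the inequality is an equality, so it is trivial. For $m=2$, I would write
\begin{align*}
\Dim{A_1 \cap A_2} &= \Dim{A_1} + \Dim{A_2} - \Dim{A_1 + A_2}\\
&\ge \Dim{A_1} + \Dim{A_2} - \Dim{V},
\end{align*}
and then subtract both sides from $\Dim{V}$ to obtain
$$\Codim{V}{A_1 \cap A_2} \le \Codim{V}{A_1} + \Codim{V}{A_2}.$$

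For the inductive step, assuming the inequality holds for $m-1$ subspaces, I would apply the $m=2$ case to the pair $\bigcap_{i=1}^{m-1} A_i$ and $A_m$, then invoke the induction hypothesis on the first term:
\begin{align*}
\Codim{V}{\bigcap_{i=1}^{m} A_i} &\le \Codim{V}{\bigcap_{i=1}^{m-1} A_i} + \Codim{V}{A_m}\\
&\le \sum_{i=1}^{m-1} \Codim{V}{A_i} + \Codim{V}{A_m}\\
&= \sum_{i=1}^{m} \Codim{V}{A_i}.
\end{align*}

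There is no real obstacle here; the only thing worth remarking is that one could alternatively give a one-shot proof by considering the linear map $\varphi\colon V \to \bigoplus_{i=1}^m V/A_i$ defined by $v \mapsto (v+A_1, \dots, v+A_m)$, whose kernel is $\bigcap_i A_i$, so that the codimension of the intersection equals $\Rank{\varphi}$ which is bounded above by $\sum_i \Dim{V/A_i} = \sum_i \Codim{V}{A_i}$. Either route is routine, so the induction above suffices.
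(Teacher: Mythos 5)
Your proof is correct. The paper does not actually give a proof of this lemma; it is stated without proof at the start of the ``basic results in linear algebra'' block, evidently regarded as standard. Your induction from the $m=2$ dimension-formula case is the usual textbook argument, and the alternative one-shot argument via the map $v \mapsto (v+A_1, \dots, v+A_m)$ is also clean and arguably more illuminating (it identifies $\Codim{V}{\bigcap_i A_i}$ with the rank of that map, making the inequality transparent). Either way, nothing further is needed.
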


\begin{lemma}
Let $A$ and $B$ be finite-dimensional vector spaces,
let $f:A\to B$ be a linear function,
and let $B'$ be a subspace of $B$.
Then
$\Codim{A}{f^{-1}(B')} \le \Codim{B}{B'}$.
\label{lem:3}
\end{lemma}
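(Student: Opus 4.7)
The plan is to reduce the statement to the rank–nullity theorem by composing $f$ with the quotient projection $\pi: B \to B/B'$. The key observation is that the preimage $f^{-1}(B')$ coincides with the kernel of the composite linear map $\pi \circ f: A \to B/B'$, because for $a \in A$ we have $f(a) \in B'$ if and only if $\pi(f(a)) = 0$.

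Given that identification, I would apply rank–nullity to $\pi \circ f$ to obtain
\begin{align*}
\Dim{A} = \Dim{\ker(\pi \circ f)} + \Dim{\range(\pi \circ f)} = \Dim{f^{-1}(B')} + \Dim{\range(\pi \circ f)}.
\end{align*}
Since $\range(\pi \circ f)$ is a subspace of the quotient $B/B'$, its dimension is at most $\Dim{B/B'} = \Codim{B}{B'}$. Rearranging the displayed equation then yields
\begin{align*}
\Codim{A}{f^{-1}(B')} = \Dim{A} - \Dim{f^{-1}(B')} = \Dim{\range(\pi \circ f)} \le \Codim{B}{B'},
\end{align*}
which is exactly the desired inequality.

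There is essentially no obstacle here; the only care needed is to verify that the preimage kernel identification is correct and that all the spaces involved really are finite-dimensional so rank–nullity applies (both are immediate from the hypotheses). Alternatively, one could avoid quotient spaces entirely by choosing a complement $C$ of $B'$ in $B$, composing $f$ with the projection $B \to C$ along $B'$, and applying rank–nullity directly to that composite; this version is equally short but is slightly less canonical since it depends on the choice of complement.
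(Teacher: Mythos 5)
Your proof is correct. The route differs slightly from the paper's: you work on the codomain side, quotienting out $B'$ and applying rank--nullity to $\pi \circ f : A \to B/B'$, whose kernel is exactly $f^{-1}(B')$. The paper instead works on the domain side: it chooses a complement $T$ of $S = f^{-1}(B')$ in $A$, restricts $f$ to $T$ to get a map $g$, checks that $g$ is injective (its kernel lies in $S \cap T = \{0\}$) and that $g(T) \cap B' = \{0\}$, and concludes $\Codim{A}{S} = \Dim{T} = \Dim{g(T)} \le \Codim{B}{B'}$ because $g(T)$ and $B'$ are independent in $B$. The two arguments are essentially dual instantiations of the same rank--nullity principle. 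Your quotient version is cleaner and choice-free; the paper's version is more concrete and avoids quotient spaces entirely, at the cost of an auxiliary choice of complement (which you also noted as an alternative, though you suggested complementing $B'$ in $B$ rather than $f^{-1}(B')$ in $A$ as the paper does). Either is a perfectly valid proof.
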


\begin{proof}
Let $S = f^{-1}(B')$
and let $T$ be a subspace of $A$ such that $S+T=A$
and $S\cap T = \{0\}$.
Let $g:T\to B$ be a linear function such that $g=f$ on $T$.
Then we have
\begin{align*}
\Codim{A}{S}
&= \Dim{T} & \Comment{$S+T=A$ and $S\cap T=\{0\}$}\\
&= \Dim{g(T)} + \Nullity{g} \\
&= \Dim{g(T)} & \Comment{$g^{-1}(\{0\}) = \{0\}$}\\
&\le \Codim{B}{B'}. & \Comment{$B' \cap g(T)  = \{0\}$}
\end{align*}
\end{proof}

\begin{lemma}
Let $A_1, \dots, A_k, B$ be subspaces of a finite-dimensional vector space $V$.
There exist linear functions 
$f_i:B \to A_i$ 
(for $i=1, \dots, k$)
such that 
$f_1 + \dots + f_k = I$ 
on a subspace of $B$ 
of codimension $H(B|A_1, \dots, A_k)$ in $B$.
\label{lem:1}
\end{lemma}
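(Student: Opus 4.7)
The plan is to reduce the problem to a purely linear-algebraic construction via the standard entropy--dimension dictionary for subspaces. Writing $A = A_1 + \cdots + A_k$, the chain-rule identity $H(B|A_1,\dots,A_k) = H(B, A_1, \dots, A_k) - H(A_1, \dots, A_k)$ translates to $\dim(B+A) - \dim(A)$, which by the modular law equals $\dim(B) - \dim(B \cap A) = \Codim{B}{B \cap A}$. So it suffices to produce linear maps $f_i : B \to A_i$ such that $f_1 + \cdots + f_k$ agrees with the identity on the subspace $B \cap A$, since that subspace already has the required codimension in $B$.

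Next I would observe that by the very definition of the sum $A$, every $b \in B \cap A$ admits (non-uniquely) a decomposition $b = a_1 + \cdots + a_k$ with $a_i \in A_i$; the only issue is to make the choice depend linearly on $b$. For this, pick any basis $b_1, \dots, b_m$ of $B \cap A$, fix for each $j$ one decomposition $b_j = a_{1,j} + \cdots + a_{k,j}$, and let $g_i : B \cap A \to A_i$ be the unique linear extension of $b_j \mapsto a_{i,j}$. By construction, $g_1 + \cdots + g_k$ is the identity map on $B \cap A$.

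To promote each $g_i$ to a map defined on all of $B$, I would choose a linear complement $C$ of $B \cap A$ in $B$ and set $f_i = g_i$ on $B \cap A$ and $f_i = 0$ on $C$, extending linearly. Each $f_i$ is then a linear function $B \to A_i$, and $f_1 + \cdots + f_k$ acts as the identity on $B \cap A$, which is a subspace of $B$ of codimension exactly $H(B|A_1,\dots,A_k)$.

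The only real obstacle is keeping the entropy--dimension correspondence straight; once $H(B|A_1,\dots,A_k)$ has been correctly identified with $\Codim{B}{B \cap A}$, the construction is forced by the definition of the subspace sum and needs nothing deeper than the choice of a basis for $B \cap A$ and a linear complement of it in $B$.
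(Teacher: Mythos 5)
Your proof is correct and follows essentially the same route as the paper: identify the subspace as $B \cap (A_1 + \cdots + A_k)$, choose componentwise preimages on a basis to define the $g_i$, and extend linearly to all of $B$. The only cosmetic difference is that you spell out the entropy--dimension bookkeeping ($H(B|A_1,\dots,A_k) = \dim B - \dim(B\cap A)$) and specify the extension via a complement, whereas the paper states these more tersely.
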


\begin{proof}
The subspace is $W=(A_1 + \dots + A_k) \cap B$.
For each $w_j$ in a basis for $W$,
choose $x_{i,j}\in A_i$ for $i=1, \dots, k$ 
such that $w_j=x_{1,j}+\dots+x_{k,j}$.
Define linear maps $g_i:W \to A_i$
for $i=1, \dots, k$ so that $g_i(w_j) = x_{i,j}$
for all $i$ and $j$; then extend each $g_i$ arbitrarily 
to a linear map $f_i:B \to A_i$.
We have
$H(B|A_1, \dots, A_k) = \Dim{B} - \Dim{B \cap (A_1 + \dots + A_k)}
                      = \Dim{B} - \Dim{W}$.
\end{proof}

\begin{lemma}
Let $A,B,C$ be subspaces of a finite-dimensional vector space $V$,
and let
$f:A\to B$ and $g:A\to C$ be linear functions 
such that $f+g=0$ on $A$.
Then $f=g=0$ on a subspace of $A$ of codimension at most $I(B;C)$ in $A$.
\label{lem:2a}
\end{lemma}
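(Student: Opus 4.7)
The plan is to exploit the fact that the relation $f+g=0$ forces the image of $f$ to lie in the intersection $B\cap C$, and then to use the kernel of $f$ as the required subspace.

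First I would observe that since $B$ and $C$ are both subspaces of the same ambient space $V$, the equation $f+g=0$ on $A$ makes sense in $V$: for every $a\in A$, we have $f(a)=-g(a)$. Since $f(a)\in B$ by definition and $-g(a)\in C$, this common value lies in $B\cap C$. Therefore the image of $f$ is contained in $B\cap C$.

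Next I would take $W=\ker(f)\subset A$. For any $a\in W$ we have $f(a)=0$, and then $g(a)=-f(a)=0$ as well, so both $f$ and $g$ vanish on $W$. The codimension of $W$ in $A$ equals the dimension of the image of $f$, which by the previous paragraph is at most $\Dim{B\cap C}$.

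Finally I would translate this into the information-theoretic notation used in the paper. With the identification of subspaces with random variables (where the entropy of a subspace is its dimension), we have $I(B;C) = \Dim{B}+\Dim{C}-\Dim{B+C} = \Dim{B\cap C}$. Thus $\Codim{A}{W}\le I(B;C)$, as required. There is no significant obstacle here; the whole lemma rests on the elementary observation that the image of $f$ is pinned inside $B\cap C$.
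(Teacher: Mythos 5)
Your proof is correct and matches the paper's own argument essentially verbatim: both note that $f+g=0$ forces the image of $f$ into $B\cap C$, take $\ker(f)$ as the desired subspace, and use $\Rank{f}\le\Dim{B\cap C}=I(B;C)$ via rank-nullity.
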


\begin{proof}
For all $u\in A$, $g(u)\in B$ so $f(u) = -g(u)\in B$
and therefore $f$ maps $A$ into $B\cap C$.
Thus,
$\Dim{A} - \Nullity{f} = \Rank{f} \le \Dim{B\cap C} = I(B;C)$,
so the kernel of $f$ has codimension at most $I(B;C)$ in $A$.
\end{proof}

\begin{lemma}
Let $A,B_1,\dots, B_k$ be subspaces of a finite-dimensional vector space $V$,
and let
$f_i:A\to B_i$ be linear functions 
such that $f_1+\dots+f_k=0$ on $A$.
Then $f_1=\dots=f_k=0$ on a subspace of $A$ of codimension at most 
$H(B_1)+ \dots + H(B_k) - H(B_1,\dots,B_k)$
in $A$.
\label{lem:2b}
\end{lemma}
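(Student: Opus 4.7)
The plan is to generalize the $k=2$ argument of Lemma~\ref{lem:2a} by bundling all the $f_i$ into a single linear map into the direct product of the $B_i$. Specifically, I would define
\begin{align*}
F: A \to B_1 \oplus \cdots \oplus B_k, \qquad F(u) = (f_1(u), \dots, f_k(u)).
\end{align*}
The kernel of $F$ is exactly $\{u \in A : f_1(u) = \cdots = f_k(u) = 0\}$, so the desired subspace of $A$ on which all the $f_i$ vanish is $\ker F$, and the quantity I need to bound from above is $\Codim{A}{\ker F} = \Rank{F}$.

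The hypothesis $f_1 + \cdots + f_k = 0$ translates into the statement that the image of $F$ lies in the kernel of the summation map
\begin{align*}
\sigma : B_1 \oplus \cdots \oplus B_k \to V, \qquad \sigma(b_1, \dots, b_k) = b_1 + \cdots + b_k.
\end{align*}
Let $K = \ker \sigma$. Since the image of $\sigma$ is $B_1 + \cdots + B_k$, rank-nullity applied to $\sigma$ gives
\begin{align*}
\Dim{K} = \sum_{i=1}^k \Dim{B_i} - \Dim{B_1 + \cdots + B_k} = \sum_{i=1}^k H(B_i) - H(B_1,\dots,B_k),
\end{align*}
using the identification of dimension with entropy in the linear setting (as in the other lemmas in this section).

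Putting the two observations together, $\Rank{F} \le \Dim{\mathrm{image}(F)} \le \Dim{K}$, which yields the desired codimension bound on $\ker F$. There is no real obstacle here: the only conceptual step is deciding to package the $f_i$ into a single map into the product and recognize the constraint as membership in $\ker \sigma$; after that, everything follows from a single application of rank-nullity. This recovers Lemma~\ref{lem:2a} as the case $k=2$, since $\Dim{K} = \Dim{B} + \Dim{C} - \Dim{B+C} = I(B;C)$ when $k=2$.
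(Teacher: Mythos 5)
Your proof is correct, and it takes a genuinely different route from the paper's. The paper proves the lemma by induction on $k$: it applies Lemma~\ref{lem:2a} with $B = B_k$ and $C = B_1 + \dots + B_{k-1}$ to peel off $f_k$, landing on a subspace $S$ with $\Codim{A}{S} \le H(B_1,\dots,B_{k-1}) + H(B_k) - H(B_1,\dots,B_k)$, then invokes the inductive hypothesis inside $S$ and adds the two codimension bounds. Your argument instead packages all the $f_i$ at once into $F : A \to B_1 \oplus \cdots \oplus B_k$, notes that the constraint $\sum f_i = 0$ forces $\mathrm{image}(F) \subseteq \ker\sigma$ where $\sigma$ is the summation map, and reads off $\Codim{A}{\ker F} = \Rank{F} \le \Dim{\ker\sigma} = \sum_i \Dim{B_i} - \Dim{B_1 + \cdots + B_k}$ from a single rank--nullity computation. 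What your version buys is self-containedness and transparency: it does not rely on Lemma~\ref{lem:2a} at all (that lemma falls out as the special case $k=2$, as you note), it produces the bound in one step without any telescoping of codimensions, and it makes the combinatorial meaning of the quantity $\sum_i H(B_i) - H(B_1,\dots,B_k)$ manifest as the dimension of a concrete kernel. The paper's inductive route is arguably more in keeping with the local flow of the section, which sets up Lemma~\ref{lem:2a} first and then builds on it, and it avoids introducing the external direct sum; but there is no mathematical gap in either approach, and yours is the cleaner standalone proof.
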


\begin{proof}
%
Use induction on $k$.
The claim is trivially true for $k=1$,
and is true for $k=2$ by Lemma~\ref{lem:2a}.
Let us assume it is true up to $k-1$ for $k\ge 3$.
Apply Lemma~\ref{lem:2a} with 
$B=B_k$, 
$C=B_1+\dots+B_{k-1}$,
$f=f_k$,
and
$g=f_1+\dots+f_{k-1}$
to get
$f_1+\dots+f_{k-1}=f_k=0$ on a subspace $S$ of $A$ satisfying
$$\Codim{A}{S} \le H(B_1,\dots,B_{k-1}) + H(B_k) - H(B_1,\dots,B_k).$$
By the induction hypothesis,
$f_1=\dots=f_{k-1}=0$ on a subspace $S'$ of $S$ satisfying
$$\Codim{S}{S'} \le H(B_1) + \dots + H(B_{k-1}) - H(B_1,\dots,B_{k-1}).$$
Adding these two inequalities gives us the desired result for subspace $S'$.
\end{proof}

\newpage
\subsection{A Linear Rank Inequality from the Fano Network}
\label{sec:FanoProof2}

\begin{theorem}
Let $A,B,C,D,W,X,Y,Z$ be subspaces of a finite-dimensional vector space $V$
over a scalar field of odd characteristic.
Then, the following linear rank inequality holds:
\begin{align}
&2H(A) + H(B) + 2H(C) \notag \\
&\ \ \le H(W) + H(X) + H(Y) + H(Z) \notag \\
&\ \ \  + 2H(A|Z,Y) + H(B|X,Z) + 2H(C|A,X) \notag \\
&\ \ \  + 3H(X|W,Y) + 3H(Z|W,C) \notag \\
&\ \ \  + 5H(W|A,B) + 5H(Y|B,C) \notag \\
&\ \ \  + 5( H(A) + H(B) + H(C) - H(A,B,C) ).
\label{oddLRI}
\end{align}
\label{thm:LinearRankInequality1}
\end{theorem}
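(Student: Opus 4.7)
The plan is to translate the matrix-algebra proof of inequality~\eqref{eq:B212} from Section~\ref{sec:FanoProof} into a proof in the abstract subspace setting, using the lemmas developed just before this theorem. In the network-coding proof, the bound~\eqref{eq:B212} is derived from a sequence of ``yields'' relations using the specific edge functions of the Fano network; in the abstract setting, each such yield relation becomes a small-codimension identity between linear maps extracted via Lemma~\ref{lem:1}, and each codimension loss gets tracked through the final inequality.

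First, I would invoke Lemma~\ref{lem:1} seven times, once for each of the decoding/encoding relations that the Fano network imposes: $Z,Y \yields A$, $X,Z \yields B$, $A,X \yields C$, $W,Y \yields X$, $W,C \yields Z$, $A,B \yields W$, and $B,C \yields Y$. For instance, from $Z,Y \yields A$ I obtain linear maps $f_{AZ}:A\to Z$ and $f_{AY}:A\to Y$ with $f_{AZ}+f_{AY}$ equal to the identity on a subspace of $A$ of codimension $H(A|Z,Y)$; similarly for the other six relations. These abstract linear maps play the role of the submatrix blocks $R_i,S_i,T_i,U_i$ from Section~\ref{sec:FanoProof} and will be composed in the same pattern used there.

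Second, I would compose these maps to produce abstract analogs of the identities~\eqref{eq:R6.1}, \eqref{eq:R6.2}, and~\eqref{eq:R6.3}. The critical use of odd characteristic enters exactly as in Section~\ref{sec:FanoProof}: after chaining the maps for $A,B \yields W$, $A,X \yields C$, and $W,C \yields Z$, a subspace element $a$ can be replaced by $a'= a + 2^{-1}\rho(b)$ for an appropriate composition $\rho$, and the identity $(I-\sigma)+(I+\sigma) = 2I$ (with $2$ invertible) lets me recover a missing component, as in going from~\eqref{eq:R6.2} to~\eqref{eq:R6.3}. Each composition is pushed back to $A$ or $B$ via Lemma~\ref{lem:3}, and the surviving codimension is the sum, via Lemma~\ref{lem:2}, of the codimensions from the relations invoked. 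At the same time, places where two independently-built linear maps into a common target must agree are handled by Lemma~\ref{lem:2a} or Lemma~\ref{lem:2b}; each such identification costs a mutual-information term, which is where the fifth summand $5(H(A)+H(B)+H(C)-H(A,B,C))$ on the RHS is born.

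Finally, the chained maps yield a linear surjection from a codimension-controlled subspace of $A\oplus B\oplus C$ into a target lying in $W \oplus X \oplus Y \oplus Z$ (with multiplicities dictated by how many copies of each space appear in the chain), and comparing dimensions gives the inequality with RHS equal to $H(W)+H(X)+H(Y)+H(Z)$ plus the accumulated codimension terms. The integer coefficients $2,1,2$ on $H(A|Z,Y),H(B|X,Z),H(C|A,X)$; $3,3$ on $H(X|W,Y),H(Z|W,C)$; and $5,5$ on $H(W|A,B),H(Y|B,C)$ record precisely how many times each elementary relation is reused in the composition. The main obstacle I expect is bookkeeping: ensuring the coefficients come out exactly as stated in~\eqref{oddLRI} requires choreographing the applications of Lemmas~\ref{lem:2}, \ref{lem:3}, \ref{lem:1}, and~\ref{lem:2b} so that each relation is invoked the prescribed number of times and no codimension is double-counted. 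A secondary subtle point is reproducing the odd-characteristic manipulation purely at the level of abstract linear maps, which requires that the scalar $2^{-1}$ be applied as an honest automorphism of the relevant subspace; this is where the hypothesis on the characteristic is used.
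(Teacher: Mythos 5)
Your high-level intuition is right: the proof does use Lemma~\ref{lem:1} seven times, one invocation per Fano decoding/encoding relation, and the remaining lemmas are used to track codimension losses. You also correctly observe that the theorem corresponds to inequality~\eqref{eq:02} (i.e.~\eqref{eq:B212}, with coefficients $2,1,2$). But the mechanism you describe in the second and third paragraphs does not actually match what is needed, and the gap is more than bookkeeping.

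The decisive step is not a surjection into a subspace of $W\oplus X\oplus Y\oplus Z$; rather, the paper first applies Lemma~\ref{lem:2b} (with $k=3$) to each of the three triples of composed maps to obtain subspaces $\bar A\subseteq A$, $\bar B\subseteq B$, $\bar C\subseteq C$ on which the three decoding identities hold \emph{exactly} (one being an identity map, the other two being zero); the codimensions of $\bar A,\bar B,\bar C$ are controlled by quantities $\Delta_A,\Delta_B,\Delta_C$, each of which already carries one copy of $H(A)+H(B)+H(C)-H(A,B,C)$ from the Lemma~\ref{lem:2b} application. It then defines $S\subseteq B$ as the intersection of $\bar B$ with four preimage subspaces $S_1,\dots,S_4$ (roughly, where certain composed maps land inside $f_{10}\bar C$, $f_{13}\bar A$, $f_7 f_{13}\bar A$, $f_6 f_{10}\bar C$). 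For $t\in S$, three explicit computations show that three pairwise sums of composed maps all vanish at $t$, and adding these gives $2t=0$; odd characteristic then forces $t=0$. Hence $S=\{0\}$, so $H(B)=\Codim{B}{S}$, which is bounded above by $\Codim{B}{\bar B}+\sum_i\Codim{B}{S_i}$ via Lemma~\ref{lem:2}; chasing through the codimension bounds for the $S_i$ (each roughly $\Delta_A+H(\cdot)-H(A)$ or $\Delta_C+H(\cdot)-H(C)$ by Lemma~\ref{lem:3} and the injectivity of the exact-decoding maps) yields exactly the stated inequality, with the coefficient $5=2+1+2$ coming from $2\Delta_A+\Delta_B+2\Delta_C$.

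Two specific points in your proposal would have led you astray. First, your account of odd characteristic (replacing $a$ by $a'=a+2^{-1}\rho(b)$, as in the matrix proof) is \emph{not} how it enters here; the abstract argument never pre-processes $a$, but instead derives $2t=0$ and invokes invertibility of $2$ on $V$ itself. Second, you attribute the $5(H(A)+H(B)+H(C)-H(A,B,C))$ term to ``identifications'' costing mutual information, but actually this term is produced once by each application of Lemma~\ref{lem:2b} (applied to the three composed maps into $A,B,C$), then multiplied by how often $\Delta_A,\Delta_B,\Delta_C$ are reused in the codimension accounting; without having designed the concrete subspace $S$ and the exact-decoding construction, you cannot see where the factor $5$ is born, so ``bookkeeping'' is the wrong diagnosis for the remaining work.
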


\begin{proof}
We will use the Fano network in Figure~\ref{fig:Fano-network},
derived in~\cite{Dougherty-Freiling-Zeger07-NetworksMatroidsNonShannon},
from the Fano matroid,
to help guide the proof.
By Lemma~\ref{lem:1},
there exist linear functions
\begin{align*}
f_1:    W &\to A & f_2:    W &\to B\\
f_3:    Y &\to B & f_4:    Y &\to C\\
f_5:    X &\to W & f_6:    X &\to Y\\
f_7:    Z &\to W & f_8:    Z &\to C\\
f_9:    C &\to A & f_{10}: C &\to X\\
f_{11}: B &\to X & f_{12}: B &\to Z\\
f_{13}: A &\to Z & f_{14}: A &\to Y
\end{align*}
such that
\begin{align}
f_1    + f_2    &= I \mbox{ on a subspace $W'$ of $W$ with } \Codim{W}{W'} \le H(W|A,B)\label{eq:41}\\
f_3    + f_4    &= I \mbox{ on a subspace $Y'$ of $Y$ with } \Codim{Y}{Y'} \le H(Y|B,C)\label{eq:42}\\
f_5    + f_6    &= I \mbox{ on a subspace $X'$ of $X$ with } \Codim{X}{X'} \le H(X|W,Y)\notag\\
f_7    + f_8    &= I \mbox{ on a subspace $Z'$ of $Z$ with } \Codim{Z}{Z'} \le H(Z|W,C)\label{eq:44}\\
f_9    + f_{10} &= I \mbox{ on a subspace $C'$ of $C$ with } \Codim{C}{C'} \le H(C|A,X)\notag\\
f_{11} + f_{12} &= I \mbox{ on a subspace $B'$ of $B$ with } \Codim{B}{B'} \le H(B|X,Z)\notag\\
f_{13} + f_{14} &= I \mbox{ on a subspace $A'$ of $A$ with } \Codim{A}{A'} \le H(A|Z,Y).\label{eq:47}
\end{align}
Combining these, we get maps
\begin{align}
f_1 f_7 f_{13}: A \to A\label{eq:38}\\
f_2 f_7 f_{13} + f_3 f_{14}: A \to B\label{eq:39}\\
f_8 f_{13} + f_4 f_{14}: A \to C.\label{eq:40}
\end{align}
Note that
\begin{align*}
f_1 f_7 f_{13} + f_2 f_7 f_{13} &= f_7 f_{13} 
\mbox{ on the subspace } f_{13}^{-1} f_7^{-1} (W') \mbox{ of } A\\
f_7 f_{13} + f_8 f_{13} &= f_{13} 
\mbox{ on the subspace } f_{13}^{-1} (Z') \mbox{ of } A\\
f_3 f_{14} + f_4 f_{14} &= f_{14} 
\mbox{ on the subspace } f_{14}^{-1} (Y') \mbox{ of } A
\end{align*}
so the sum of the functions in \eqref{eq:38}--\eqref{eq:40} 
is equal to $I$ on the subspace 
$$
A'' \doteq
A' \cap
f_{13}^{-1} (Z') \cap
f_{13}^{-1} f_7^{-1} (W') \cap 
f_{14}^{-1} (Y') 
$$
and we get
\begin{align*}
&\Codim{A}{A''}\\
&\le
\Codim{A}{A'}
+ \Codim{A}{f_{13}^{-1} (Z')}\notag\\
&\ \ \ \ + \Codim{A}{f_{13}^{-1} f_7^{-1} (W')}
+ \Codim{A}{f_{14}^{-1} (Y')} &\Comment{Lemma~\ref{lem:2}}\\
&\le
\Codim{A}{A'} + \Codim{Z}{Z'} + \Codim{W}{W'} + \Codim{Y}{Y'} &\Comment{Lemma~\ref{lem:3}}\\
&\le
H(A|Z,Y) + H(Z|W,C) + H(W|A,B) + H(Y|B,C). &\Comment{\eqref{eq:41}, \eqref{eq:42}, \eqref{eq:44},\eqref{eq:47}}
\end{align*}
Applying Lemma~\ref{lem:2b} to
$f_1 f_7 f_{13} - I$,
$f_2 f_7 f_{13} + f_3 f_{14}$,
and
$f_8 f_{13} + f_4 f_{14}$,
we get a subspace $\bar{A}$ of $A''$ such that
\begin{align}
\Codim{A}{\bar{A}} 
&= \Codim{A}{A''} + \Codim{A''}{\bar{A}} \notag\\
&\le \Delta_A \label{eq:30}\\
&\doteq H(A|Z,Y) + H(Z|W,C) + H(W|A,B) + H(Y|B,C)\notag\\
&\ \ \ + H(A) + H(B) + H(C) - H(A,B,C)\label{eq:31}
\end{align}
on which
\begin{align}
f_1 f_7 f_{13} &= I\label{eq:37}\\
f_2 f_7 f_{13} + f_3 f_{14} &=0\notag\\
f_8 f_{13} + f_4 f_{14} &= 0.\notag
\end{align}
Similarly,
we get a subspace $\bar{C}$ of $C$ such that
\begin{align}
\Codim{C}{\bar{C}} &\le \Delta_C\label{eq:32}\\
&\doteq H(C|A,X) + H(X|W,Y) + H(W|A,B) + H(Y|B,C)\notag\\
&\ \ \ + H(A) + H(B) + H(C) - H(A,B,C)\label{eq:33}
\end{align}
on which
\begin{align}
f_4 f_6 f_{10} &= I\label{eq:36}\\
f_2 f_5 f_{10} + f_3 f_6 f_{10} &=0\notag\\
f_9 + f_1 f_5 f_{10} &= 0\notag
\end{align}
and a subspace $\bar{B}$ of $B$ such that
\begin{align}
\Codim{B}{\bar{B}} &\le \Delta_B\label{eq:34}\\
&\doteq H(B|X,Z) + H(X|W,Y) + H(Z|W,C) + H(W|A,B)\notag\\
&\ \ \ + H(Y|B,C) + H(A) + H(B) + H(C) - H(A,B,C)\label{eq:35}
\end{align}
on which
\begin{align*}
f_2 f_5 f_{11} + f_2 f_7 f_{12} + f_3 f_6 f_{11} &= I\\
f_1 f_5 f_{11} + f_1 f_7 f_{12} &=0\\
f_4 f_6 f_{11} + f_8 + f_{12} &= 0.
\end{align*}
Note: There is only one $H(W|A,B)$ in \eqref{eq:35}
because we can write
$$f_i f_5 f_{11} + f_i f_7 f_{12} = f_i (f_5 f_{11} + f_7 f_{12})$$
for $i=1,2$.

Let us define the following subspaces of $B$:
\begin{align}
S_1 &= \{ u\in B: f_{11}u \in f_{10} \bar{C} \}\notag\\
S_2 &= \{ u\in B: f_{12}u \in f_{13} \bar{A} \}\notag\\
S_3 &= \{ u\in B: f_5 f_{11}u \in f_7 f_{13}\bar{A} \}\notag\\
S_4 &= \{ u\in B: f_{14} f_1 f_7 f_{12}u \in f_6 f_{10}\bar{C} \}\notag\\
S &= \bar{B} \cap S_1 \cap S_2 \cap S_3 \cap S_4.\label{eq:S1}
\end{align}
Then we have the following:
\begin{align}
\Codim{B}{S_1} 
&\le \Codim{X}{f_{10}\bar{C}} &\Comment{Lemma~\ref{lem:3}}\notag\\
&= \Dim{X} - \Dim{\bar{C}} &\Comment{\eqref{eq:36} $\yields f_{10}$ injective }\notag\\
&= \Codim{C}{\bar{C}} + H(X) - H(C)\notag\\
&\le \Delta_C + H(X) - H(C) &\Comment{\eqref{eq:32}}\label{eq:26}\\
\Codim{B}{S_2} 
&\le \Codim{Z}{f_{13}\bar{A}}  &\Comment{Lemma~\ref{lem:3}}\notag\\
&= \Dim{Z} - \Dim{\bar{A}} &\Comment{\eqref{eq:37} $\yields f_{13}$ injective }\notag\\
&= \Codim{A}{\bar{A}} + H(Z) - H(A)\notag\\
&\le \Delta_A + H(Z) - H(A)&\Comment{\eqref{eq:30}}\label{eq:27}\\
\Codim{B}{S_3} 
&\le \Codim{W}{f_7 f_{13}\bar{A}} &\Comment{Lemma~\ref{lem:3}}\notag\\
&= \Dim{W} - \Dim{\bar{A}} &\Comment{\eqref{eq:37} $\yields f_7, f_{13}$ injective }\notag\\
&= \Codim{A}{\bar{A}} + H(W) - H(A)\notag\\
&\le \Delta_A + H(W) - H(A)&\Comment{\eqref{eq:30}}\label{eq:28}\\
\Codim{Y}{S_4} 
&\le \Codim{Y}{f_6 f_{10}\bar{A}} &\Comment{Lemma~\ref{lem:3}}\notag\\
&= \Dim{Y} - \Dim{\bar{C}} &\Comment{\eqref{eq:36} $\yields f_6, f_{10}$ injective }\notag\\
&= \Codim{C}{\bar{C}} + H(Y) - H(C)\notag\\
&\le \Delta_C + H(Y) - H(C). &\Comment{\eqref{eq:32}}\label{eq:29}
\end{align}
Suppose $t\in S$.
Then,
\begin{align}
f_2 f_5 f_{11} t + f_2 f_7 f_{12} t
&= f_2 f_7 f_{13} f_1 f_5 f_{11} t + f_2 f_7 f_{12} t\notag\\
&\ \ [ \mbox{ we have } f_5 f_{11} t = f_7 f_{13} u 
       \mbox{ for some } u \in \bar{A}, \notag\\
&\ \ \ \mbox{ and } f_7 f_{13} f_1 f_7 f_{13} u = f_7 f_{13} u
       \mbox{ since } f_1 f_7 f_{13} u = u ]\notag\\
&= f_2 f_7 f_{13} f_1 f_5 f_{11} t + f_2 f_7 f_{13} f_1 f_7 f_{12} t\notag\\
&\ \ [ \mbox{ since } f_{12} t \in f_{13} \bar{A} ]\notag\\
&= f_2 f_7 f_{13} ( f_1 f_5 f_{11}  + f_1 f_7 f_{12} ) t\notag\\
&= 0 \label{eq:48}\\
&\ \ [ \mbox{ since } t \in \bar{B} ]\notag
\end{align}
\begin{align}
f_2 f_5 f_{11} t + f_3 f_6 f_{11} t
&= f_2 f_5 f_{10} f_4 f_6 f_{11} t + f_3 f_6 f_{10} f_4 f_6 f_{11} t\notag\\
&\ \ [ \mbox{ since } f_{11} t \in f_{10} \bar{C} ]\notag\\
&= (f_2 f_5 f_{10} t + f_3 f_6 f_{10} ) f_4 f_6 f_{11} t\notag\\
&= 0\label{eq:49}\\
&\ \ [ \mbox{ since } f_{11} t \in f_{10} \bar{C} \mbox{ and hence }\notag\\
&\ \ \ f_4 f_6 f_{11} t \in f_4 f_6 f_{10} \bar{C} = \bar{C} ]\notag
\end{align}
\begin{align}
f_2 f_7 f_{12} t + f_3 f_6 f_{11} t
&= f_2 f_7 f_{12}t + f_3 f_6 f_{10} f_4 f_6 f_{11} t\notag\\
&= f_2 f_7 f_{12}t - f_3 f_6 f_{10} f_8 f_{12} t\notag\\
&= f_2 f_7 f_{12}t - f_3 f_6 f_{10} f_8 f_{13} f_1 f_7 f_{12} t\notag\\
&= f_2 f_7 f_{12}t + f_3 f_6 f_{10} f_4 f_{14} f_1 f_7 f_{12} t\notag\\
&= f_2 f_7 f_{12}t + f_3 f_{14} f_1 f_7 f_{12} t\notag\\
&= f_2 f_7 f_{13} f_1 f_7 f_{12}t + f_3 f_{14} f_1 f_7 f_{12} t\notag\\
&= (f_2 f_7 f_{13} + f_3 f_{14}) f_1 f_7 f_{12} t\notag\\
&= 0.\label{eq:50}
\end{align}
We therefore obtain
\begin{align*}
2t 
&= 2( f_2 f_5 f_{11} t + f_2 f_7 f_{12} t + f_3 f_6 f_{11} t )\\
&= ( f_2 f_5 f_{11} t + f_2 f_7 f_{12} t ) 
 + ( f_2 f_5 f_{11} t + f_3 f_6 f_{11} t )
 + ( f_2 f_7 f_{12} t + f_3 f_6 f_{11} t )\\
&= 0 + 0 + 0 = 0. & \Comment{\eqref{eq:48},\eqref{eq:49},\eqref{eq:50}}
\end{align*}
Since the field has odd characteristic,
we must have $t=0$.
Thus, $S = \{0\}$,
and therefore
\begin{align*}
H(B) &= \Codim{B}{S}\\
&\le \Codim{B}{\bar{B}} + \sum_{i=1}^4 \Codim{B}{S_i}&\Comment{\eqref{eq:S1}, Lemma~\ref{lem:2}}\\
&\le \Delta_B + 2 \Delta_A + 2\Delta_C\\
&\ \ \  + H(W) + H(X) + H(Y) + H(Z) - 2H(A) - 2H(C). &\Comment{\eqref{eq:34},\eqref{eq:26}--\eqref{eq:29}}
\end{align*}
The result then follows from
\eqref{eq:31},
\eqref{eq:33},
and
\eqref{eq:35}.
\end{proof}

In the context of the Fano network, all of the compound terms at
the end of inequality~\eqref{oddLRI} are zero, so this inequality
directly implies inequality~\eqref{eq:02}.

By replacing $W$ with $W\cap (A+B+C+X+Y+Z)$
and similarly for $X$, $Y$, and $Z$,
one can improve the inequality to a balanced form where 
$H(W)$ becomes $I(W; A,B,C,X,Y,Z)$,
$H(W|A,B)$ becomes $I(W;C,X,Y,Z|A,B)$,
and similarly for $X$, $Y$, and $Z$.

\begin{theorem}
The linear rank inequality in Theorem~\ref{thm:LinearRankInequality1}
holds for any scalar field if $\Dim{V} \le 2$,
but may not hold if the scalar field has characteristic $2$
and $\Dim{V} \ge 3$.

\end{theorem}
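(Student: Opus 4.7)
For the counterexample direction (characteristic $2$, $\Dim{V}\ge 3$), the plan is to realize the Fano network's characteristic-$2$ linear solution from Theorem~\ref{thm:3} as an explicit subspace configuration. Take $V=\mathrm{GF}(2)^3$ with basis $\{a,b,c\}$ and set $A=\langle a\rangle$, $B=\langle b\rangle$, $C=\langle c\rangle$, $W=\langle a+b\rangle$, $X=\langle a+c\rangle$, $Y=\langle b+c\rangle$, $Z=\langle a+b+c\rangle$ (with $D$ arbitrary, since $D$ does not appear in~\eqref{oddLRI}). Each of the seven conditional-entropy terms on the right of~\eqref{oddLRI} vanishes: for example, $X\subseteq W+Y$ because $(a+b)+(b+c)=a+c$ uses $2b=0$, and analogous characteristic-$2$ identities yield $W\subseteq A+B$, $Y\subseteq B+C$, $Z\subseteq W+C$, $A\subseteq Z+Y$, $B\subseteq X+Z$, and $C\subseteq A+X$. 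Independence of $a,b,c$ also makes $H(A)+H(B)+H(C)-H(A,B,C)=0$. Inequality~\eqref{oddLRI} therefore collapses to $2H(A)+H(B)+2H(C)\le H(W)+H(X)+H(Y)+H(Z)$, i.e.\ $5\le 4$, which fails. Embedding $V$ into a higher-dimensional space by adjoining a direct summand gives counterexamples for all $\Dim{V}\ge 3$.

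For the $\Dim{V}\le 2$ direction, the plan is to reduce to the odd-characteristic case already established in Theorem~\ref{thm:LinearRankInequality1}. Every term appearing in~\eqref{oddLRI} is a fixed linear combination of dimensions $\Dim{S_{i_1}+\cdots+S_{i_k}}$ of sums of the given subspaces: each conditional entropy unpacks as $H(S\mid T_1,\ldots,T_m)=\Dim{S+T_1+\cdots+T_m}-\Dim{T_1+\cdots+T_m}$, and the final compound term equals $\Dim{A}+\Dim{B}+\Dim{C}-\Dim{A+B+C}$. In a space of dimension at most $2$, the entire sum-dimension vector is determined by the combinatorial type of the configuration—namely which subspaces are $\{0\}$, which equal $V$, and which $1$-dimensional subspaces coincide—which is field-independent. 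Given any configuration over an arbitrary field, choose an odd prime $p\ge 7$ and realize the same combinatorial type inside $V'=\mathrm{GF}(p)^{\Dim{V}}$, which has $p+1\ge 8$ distinct $1$-dimensional subspaces, more than enough for the at-most-eight subspaces we need. Applying Theorem~\ref{thm:LinearRankInequality1} to the mirror configuration yields~\eqref{oddLRI} for it, and term-by-term equality transfers the inequality back to the original.

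The main obstacle, though modest in both directions, is bookkeeping: in the first part, checking each of the seven $\mathrm{GF}(2)$ containments (each a one-line linear-algebra computation that relies crucially on $2x=0$); in the second part, confirming that every quantity in~\eqref{oddLRI} really is a dimension of a sum of the given subspaces, so that the field-independent combinatorial reduction captures all of it.
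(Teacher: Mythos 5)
Your counterexample in $GF(2)^3$ is essentially the paper's (the paper uses the identical Fano-matroid configuration and simply asserts the failure of the inequality is ``easily verified''), but your explicit check that all seven conditional-entropy terms vanish, reducing the inequality to $5 \le 4$, is a welcome addition. On the $\Dim{V}\le 2$ direction, however, you take a genuinely different route. The paper offers two options: use the inequality prover {\tt Xitip} to confirm that \eqref{oddLRI} becomes a Shannon inequality under each of the assumptions $H(A)=0$, $H(B|A)=0$, $H(C|A,B)=0$ (at least one of which must hold when $\Dim{V}\le 2$); or a lengthy hand case analysis over the possible values of $\bigl(H(A),H(B),H(C),H(A,B,C)\bigr)$. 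You instead reduce to the odd-characteristic case already proved in Theorem~\ref{thm:LinearRankInequality1}: every quantity in \eqref{oddLRI} is a linear combination of dimensions of sums of the given subspaces; in a space of dimension at most~$2$ the full vector of such sum-dimensions is determined by the coincidence pattern of the subspaces (which are $\{0\}$, which are $V$, and which lines coincide), a purely lattice-theoretic datum independent of the field; and any such pattern on at most eight subspaces can be realized over $GF(p)^{\Dim{V}}$ for an odd prime $p\ge 7$, since $GF(p)^2$ has $p+1\ge 8$ distinct lines. Applying Theorem~\ref{thm:LinearRankInequality1} to the mirrored configuration and transferring back term-by-term then gives the result. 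This reduction is correct, shorter, and more conceptual than the paper's case analysis, at the cost of relying on Theorem~\ref{thm:LinearRankInequality1} as a black box; the paper's argument is longer but elementary and self-contained. One small remark worth making explicit in a final write-up: the key lattice fact you invoke is that for subspaces $S_1,\dots,S_m$ of a space of dimension at most $2$, $\Dim{S_{i_1}+\cdots+S_{i_k}}$ equals $2$ if any $S_{i_j}=V$ or if the nonzero $S_{i_j}$'s span at least two distinct lines, equals $1$ if exactly one line appears, and equals $0$ otherwise---so the sum-dimension vector is indeed a function only of the coincidence pattern.
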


\begin{proof}
In $V=GF(2)^3$, define the following subspaces of $V$:
\begin{align*}
A &= \Span{(1,0,0)}\\
B &= \Span{(0,1,0)}\\
C &= \Span{(0,0,1)}\\
W &= \Span{(1,1,0)}\\
X &= \Span{(1,0,1)}\\
Y &= \Span{(0,1,1)}\\
Z &= \Span{(1,1,1)}
\end{align*}
It is easily verified that the inequality in Theorem~\ref{thm:LinearRankInequality1}
is not satisfied in this case.

Next we show the inequality indeed holds if $\Dim{V} \le 2$.
One way to do this is to show (using software such as {\tt Xitip}
\cite{Xitip}) that the inequality becomes a Shannon inequality
under the assumption that $H(A) = 0$, or under the assumption
$H(B|A) = 0$, or under the assumption $H(C|A,B) = 0$.  If all three
of these assumptions fail, then we must have
\begin{align}
\Dim{V} \ge H(A,B,C) > H(A,B) > H(A) > 0
\end{align}
and hence $\Dim{V} \ge 3$.

Or one can give a direct argument by cases.
Assume to the contrary that there exist subspaces $A,B,C,W,X,Y,Z$  
of vector space $V$ such that

\begin{align}
&2H(A) + H(B) + 2H(C)\notag\\
&\ \ > H(W) + H(X) + H(Y) + H(Z)\notag\\
&\ \ \  + 2H(A|Z,Y) + H(B|X,Z) + 2H(C|A,X)\notag\\
&\ \ \  + 3H(X|W,Y) + 3H(Z|W,C)\notag\\
&\ \ \  + 5H(W|A,B) + 5H(Y|B,C)\notag\\
&\ \ \  + 5( H(A) + H(B) + H(C) - H(A,B,C) ).
\label{eq:contradiction}
\end{align}
Let $Q = ( H(A), H(B), H(C), H(A,B,C) )$
and
$R = H(A) + H(B) + H(C) - H(A,B,C)$.
Let $\LHS$ and $\RHS$ denote the left and right sides of inequality \eqref{eq:contradiction}.
We will obtain contradictions for all the possible values of $Q$.

\textbf{Case (i)}: $\Dim{V}=1$\\

All entropies are $0$ or $1$. 
Since $\LHS \le 5$,
at most one of $H(A), H(B), H(C)$
can equal $1$, for otherwise $R \ge 1$ would imply $\RHS \ge 5$.
\begin{itemize}
\item [(1001):] 
$\LHS=2$ implies $H(A|Z,Y)=0$ which implies $H(Z)=1$ or $H(Y)=1$. 
Also, we must have $H(Z|W,C)=H(Y|B,C)=0$, the latter implying
$H(Y)=0$.
So we must have $H(Z)=1$ which in turn implies $H(W)=1$ and therefore
$\RHS \ge 2$.

\item [(0101):] 
$\LHS=1$ implies 
$H(B|X,Z)=0$ which implies
$H(X)=1$ or $H(Z)=1$, and
therefore $\RHS \ge 1$.

\item [(0011):] 
$LHS=2$ implies 
$H(C|A,X)=0$ and $H(X|W,Y)=0$,
which imply $H(X)=1$,
which implies $H(W)=1$ or $H(Y)=1$
and therefore $\RHS \ge 2$.
\end{itemize}

\textbf{Case (ii)}: $\Dim{V}=2$\\

All entropies are $0$, $1$, or $2$.
$\LHS \le 10$ implies $\RHS \le 9$, and therefore $R \le 1$.
$\LHS \ge 1$ implies $H(A,B,C) > 0$ and therefore $H(A,B,C) \in \{1,2\}$.

\begin{itemize}
\item [(1011):] $\LHS \le 4$ and $R=1$ imply $\RHS \ge 5$.
\item [(1101):] Same.
\item [(0111):] Same.
\item [(2001):] Same.
\item [(0201):] Same.
\item [(0021):] Same.

\item [(2012):] $\LHS=6$. $R=1$ implies $\RHS \ge 5$
which implies $H(A|Z,Y)=0$ which implies $H(Z,Y) \ge 1$ and therefore $\RHS \ge 6$.
\item [(1022):] Same.

\item [(1112):] $\LHS=5$. $R=1$ implies $\RHS \ge 5$.
\item [(0122):] Same.
\item [(2102):] Same.

\item [(0212):] $\LHS=4$. $R=1$ implies $\RHS \ge 5$.
\item [(1202):] Same.

\item [(1001):] $\LHS=2$ implies $H(A|Z,Y)=0$ which implies $H(Z)=1$ or $H(Y)=1$.
          If $H(Z)=1$, then $H(Z|W,C)=0$ which would imply $H(W)=1$ 
              and therefore $\RHS \ge 2$.
          If $H(Y)=1$, then $H(Z|W,C)=1$ which would imply $\RHS \ge 5$.

\item [(0101):] $\LHS=1$ implies $H(X)=H(Z)=0$ which implies $H(B|X,Z)=1$ 
              and therefore $\RHS \ge 1$.

\item [(0011):] $\LHS=2$ implies $H(C|A,X)=0$ which implies $H(X)=1$.
              Also,  $H(X|W,Y)=0$ implies $H(W,Y) \ge 1$ and therefore $\RHS \ge 2$.

\item [(0202):] $LHS=2$ implies $H(X)+H(Z) \le 1$ which implies $H(B|X,Z) \ge 1$
              which implies $H(B|X,Z)=1$ which implies $H(X,Z)=1$ which implies
               $H(X)+H(Z)=1$ and therefore $\RHS \ge 2$.

\item [(0022):] $\LHS=4$ implies $H(W|A,B)=0$ which implies $H(W)=0$.
              Also, $H(C|A,X) \le 1$ implies $H(X) \ge 1$ which implies
             $H(X|W,Y)=0$ which implies $H(Y) \ge H(X)$.
             Thus, $H(C|A,X)=0$ which implies $X=C$ which implies $H(Y) \ge H(C) = 2$
             and therefore $\RHS \ge 4$.

\item [(2002):] $\LHS=4$ implies $H(Y|B,C)=0$ which implies $H(Y)=0$.
              Also, $H(A|Z,Y) \le 1$ which implies $H(Z) \ge 1$.
              Additionally, $H(Z|W,C)=0$ which implies $H(W) \ge H(Z)$
              which implies $H(A|Z,Y)=0$ which implies $H(Z)=2$
              and therefore $\RHS \ge 4$.

\item [(1102):] $H(A,B,C)=2$ implies that $A \ne B$. 
        $\LHS=3$ implies $H(A|Z,Y)=0$ or $H(B|X,Z)=0$.
        If $H(B|X,Z)=0$, then $H(X)+H(Z) \ge 1$ which implies $\RHS \ge 1$
        and therefore $H(A|Z,Y)=0$.
        So it suffices to assume $H(A|Z,Y)=0$. 
        We have $H(Y|B,C)=0$ which implies $Y$ is a subspace of $B$, 
        which implies $H(Z) \ge 1$. Thus, $H(Z|W,C)=0$ which implies $H(W) \ge 1$,
        so $\RHS \ge 2$. 
        Hence, $H(B|X,Z)=0$ and $H(X)=0$ which imply $Z=B$ and therefore $H(A|Z,Y) \ne 0$.

\item [(0112):] $H(A,B,C)=2$ implies $B \ne C$. 
        $\LHS=3$ implies $H(B|X,Z)=0$ or $H(C|A,X)=0$.
        If $H(B|X,Z)=0$, then $H(X)+H(Z) \ge 1$ which implies $\RHS \ge 1$
        and therefore $H(C|A,X)=0$.
        So it suffices to assume $H(C|A,X)=0$.
        Thus we have $H(X) \ge 1$.
        Also, $H(X|W,Y)=0$ which implies $H(W)+H(Y) \ge H(X)$ and so $\RHS \ge 2$.
        Thus, $H(X)=1$ which implies $X=C$, and therefore $H(W)=1$ or $H(Y)=1$. 
        Since $H(W|A,B)=0$, $W$ is a subspace of $B$ and therefore $Y=C$.
        Finally, $H(B|X,Z)=0$ which implies $H(Z) \ge 1$ and therefore $\RHS \ge 3$.

\item [(1012):] $H(A,B,C)=2$ implies $A \ne C$. 
        $\LHS=4$ implies $H(A|Z,Y)=0$ or $H(C|A,X)=0$.

        \textbf{Case (1)}: 
        Suppose $H(C|A,X)=0$. Then $H(X) \ge 1$ and $X \ne A$ which imply $\RHS \ge 1$.
        Thus, $H(X|W,Y)=0$ which implies $H(W)+H(Y) \ge H(X)$, which implies $\RHS \ge 2$
        and therefore $H(A|Z,Y)=0$.
        We have $H(W|A,B)=0$ which implies $W$ is a subspace of $A$,
        which implies $H(Y) \ge 1$ and $Y \ne A$.
        Also, $H(Y|B,C)=0$ which implies $Y=C$ and therefore $H(Z) \ge 1$ and $Z \ne C$.
        Finally, $H(Z|W,C)=0$ which implies $H(W) \ge 1$ and therefore $\RHS \ge 4$.
        
        \textbf{Case (2)}: 
        Suppose $H(A|Z,Y)=0$. We know $H(Y|B,C)=0$, which implies $Y$ is a subspace of $C$
        which implies $H(Z) \ge 1$ and $Z \ne C$ and therefore $\RHS \ge 1$.
        Thus, $H(Z|W,C)=0$ which implies $H(W) \ge 1$ which implies
        $\RHS \ge 2$. So, $H(C|A,X)=0$ which implies $H(X) \ge 1$ and $X \ne A$
        and therefore $\RHS \ge 3$.
        Also, $H(W|A,B)=0$ which implies $W=A$. 
        Finally, $H(X|W,Y)=0$ which implies $H(Y) \ge 1$ and therefore $\RHS \ge 4$.
\end{itemize}
\end{proof}


\newpage
\subsection{A Linear Rank Inequality from the non-Fano Network}
\label{sec:NonFanoProof2}

\begin{theorem}
Let $A,B,C,W,X,Y,Z$ be subspaces of a finite-dimensional vector space $V$
over a scalar field of even characteristic.
Then, the following linear rank inequality holds:
\begin{align}
&2H(A) + 3H(B) + 2H(C) \notag \\
&\ \ \le H(W) + H(X) + H(Y) + 3H(Z) \notag \\
&\ \ \  + 2H(A|Y,Z) + 3H(B|X,Z) + H(C|W,Z) \notag \\
&\ \ \  + 2H(W|A,B) + 4H(X|A,C) + 3H(Y|B,C) \notag \\
&\ \ \  + 6H(Z|A,B,C) + H(C|W,X,Y) \notag \\
&\ \ \  + 7( H(A) + H(B) + H(C) - H(A,B,C) ).
\label{evenLRI}
\end{align}
\label{thm:LinearRankInequality2}
\end{theorem}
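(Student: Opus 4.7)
The plan is to mirror the proof of Theorem~\ref{thm:LinearRankInequality1} line by line, this time taking the non-Fano network of Figure~\ref{fig:non-Fano-network} as the combinatorial template rather than the Fano network. Each coefficient in (\ref{evenLRI}) should correspond to a single invocation of Lemma~\ref{lem:1}, Lemma~\ref{lem:3}, or Lemma~\ref{lem:2b}; reading off the eight conditional entropy terms and the defect $H(A)+H(B)+H(C)-H(A,B,C)$ tells me which maps must be built and how many times each will be used in later codimension estimates. The novel feature of this inequality, compared to (\ref{oddLRI}), is the presence of the term $H(C|W,X,Y)$, which arises from the extra demand $c$ at the fourth receiver $n_{15}$ that distinguishes the non-Fano network from the Fano network.

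First I would instantiate Lemma~\ref{lem:1} once for each edge function ($w$ from $a,b$; $x$ from $a,c$; $y$ from $b,c$; $z$ from $a,b,c$) and once for each decoder ($c$ from $w,z$ at $n_{12}$; $b$ from $x,z$ at $n_{13}$; $a$ from $y,z$ at $n_{14}$; and the extra $c$ from $w,x,y$ at $n_{15}$). This yields a family of linear maps whose appropriate partial sums equal the identity up to subspaces of codimension bounded by the matching conditional entropies, exactly as in (\ref{eq:41})--(\ref{eq:47}). Using Lemmas~\ref{lem:2} and~\ref{lem:3} to track codimensions under intersection and preimage, I would compose these maps along paths dictated by the non-Fano matroid to produce endomorphisms of $A$, $B$, and $C$, and invoke Lemma~\ref{lem:2b} to extract subspaces $\bar A, \bar B, \bar C$ on which these endomorphisms collapse to the identity modulo the Shannon defect.

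Next, in parallel with the definitions of $S_1,\dots,S_4$ in Theorem~\ref{thm:LinearRankInequality1}, I would carve out intermediate subspaces $S_j$ of a chosen message space (the leading coefficient $3H(B)$ suggests $B$ is the natural target) whose codimensions can be bounded by Lemma~\ref{lem:3} in terms of $H(W), H(X), H(Y)$, and $3H(Z)$, together with the $\bar A, \bar B, \bar C$ codimensions. Multiplicities such as the coefficient $6$ on $H(Z|A,B,C)$ and the $7$ on the defect should emerge because the same map is reused along several composed paths, so the bookkeeping will match the right-hand side of (\ref{evenLRI}) precisely when the composition scheme imitates the dependency pattern of the non-Fano matroid.

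The hard part, and the main obstacle, is the final algebraic step that played the role of ``$2t = 0$ implies $t = 0$'' in (\ref{eq:48})--(\ref{eq:50}): in characteristic $2$ the relation $2t=0$ is vacuous, so a different contradiction is required, and it must be one that exploits the fact that the non-Fano matroid fails to be representable in exactly this characteristic. Concretely, after the identities assembled from the first three demands collapse the combined maps on a common subspace $S$, I would bring in the decoder at $n_{15}$ (whose cost is the $H(C|W,X,Y)$ term) to derive one further identity; then three derived equations on $t\in S$ summed in characteristic $2$ should yield $t = 0$ directly, rather than $2t=0$. Once $S=\{0\}$, the bound $H(B) = \Codim{B}{S}$ combined with Lemmas~\ref{lem:2} and~\ref{lem:3} applied to the $S_j$ yields (\ref{evenLRI}) after careful accounting of coefficients; a subsequent balancing, exactly as in the remark after Theorem~\ref{thm:LinearRankInequality1}, would then tighten the inequality to its strongest form.
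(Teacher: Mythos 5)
Your high-level strategy — instantiate Lemma~\ref{lem:1} for the four edge functions and four decoders of the non-Fano network (including the extra decoder at $n_{15}$ that yields the $H(C|W,X,Y)$ term), build endomorphisms via composition, extract subspaces with Lemma~\ref{lem:2b}, and derive a contradiction in characteristic~2 — is indeed what the paper does. But two concrete parts of your proposal diverge from the argument that actually works.

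First, the target space is $C$, not $B$. The paper shows $S=\{0\}$ for a subspace $S$ of $C$ and then computes $H(C)=\Codim{C}{S}$; the coefficient $3$ on $H(B)$ arises because the quantity $\Delta_B$ (bounding $\Codim{B}{\bar B}$) gets charged three times, once for each of $S_1,S_2,S_3$, not because $B$ is the message being collapsed. Your heuristic of choosing the highest-coefficient message as the target runs backwards here: since the distinguishing feature of the non-Fano network is the extra demand for $c$ at $n_{15}$, all three auxiliary maps $f_{16},f_{17},f_{18}$ emanate from $C$, and the collapse must take place there.

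Second, and more substantially, the proof requires an intermediate construction that has no analogue in the Fano argument, so a line-by-line mirroring would not produce it. One must form the images $A^*=f_{15}(\bar A)$, $B^*=f_{13}(\bar B)$, $C^*=f_{11}(\bar C)$ inside $Z$, take the pairwise intersections, and pull them back to $A^{**}\subset\bar A$, $B^{**}\subset\bar B$, $C^{**}\subset\bar C$; the subspaces $S_1,S_2,S_3$ of $C$ are then defined by requiring $f_{16}u,f_{17}u,f_{18}u$ to land in $f_{10}C^{**},f_{12}B^{**},f_{14}A^{**}$, and the codimension bookkeeping via Lemma~\ref{lem:3} is what doubles $\Delta_A$ and triples $\Delta_B$ to produce the stated coefficients. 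The terminal algebraic step is also subtler than ``sum three equations and use characteristic $2$'': one first shows by two separate injectivity arguments inside $Z$ that $f_{11}c=-f_{15}a$ and $f_{11}c=-f_{13}b$, hence $f_{15}a=f_{13}b$; only then does the identity $t=-(f_9f_{13}b+f_9f_{15}a)=-2f_9f_{13}b$ collapse to $t=0$ in characteristic~2. Without the $A^*,B^*,C^*$ apparatus the relevant maps are not known to be injective on the right subspaces and the deduction $f_{15}a=f_{13}b$ cannot be made.
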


\begin{proof}
We will use the non-Fano network in Figure~\ref{fig:non-Fano-network},
derived in~\cite{Dougherty-Freiling-Zeger07-NetworksMatroidsNonShannon},
from the non-Fano matroid,
to help guide the proof.
By Lemma~\ref{lem:1},
there exist linear functions
\begin{align*}
f_1:    W &\to A & f_2:    W &\to B\\
f_3:    X &\to A & f_4:    X &\to C\\
f_5:    Y &\to B & f_6:    Y &\to C\\
f_7:    Z &\to A & f_8:    Z &\to B & f_9:    Z &\to C \\
f_{10}: C &\to W & f_{11}: C &\to Z\\
f_{12}: B &\to X & f_{13}: B &\to Z\\
f_{14}: A &\to Y & f_{15}: A &\to Z\\
f_{16}: C &\to W & f_{17}: C &\to X & f_{18}: C &\to Y
\end{align*}
such that
\begin{align}
f_1    + f_2    &= I \mbox{ on a subspace $W'$ of $W$ with } \Codim{W}{W'} \le H(W|A,B)\label{eq:141}\\
f_3    + f_4    &= I \mbox{ on a subspace $X'$ of $X$ with } \Codim{X}{X'} \le H(X|A,C)\label{eq:142}\\
f_5    + f_6    &= I \mbox{ on a subspace $Y'$ of $Y$ with } \Codim{Y}{Y'} \le H(Y|B,C)\label{eq:143}\\ 
f_7 + f_8 + f_9 &= I \mbox{ on a subspace $Z'$ of $Z$ with } \Codim{Z}{Z'} \le H(Z|A,B,C)\label{eq:144}\\
f_{10} + f_{11} &= I \mbox{ on a subspace $C'$ of $C$ with } \Codim{C}{C'} \le H(C|W,Z)\label{eq:145}\\
f_{12} + f_{13} &= I \mbox{ on a subspace $B'$ of $B$ with } \Codim{B}{B'} \le H(B|X,Z)\label{eq:146}\\
f_{14} + f_{15} &= I \mbox{ on a subspace $A'$ of $A$ with } \Codim{A}{A'} \le H(A|Y,Z)\label{eq:147}\\
f_{16}+f_{17}+f_{18} &= I \mbox{ on a subspace $C''$ of $C$ with } \Codim{C}{C''} \le H(C|W,X,Y).\label{eq:147b}
\end{align}
Combining these, we get maps
\begin{align}
f_7 f_{15}: A \to A\label{eq:138}\\
f_5 f_{14} + f_8 f_{15}: A \to B\label{eq:139}\\
f_6 f_{14} + f_9 f_{15}: A \to C.\label{eq:140}
\end{align}
Note that
\begin{align*}
f_5 f_{14} + f_6 f_{14} &= f_{14} 
\mbox{ on the subspace } f_{14}^{-1} (Y') \mbox{ of } A\\
f_7 f_{15} + f_8 f_{15} + f_9 f_{15} &= f_{15} 
\mbox{ on the subspace } f_{15}^{-1} (Z') \mbox{ of } A
\end{align*}
so the sum of the functions in \eqref{eq:138}--\eqref{eq:140} 
is equal to $I$ on the subspace 
$$
A'' \doteq
A' \cap
f_{14}^{-1} (Y') \cap
f_{15}^{-1} (Z')
$$
and we get
\begin{align*}
\Codim{A}{A''}
&\le
\Codim{A}{A'}
+ \Codim{A}{f_{14}^{-1} (Y')}
+ \Codim{A}{f_{15}^{-1} (Z')} &\Comment{Lemma~\ref{lem:2}}\\
&\le
\Codim{A}{A'} + \Codim{Y}{Y'} + \Codim{Z}{Z'} &\Comment{Lemma~\ref{lem:3}} \\
&\le
H(A|Y,Z) + H(Y|B,C) + H(Z|A,B,C). &\Comment{\eqref{eq:143},\eqref{eq:144},\eqref{eq:147}}
\end{align*}
Applying Lemma~\ref{lem:2b} to
$f_7 f_{15} - I$,
$f_5 f_{14} + f_8 f_{15}$,
and
$f_6 f_{14} + f_9 f_{15}$,
we get a subspace $\bar{A}$ of $A''$ such that
\begin{align}
\Codim{A}{\bar{A}} 
&= \Codim{A}{A''} + \Codim{A''}{\bar{A}}\notag\\
&\le \Delta_A \label{eq:130}\\
&\doteq H(A|Y,Z) + H(Y|B,C) + H(Z|A,B,C)\notag\\
&\ \ \ + H(A) + H(B) + H(C) - H(A,B,C)\label{eq:131}
\end{align}
on which
\begin{align}
f_7 f_{15} &= I\label{eq:137}\\
f_5 f_{14} + f_8 f_{15} &=0 \label{eq:137b}\\
f_6 f_{14} + f_9 f_{15} &= 0\label{eq:137c}.
\end{align}
Similarly,
we get a subspace $\bar{B}$ of $B$ such that
\begin{align}
\Codim{B}{\bar{B}} &\le \Delta_B\label{eq:132}\\
&\doteq H(B|X,Z) + H(X|A,C) + H(Z|A,B,C) \notag\\
&\ \ \ + H(A) + H(B) + H(C) - H(A,B,C)\label{eq:133}
\end{align}
on which
\begin{align}
f_8 f_{13} &= I\label{eq:136}\\
f_3 f_{12} + f_7 f_{13} &=0\label{eq:136bb}\\
f_4 f_{12}+ f_9 f_{13} &= 0 \label{eq:136c}
\end{align}
and a subspace $\bar{C}$ of $C$ such that
\begin{align}
\Codim{C}{\bar{C}} &\le \Delta_C\label{eq:134}\\
&\doteq H(C|W,Z) + H(W|A,B) + H(Z|A,B,C)\notag\\
&\ \ \ + H(A) + H(B) + H(C) - H(A,B,C)\label{eq:135}
\end{align}
on which
\begin{align}
f_9 f_{11} &= I\label{eq:136b}\\
f_1 f_{10} + f_7 f_{11} &=0\label{eq:136d}\\
f_2 f_{10} + f_8 f_{11} &= 0\label{eq:136e}
\end{align}
and a subspace $\hat{C}$ of $C$ such that
\begin{align}
\Codim{C}{\hat{C}} &\le \hat{\Delta}_C\label{eq:134b}\\
&\doteq H(C|W,X,Y) + H(W|A,B) + H(X|A,C) + H(Y|B,C)\notag\\
&\ \ \  + H(A) + H(B) + H(C) - H(A,B,C)\label{eq:135b}
\end{align}
on which
\begin{align}
f_4 f_{17} + f_6 f_{18} &= I\label{eq:a1}\\
f_1 f_{16} + f_3 f_{17} &=0\label{eq:b1}\\
f_2 f_{16} + f_5 f_{18} &= 0.\label{eq:c1}
\end{align}
Define the following subspaces of $Z$:
\begin{align*}
A^* &= f_{15}(\bar{A})\\
B^* &= f_{13}(\bar{B})\\
C^* &= f_{11}(\bar{C}).
\end{align*}
By \eqref{eq:137},
the restriction maps
$\Restriction{f_{15}}{\bar{A}} : \bar{A} \to A^*$
and
$\Restriction{f_7}{A^*} : A^* \to \bar{A}$
are inverses of each other, and hence are injective.
Similarly, by \eqref{eq:136},
$\Restriction{f_8}{B^*}$
is the inverse of 
$\Restriction{f_{13}}{\bar{B}}$
and, by by \eqref{eq:136b},
$\Restriction{f_9}{C^*}$
is the inverse of 
$\Restriction{f_{11}}{\bar{C}}$,
so these are all injective.
In particular,
\begin{align}
\Dim{A^*} &= \Dim{\bar{A}}\label{eq:52}\\
\Dim{B^*} &= \Dim{\bar{B}}\label{eq:53}\\
\Dim{C^*} &= \Dim{\bar{C}}.\label{eq:54}
\end{align}

Now let
$$A^{**} = f_7 ( A^* \cap B^* ) \subseteq \bar{A}.$$
Then $f_{15}$ is injective on $A^{**}$
and $f_{15}(A^{**}) = A^* \cap B^*$,
so $f_8 f_{15}$ is injective on $A^{**}$.
But $f_5 f_{14} + f_8 f_{15} = 0$ on $\bar{A}$,
so $f_5 f_{14}$ is injective on $A^{**}$,
and hence so is $f_{14}$.
This gives 
\begin{align}
\Dim{f_{14} A^{**}} = \Dim{A^{**}} = \Dim{ A^* \cap B^*}.\label{eq:56}
\end{align}
Similarly, let
$$B^{**} = f_8 ( A^* \cap B^* ) \subseteq \bar{B};$$
then $f_7 f_{13}$ is injective on $B^{**}$ and
$f_3 f_{12} + f_7 f_{13} = 0$ on $B^{**}$,
so $f_{12}$ is injective on $B^{**}$ and
\begin{align}
\Dim{f_{12} B^{**}} = \Dim{B^{**}} = \Dim{ A^* \cap B^*}.\label{eq:55}
\end{align}
And let 
$$C^{**} = f_9 ( B^* \cap C^* ) \subseteq \bar{C};$$
then $f_8 f_{11}$ is injective on $C^{**}$ and
$f_2 f_{10} + f_8 f_{11} = 0$ on $C^{**}$,
so $f_{10}$ is injective on $C^{**}$ and
\begin{align}
\Dim{f_{10} C^{**}} = \Dim{C^{**}} = \Dim{ B^* \cap C^*}. \label{eq:51}
\end{align}

Let us define the following subspaces of $C$:
\begin{align}
S_1 &= \{ u\in C: f_{16}u \in f_{10} C^{**} \}\notag\\
S_2 &= \{ u\in C: f_{17}u \in f_{12} B^{**} \}\notag\\
S_3 &= \{ u\in C: f_{18}u \in f_{14} A^{**} \}\notag\\
S &= \hat{C} \cap S_1 \cap S_2 \cap S_3.\label{eq:S2}
\end{align}
Then we have the following:
\begin{align}
\Codim{C}{S_1} 
&\le \Codim{W}{f_{10}C^{**}} &\Comment{Lemma~\ref{lem:3}}\notag\\
&= \Dim{W} - \Dim{B^* \cap C^*} &\Comment{\eqref{eq:51}}\notag\\
&= \Codim{Z}{B^* \cap C^*} + \Dim{W} - \Dim{Z} \notag\\
&\le \Codim{Z}{B^*} + \Codim{Z}{C^*} + \Dim{W} - \Dim{Z} &\Comment{Lemma~\ref{lem:2}}\notag\\ 
&= \Codim{B}{\bar{B}} + \Codim{C}{\bar{C}}\notag\\
&\ \ \  + \Dim{W} + \Dim{Z} - \Dim{B} - \Dim{C} &\Comment{\eqref{eq:53},\eqref{eq:54}}\notag\\
&\le \Delta_B + \Delta_C + H(W) + H(Z) - H(B) - H(C) &\Comment{\eqref{eq:132},\eqref{eq:134}}\label{eq:126}\\
\Codim{C}{S_2} 
&\le \Codim{X}{f_{12}B^{**}}  &\Comment{Lemma~\ref{lem:3}}\notag\\
&= \Dim{X} - \Dim{A^* \cap B^*} &\Comment{\eqref{eq:55}}\notag\\
&= \Codim{Z}{A^* \cap B^*} + \Dim{X} - \Dim{Z} \notag\\
&\le \Codim{Z}{A^*} + \Codim{Z}{B^*} + \Dim{X} - \Dim{Z} &\Comment{Lemma~\ref{lem:2}}\notag\\ 
&= \Codim{A}{\bar{A}} + \Codim{B}{\bar{B}}\notag\\
&\ \ \  + \Dim{X} + \Dim{Z} - \Dim{A} - \Dim{B} &\Comment{\eqref{eq:52},\eqref{eq:53}}\notag\\
&\le \Delta_A + \Delta_B + H(X) + H(Z) - H(A) - H(B) &\Comment{\eqref{eq:130},\eqref{eq:132}}\label{eq:126b}\\
\Codim{C}{S_3} 
&\le \Codim{Y}{f_{14}A^{**}} &\Comment{Lemma~\ref{lem:3}}\notag\\
&= \Dim{Y} - \Dim{A^* \cap B^*} &\Comment{\eqref{eq:56}}\notag\\
&= \Codim{Z}{A^* \cap B^*} + \Dim{Y} - \Dim{Z} \notag\\
&\le \Codim{Z}{A^*} + \Codim{Z}{B^*} + \Dim{Y} - \Dim{Z} &\Comment{Lemma~\ref{lem:2}}\notag\\ 
&= \Codim{A}{\bar{A}} + \Codim{B}{\bar{B}}\notag\\
&\ \ \  + \Dim{Y} + \Dim{Z} - \Dim{A} - \Dim{B} &\Comment{\eqref{eq:52},\eqref{eq:53}}\notag\\
&\le \Delta_A + \Delta_B + H(Y) + H(Z) - H(A) - H(B). &\Comment{\eqref{eq:130},\eqref{eq:132}}\label{eq:126c}
\end{align}
Suppose $t\in S$.
Then there exist 
$a\in A^{**}$,
$b\in B^{**}$,
and
$c\in C^{**}$
such that
$f_{14} a = f_{18} t$,
$f_{12} b = f_{17} t$,
and
$f_{10} c = f_{16} t$.
Since $t\in \hat{C}$, we have
from (\eqref{eq:a1})--(\eqref{eq:c1}) that
\begin{align*}
f_1 f_{16} t + f_3 f_{17} t &= 0\\
f_2 f_{16} t + f_5 f_{18} t &= 0\\
f_4 f_{17} t + f_6 f_{18} t &= t
\end{align*}
which gives
\begin{align}
f_1 f_{10} c + f_3 f_{12} b &= 0\label{eq:201}\\
f_2 f_{10} c + f_5 f_{14} a &= 0\label{eq:202}\\
f_4 f_{12} b + f_6 f_{14} a &= t.\label{eq:203}
\end{align}
But we also have
\begin{align}
f_5 f_{14} a + f_8 f_{15} a &= 0 &\Comment{\eqref{eq:137b}} \label{eq:204}\\
f_6 f_{14} a + f_9 f_{15} a &= 0 &\Comment{\eqref{eq:137c}} \label{eq:205}\\
f_3 f_{12} b + f_7 f_{13} b &= 0 &\Comment{\eqref{eq:136bb}}\label{eq:206}\\
f_4 f_{12} b + f_9 f_{13} b &= 0 &\Comment{\eqref{eq:136c}} \label{eq:207}\\
f_1 f_{10} c + f_7 f_{11} c &= 0 &\Comment{\eqref{eq:136d}} \label{eq:208}\\
f_2 f_{10} c + f_8 f_{11} c &= 0 &\Comment{\eqref{eq:136e}} \label{eq:209}
\end{align}
so
\begin{align}
f_7 f_{11} c + f_7 f_{13} b &= 0   &\Comment{\eqref{eq:201},\eqref{eq:206},\eqref{eq:208}}\label{eq:58}\\
f_8 f_{11} c + f_8 f_{15} a &= 0   &\Comment{\eqref{eq:202},\eqref{eq:204}\eqref{eq:209}}\label{eq:57}\\
f_9 f_{13} b + f_9 f_{15} a &= -t. &\Comment{\eqref{eq:203},\eqref{eq:205},\eqref{eq:207}}
\end{align}
Since $f_{11}c$ and $f_{15}a$ are both in $B^*$,
and $f_8$ is injective on $B^*$,
we get from \eqref{eq:57} that
$f_{11} c = -f_{15} a$.
This implies that $f_{11} c$ is also in $A^*$,
and since $f_{13}b \in A^*$ and 
$f_7$ is injective on $A^*$,
we get from \eqref{eq:58} that 
$f_{11} c = - f_{13}b$ and hence
$f_{15} a = f_{13} b$.

Hence, since the field has characteristic $2$,
we have
\begin{align*}
t &= -( f_9 f_{13} b + f_9 f_{15} a)\\
  &= -( f_9 f_{13} b + f_9 f_{13} b)\\
  &= 0.
\end{align*}
Since the choice of $t$ was arbitrary, this implies $S=\{0\}$,
and therefore
\begin{align*}
H(C) &= \Codim{C}{S}\\
&\le \Codim{C}{\hat{C}} + \sum_{i=1}^3 \Codim{C}{S_i}&\Comment{\eqref{eq:S2}, Lemma~\ref{lem:2}}\\
&\le \hat{\Delta}_C + 2 \Delta_A + 3 \Delta_B + \Delta_C\\
&\ \ \  + H(W) + H(X) + H(Y) + 3H(Z)\\
&\ \ \  - 2H(A) - 3H(B) - H(C) &\Comment{\eqref{eq:134b},\eqref{eq:126},\eqref{eq:126b},\eqref{eq:126c}}.
\end{align*}
The result then follows from
\eqref{eq:131},
\eqref{eq:133},
\eqref{eq:135}.
and
\eqref{eq:135b}.
\end{proof}

In the context of the non-Fano network, all of the compound terms at
the end of inequality~\eqref{evenLRI} are zero, so this inequality
directly implies inequality~\eqref{eq:23b}.

\begin{theorem}
The linear rank inequality in Theorem~\ref{thm:LinearRankInequality2}
holds for any scalar field if $\Dim{V} \le 2$,
but may not hold if the scalar field has odd characteristic
and $\Dim{V} \ge 3$.

\end{theorem}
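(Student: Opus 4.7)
The plan is to establish the two parts separately, mirroring the approach used for Theorem~\ref{thm:LinearRankInequality1}: exhibit an explicit counterexample in odd characteristic with $\Dim{V} \ge 3$, and then verify the inequality in all characteristics when $\Dim{V} \le 2$.

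For the counterexample, the natural candidate is the non-Fano matroid's representation in odd characteristic. Take $V = GF(3)^3$ (or any field of odd characteristic) and set
$A = \Span{(1,0,0)}$, $B = \Span{(0,1,0)}$, $C = \Span{(0,0,1)}$, $W = \Span{(1,1,0)}$, $X = \Span{(1,0,1)}$, $Y = \Span{(0,1,1)}$, $Z = \Span{(1,1,1)}$. These are the same vectors that realized the Fano counterexample for Theorem~\ref{thm:LinearRankInequality1}, but in odd characteristic $W + X + Y = 2Z \ne 0$, so $\{W,X,Y\}$ is linearly independent and in fact spans $V$. Every conditional entropy on the right-hand side of~\eqref{evenLRI} then vanishes by inspection (for instance $Z - Y = A$ gives $H(A|Y,Z) = 0$, and similarly for the others), and $H(A) + H(B) + H(C) - H(A,B,C) = 0$. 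The right-hand side collapses to $H(W) + H(X) + H(Y) + 3H(Z) = 6$, while the left-hand side equals $2 + 3 + 2 = 7$, yielding the required violation.

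For the case $\Dim{V} \le 2$, I would follow the two-step reduction used in Theorem~\ref{thm:LinearRankInequality1}: invoke a Shannon-inequality checker such as \texttt{Xitip}~\cite{Xitip} to confirm that~\eqref{evenLRI} becomes a valid Shannon consequence under each of the auxiliary hypotheses $H(A) = 0$, $H(B|A) = 0$, and $H(C|A,B) = 0$, and then observe that simultaneous failure of all three forces the chain $\Dim{V} \ge H(A,B,C) > H(A,B) > H(A) > 0$ and hence $\Dim{V} \ge 3$. A purely manual enumeration of the possible tuples $Q = (H(A), H(B), H(C), H(A,B,C))$ with entries in $\{0,1,2\}$, together with elementary Shannon-type bounds on each conditional entropy on the right-hand side, is also feasible and would parallel the exhaustive argument given at the end of the proof of Theorem~\ref{thm:LinearRankInequality1}.

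The main obstacle is this case analysis for $\Dim{V} \le 2$: a by-hand enumeration involves a substantial number of subcases, so delegating the bulk of the verification to a Shannon checker is the cleanest route. The counterexample itself is immediate once one recognizes that~\eqref{evenLRI} was engineered so that its compound correction terms all vanish on the non-Fano matroid's natural representation, and that the non-Fano matroid is representable precisely over fields of odd characteristic --- making the violation at $\Dim V = 3$ essentially forced by design.
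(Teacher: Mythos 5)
Your proposal is correct and follows essentially the same route as the paper: same seven one-dimensional coordinate-and-sum subspaces of a rank-$3$ space over an odd-characteristic field for the counterexample, and the same reduction for the $\Dim{V}\le 2$ case via the three hypotheses $H(A)=0$, $H(B|A)=0$, $H(C|A,B)=0$ checked by a Shannon-inequality prover, with the observation that simultaneous failure forces $\Dim V\ge 3$. You go slightly further than the paper's ``easily verified'' by actually computing the collapse to $7 \le 6$; that detail is a useful addition. One small logical gap worth tightening: the step ``$W+X+Y = 2Z\ne 0$, so $\{W,X,Y\}$ is linearly independent'' does not follow in general --- the nonvanishing of a single linear combination does not establish independence. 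The clean justification is that $\det\begin{pmatrix}1&1&0\\1&0&1\\0&1&1\end{pmatrix} = -2$, which is nonzero exactly when the characteristic is odd; that is what makes $H(C|W,X,Y)=0$ (and the sole obstruction in characteristic~$2$). With that repair, the argument is sound and essentially identical to the paper's.
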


\begin{proof}
In $V=GF(p)^3$ for any odd prime $p$, define the following subspaces of $V$:
\begin{align*}
A &= \Span{(1,0,0)}\\
B &= \Span{(0,1,0)}\\
C &= \Span{(0,0,1)}\\
W &= \Span{(1,1,0)}\\
X &= \Span{(1,0,1)}\\
Y &= \Span{(0,1,1)}\\
Z &= \Span{(1,1,1)}
\end{align*}
It is easily verified that the inequality in Theorem~\ref{thm:LinearRankInequality2}
is not satisfied in this case.

To show that the inequality indeed holds if $\Dim{V} \le 2$,
one can again show that the inequality becomes a Shannon inequality
under the assumption that $H(A) = 0$, or under the assumption
$H(B|A) = 0$, or under the assumption $H(C|A,B) = 0$.  If all three
of these assumptions fail, then we must have
\begin{align}
\Dim{V} \ge H(A,B,C) > H(A,B) > H(A) > 0
\end{align}
and hence $\Dim{V} \ge 3$.
Or one can give a case-by-case direct argument.
\end{proof}


\clearpage

\renewcommand{\baselinestretch}{1.0}

\end{document}